\documentclass{article}

\usepackage{amsfonts,amssymb,amsmath,a4wide,paralist}
\usepackage{epsfig,rotating,framed,enumerate,listings}
\usepackage{color}
\usepackage{paralist}
\usepackage{marginnote}


\newcommand{\OPEN}{{\it open}}

\newcommand{\PL}{{\it plt}}
\newcommand{\PLS}{{\it plts}}
\newcommand{\FI}{{\it first}}
\newcommand{\LA}{{\it last}}

\newcommand{\FRONT}{{\it front}}
\newcommand{\NN}{{\mathbb N}}

\newcommand{\bigo}{\ensuremath{\mathcal{O}}}

\newcommand{\qed}{\null\hfill $\square$\par\medskip}

\newtheorem{theorem}{Theorem}[section]
\newtheorem{example}[theorem]{Example}

\newtheorem{proposition}[theorem]{Proposition}
\newtheorem{corollary}[theorem]{Corollary}

\newcommand{\dpw} {\text{d-pw}}
\newenvironment{proof}{\noindent{\bf Proof~}}{\null\hfill $\Box$\par\medskip}

\newtheorem{remark}[theorem]{Remark}
\newenvironment{desctight}
  {\begin{list}{}{
\setlength\labelwidth{-5pt}
        \setlength{\itemsep}{0.5pt}
        \setlength{\parsep}{0pt}
        \setlength\itemindent{-\leftmargin}
        
}}
    {\end{list}}

\definecolor{light-gray}{gray}{0.5}
\newcommand{\gr}  {\color{light-gray}}

\newcommand{\dvsns} {\text{d-vsn}}

\newcommand{\IN}{\mathbb{N}}
\newcommand{\fpt} {\mbox{FPT}}
\newcommand{\xp} {\mbox{XP}}
\newcommand{\w} {\mbox{W}}

\pagestyle{plain}

\begin{document}

\title{Integer Programming Models and Parameterized Algorithms for Controlling Palletizers\thanks{Short versions of this paper appeared in Proceedings of the 
{\em International Conference on Modelling, Computation and Optimization in Information Systems and Management Sciences} 
(MCO 2015) \cite{GRW15a}, {\em International Conference on Combinatorial Optimization and Applications} (COCOA 2015)  \cite{GRW15b},
and {\em International Conference on Operations Research} (OR 2013) \cite{GRW14a}, (OR 2014) \cite{GRW16}, (OR 2015) \cite{GRW16a}.}}

\author{Frank Gurski\thanks{University of  D\"usseldorf,
Institute of Computer Science, Algorithmics for Hard Problems Group, 40225 D\"usseldorf, Germany,
{\tt frank.gurski@hhu.de}}
\and 
Jochen Rethmann\thanks{Niederrhein University of Applied Sciences,
Faculty of Electrical Engineering and Computer Science, 47805 Krefeld,
Germany, 
{\tt jochen.rethmann@hs-niederrhein.de}}
\and
Egon Wanke\thanks{University of  D\"usseldorf,
Institute of Computer Science,  Algorithms and Data Structures Group, 40225 D\"usseldorf, Germany,
{\tt e.wanke@hhu.de}
}}

\maketitle

\begin{abstract}
We study the combinatorial {\sc FIFO Stack-Up} problem, where
bins have to be stacked-up from conveyor belts onto pallets.
This is done by palletizers or palletizing robots. Given $k$
sequences of labeled bins and a positive integer $p$,
the goal is to stack-up the bins by iteratively removing the first bin
of one of the $k$ sequences and put it onto a pallet located at one of
$p$ stack-up places. Each of these pallets has to contain bins of only
one label, bins of different labels have to be placed on different pallets.
After all bins of one label have been removed from the given sequences,
the corresponding stack-up place becomes available for a pallet of bins
of another label. All bins have the same size.
The {\sc FIFO Stack-Up} problem asks whether there is some processing of the
sequences of bins such that at most $p$ stack-up places are used.

In this paper we strengthen the hardness of the {\sc FIFO Stack-Up}  
shown in \cite{GRW16b} by considering practical cases and the distribution
of the pallets onto the sequences.
We  introduce a digraph model for this problem, the so called
decision graph, which allows us to 
give a breadth first search solution
of running time $\bigo(n^2 \cdot (m+2)^k)$, 
where $m$ represents the number of 
pallets and $n$ denotes the total number of bins in all sequences. 
Further we apply methods to solve hard problems to the  {\sc FIFO Stack-Up} problem. 
Therefor we consider restricted versions of the problem,
two integer programming models,
exponential time algorithms, 
parameterized algorithms, and
approximation algorithms.
In order to evaluate our algorithms,
we introduce a method to generate random, but realistic instances for
the {\sc FIFO Stack-Up} problem.
Our experimental study of running times shows that the breadth first search solution
on the decision graph combined with a cutting technique can be used to solve practical
instances on several thousands of bins of the {\sc FIFO Stack-Up}  problem. 
Further we analyze our two integer programming approaches implemented in CPLEX 
and GLPK.
As expected CPLEX can solve the instances much faster than GLPK and 
our pallet solution approach is much better than the bin solution approach.

\bigskip
\noindent
{\bf Keywords:} 
combinatorial optimization; stack-up systems; parameterized algorithms;
integer programming; experimental analysis; computational complexity;
directed pathwidh; discrete algorithms
\end{abstract}

\section{Introduction}

This paper is about the combinatorial problem of stacking up bins from
conveyor belts onto pallets. This problem originally appears in
{\em stack-up systems} that play an important role in delivery industry
and warehouses. Stack-up systems are often the back end of
{\em order-picking systems}. A detailed description of the applied background of
such systems is given in \cite{Kos94,RW97}. Logistic experiences over 30
years lead to high flexible conveyor-based stack-up systems in delivery
industry. We do not intend to modify the architecture of
existing systems, but try to develop efficient algorithms to control them. 

The bins that have to be stacked-up onto pallets reach the
stack-up system on a main conveyor belt. At the end of the conveyor belt the bins
enter the palletizing system. Here
the bins are picked-up by stacker cranes or robotic arms and moved
onto pallets, which are located at {\em stack-up places}. 
Often vacuum grippers are used to pick-up the bins. This picking process can
be performed in different ways depending on the architecture
of the palletizing system (single-line or multi-line palletizers).
 Full pallets are carried away by automated guided vehicles, or
by another conveyor system, while new empty pallets are placed
at free stack-up places. 

The developers and producers of robotic palletizers distinguish between single-line
and multi-line palletizing systems. Each of these systems has its advantages and 
disadvantages.

In {\em single-line palletizing systems} there is only one conveyor belt from
which the bins are picked-up. Several robotic arms or stacker cranes are
placed around the end of the conveyor. We model such systems by a random
access storage which is automatically replenished with bins from the main conveyor, 
see Figure \ref{F10a}. 
The area from which the bins can be picked-up is called the storage area. 
The storage area is the last part of the conveyor belt where the stacker
cranes or robotic arms reach the bins.

\begin{figure}[h]
\centerline{\epsfxsize=90mm \epsfbox{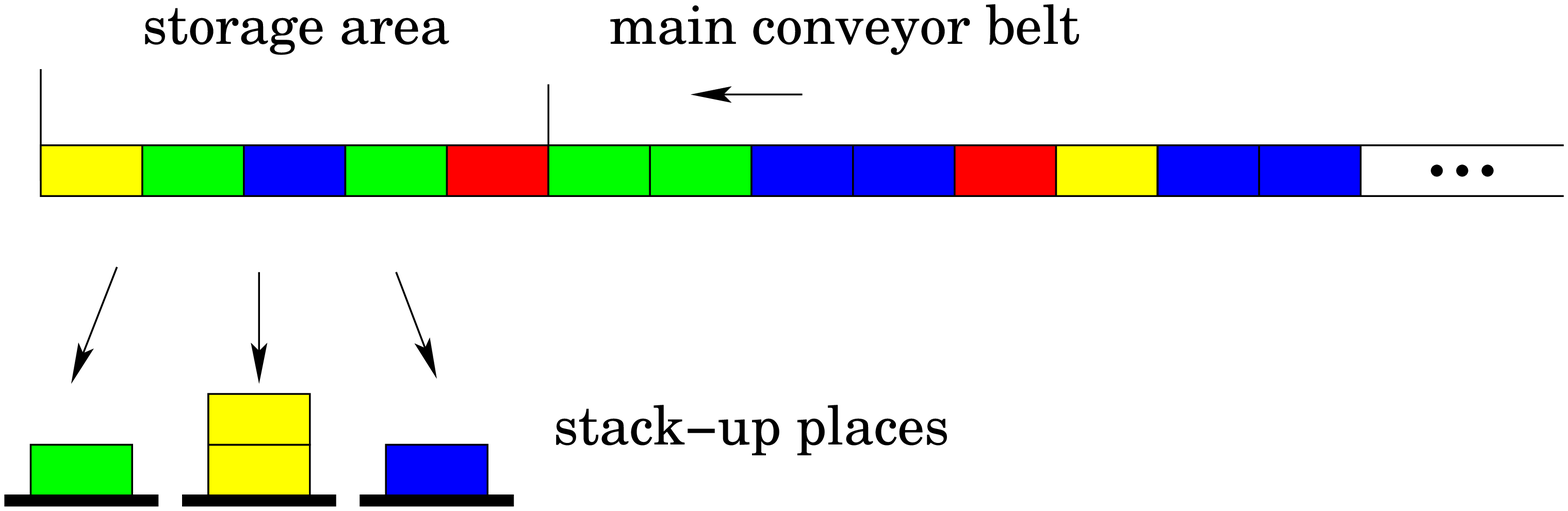}}
\caption{The single-line stack-up system using a random access
storage of storage capacity 5. The colors represent the pallet labels. 
Bins with different colors have to be placed on different pallets, bins 
with the same color have to be placed on the same pallet.}
\label{F10a}
\end{figure}

In {\em multi-line palletizing systems} there are several buffer conveyors from which
the bins are picked-up. The robotic arms or stacker cranes are placed at
the end of these conveyors. Here, the bins from the main conveyor of the order-picking
system first have to be distributed to the multiple infeed lines to enable
parallel processing. Such a distribution can be done by some cyclic storage conveyor,
see Figure \ref{F00}. 
From the cyclic storage conveyor the bins are pushed out to the buffer conveyors.
A stack-up system using a cyclic storage conveyor is, for example, located at
Bertelsmann Distribution GmbH in G\"utersloh, Germany. On certain days,
several thousands of bins are stacked-up using a cyclic storage conveyor with
a capacity of approximately 60 bins and 24 stack-up places, while up to
32 bins are destined for a pallet.
This palletizing system has originally initiated our research.

\begin{figure}[hbtp]
\centering
\parbox[b]{68mm}{
\centerline{\epsfxsize=64mm \epsfbox{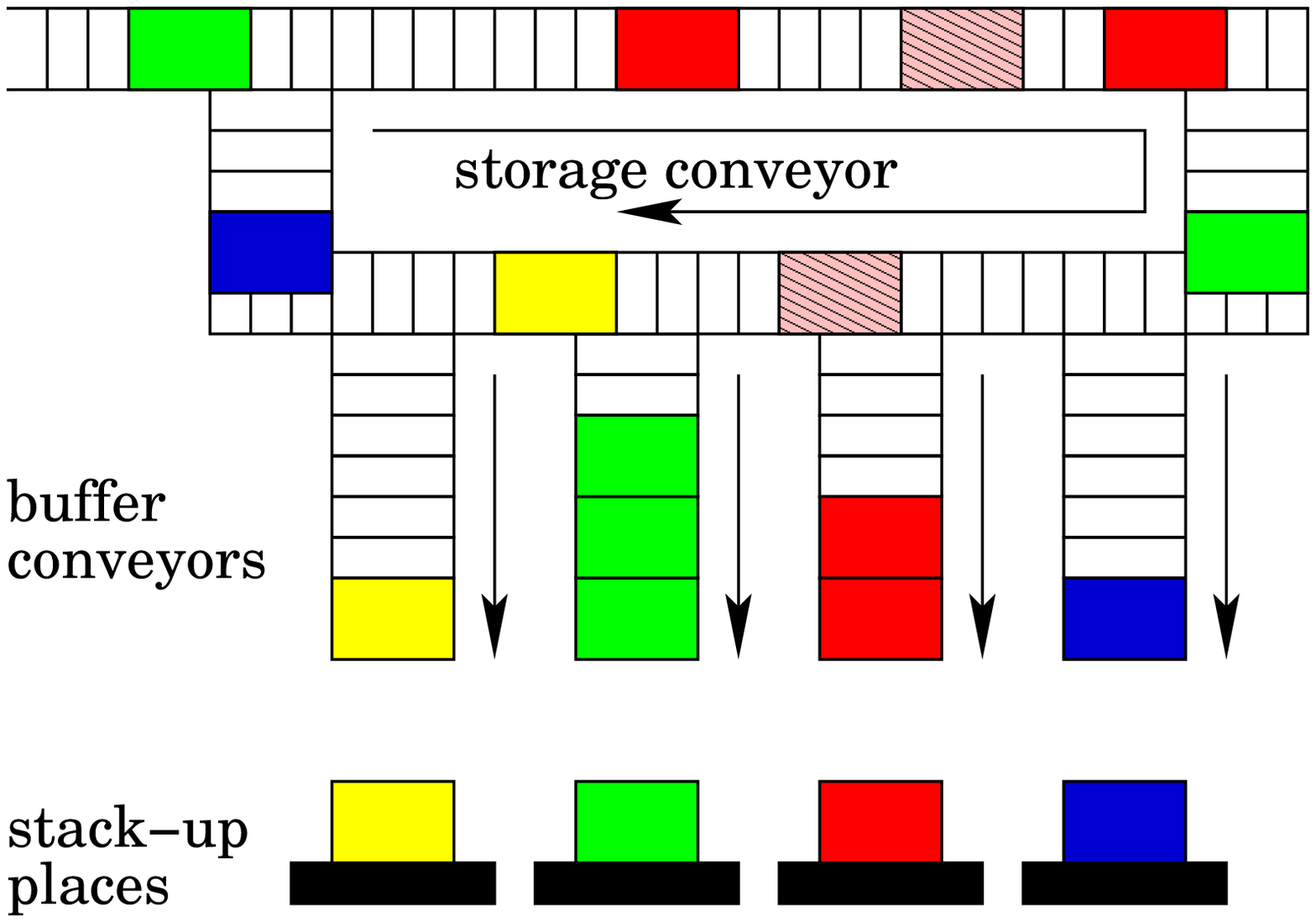}}
\caption{A multi-line stack-up system with a pre-placed cyclic storage conveyor.}
\label{F00}}
\hspace{1.0cm}
\parbox[b]{68mm}{
\centerline{\epsfxsize=64mm \epsfbox{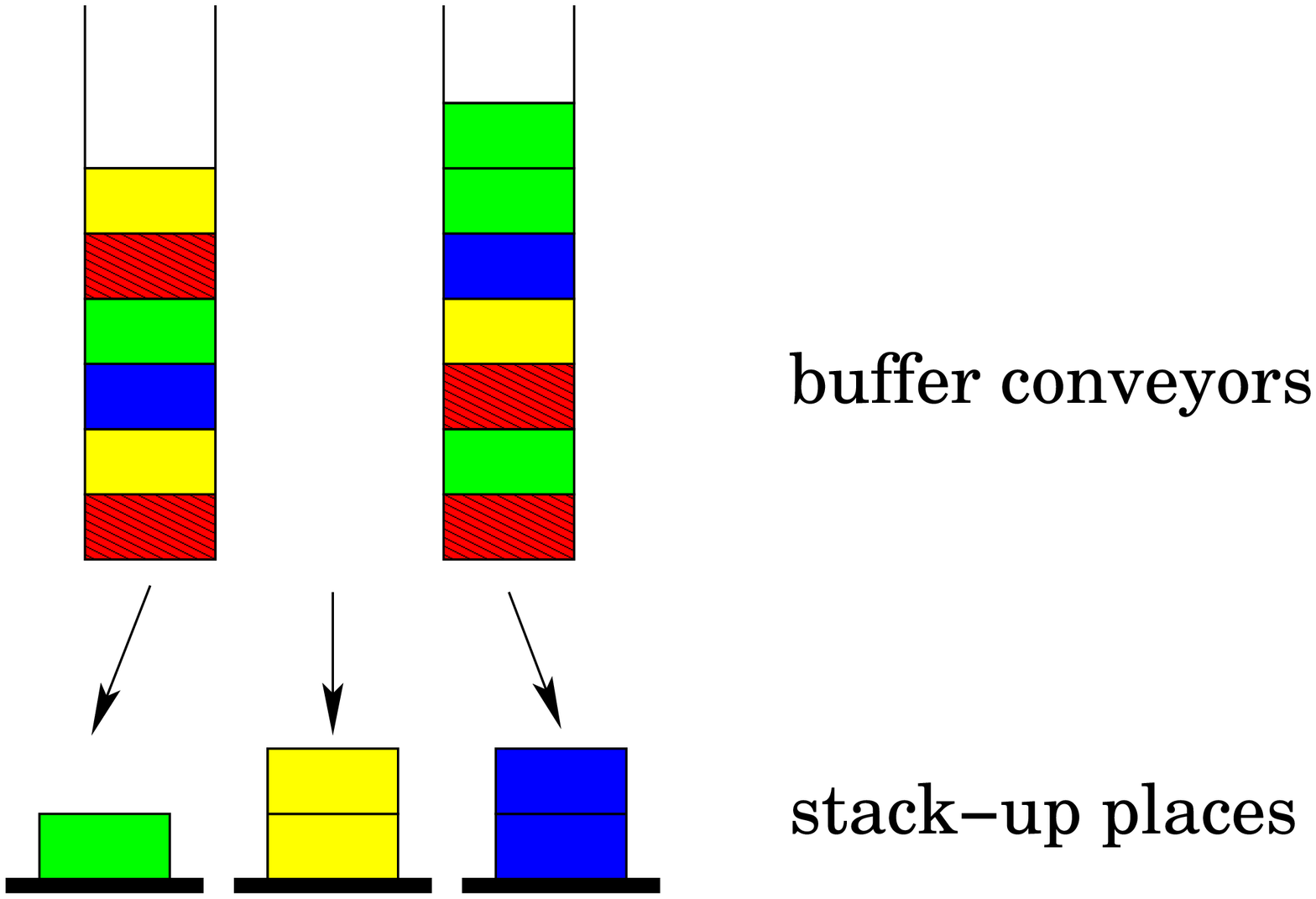}}
\caption{The FIFO stack-up system analyzed in this paper.}
\label{F01}}
\end{figure}


If we ignore the task to distribute the bins from the main
conveyor to the $k$ buffer conveyors, i.e., if the  buffer conveyors are already filled,
and if each arm can only pick-up the first bin of one of the buffer conveyors, as it is
the case if stacker cranes are used,
then the system is called a {\em FIFO palletizing system}.
Such systems can be modeled by several simple queues, see Figure \ \ref{F01}.


From a theoretical point of view, an instance of the {\sc FIFO Stack-Up} problem
consists of $k$ sequences $q_1, \ldots, q_k$ of bins and a number of available
stack-up places $p$. Each bin of $q$ is destined for exactly one pallet.
The {\sc FIFO Stack-Up}  problem is to decide whether one can remove iteratively the bins
from the sequences such that in each step only one of the first bins of $q_1, \ldots, q_k$
is removed and after each step at most $p$ pallets are open.
A pallet $t$ is called {\em open}, if at least
one bin for pallet $t$ has already been removed from the sequences,
and if at least one bin for pallet $t$ is still contained
in the remaining sequences. If a bin $b$ is removed from a sequence
then all bins located behind $b$ are moved-up one position to the front
(cf.~Section \ref{sec-prel} for the formal definition).

Every processing should absolutely avoid blocking situations.\label{page-blocking}
A system is {\em blocked}, if all stack-up places are occupied by
pallets, and none of the bins that may be used in the next step are destined for an open
pallet. To unblock the system,
bins have to be picked-up manually and moved to pallets by human workers.
Such a blocking situation is sketched in Figure \ref{F01}.

The single-line stack-up problem can be defined in the same way. An
instance for the single-line stack-up problem consists of one sequence
$q$ of bins, a storage capacity $s$, and a number of available stack-up places $p$.
In each step one of the first $s$ bins of $q$ can be removed. 
Everything else is defined analogously.

Many facts are known about single-line stack-up systems \cite{RW97,RW00,RW01}.
In \cite{RW97} it is shown that the single-line stack-up decision problem
is NP-complete, but can be solved efficiently if the storage capacity $s$
or the number of available stack-up places $p$ is a fixed constant. The problem remains NP-complete as shown in
\cite{RW00}, even if the sequence contains at most 9 bins per pallet.
In \cite{RW00}, a polynomial-time off-line approximation algorithm
for minimizing the storage capacity $s$ is introduced.
This algorithm yields a solution that is optimal up to a factor bounded by $\log(p)$.
In \cite{RW01} the performances of simple {\em on-line} stack-up
algorithms are compared with optimal off-line solutions by a competitive
analysis \cite{Bor98,FW98}. 

The  {\sc FIFO Stack-Up} problem has  not been studied by
other authors up to now, although stack-up systems play an important
role in delivery industry and warehouses.
In our studies, we neither limit the number of bins for a pallet nor restrict
the number of stack-up places to the number of buffer conveyors.
That is, the number of stack-up
places can be larger than or less than the number of buffer conveyors.
We could show in \cite{GRW16b} that the {\sc FIFO Stack-Up} problem is NP-complete, 
but can be solved in polynomial time
if the number $k$ of sequences or the number $p$ of stack-up places
is fixed. 

%
%
%

This paper is organized as follows.
In Section \ref{sec-prel}, we give preliminaries and the problem statement.
In Section \ref{sec-hard}, we recall the definition of a {\em sequence graph} from \cite{GRW16b}.
This digraph has a vertex for every pallet and an arc from pallet $a$ to pallet $b$,
if and only if in any processing pallet $b$ can only be closed if pallet $a$
has already been opened. We give an algorithm which allows us to compute the sequence graph in 
time $\bigo(n + k\cdot m^2)$, 
where $m$ represents the number of 
pallets and $n$ denotes the total number of bins in all sequences.
%
%
%
Further we show that the hardness of the {\sc FIFO Stack-Up} problem
even holds for the practical case $k<m$ and we
the discuss the influence of the distribution
of the pallets onto the sequences.

In Section \ref{se} we consider two further 
digraph models that allow us to find dynamic programming
solutions for the {\sc FIFO Stack-Up} problem. 
The first digraph is the {\em processing graph}.
It has a vertex for every possible configuration of the system and an arc from configuration $A$
to configuration $B$ if configuration $B$ can be obtained from configuration $A$
by a single processing step. The algorithmic use of the 
processing graph was already mentioned in \cite{GRW16b} and will be
explained more in detail here in order to ease the understanding 
on the second digraph model, which is called the {\em decision graph}.
It has a vertex for every decision configuration of the system, i.e.  
for configurations such that for every sequence the next bin  
is destined for a {\em non-open} pallet and  an arc from decision configuration $A$
to decision configuration $B$ if configuration $B$ can be obtained from configuration $A$
by a processing step including automatic steps.  The decision graph allows us to 
give a breadth first search solution for the {\sc FIFO Stack-Up} problem
of running time $\bigo(n^2 \cdot (m+2)^k)$.

In Section \ref{sec-algo} we apply methods to solve hard problems to the  {\sc FIFO Stack-Up} 
problem. Therefor we consider 
restricted versions of the problem,
exponential time algorithms, and
approximation algorithms.
We also give two integer programming models to solve the  {\sc FIFO Stack-Up} problem.
The first model computes a bin solution, i.e. an order in which 
the bins can be removed, and the second one computes a pallet solution, i.e. an 
order in which the pallets can be opened.
Both models are using a polynomial number of variables 
and a polynomial number
of constraints to compute the minimum number of stack-up places.
Further we study the
fixed-parameter tractability of the problem. The idea behind
fixed-parameter tractability is to split the complexity into
two parts - one part that depends purely on the size of the
input, and one part that depends on some {\em parameter} of the
problem that tends to be small in practice. 
Based on our three digraph models and our two integer programming models 
we give parameterized algorithms
for various parameters which imply efficient solutions for the {\sc FIFO Stack-Up} 
problem restricted to small parameter values.  
 
In Section \ref{sec-exp} we introduce a method to generate random, but realistic 
instances for the {\sc FIFO Stack-Up} problem. 
We generated instances on several thousand bins of the {\sc FIFO Stack-Up} problem
which could be solved by
our breadth first search solution combined with some cutting technique 
on the decision graph in a few minutes.
Further we analyze two integer programming approaches implemented in CPLEX 
and GLPK.
As expected CPLEX can solve the instances much faster than GLPK and 
our pallet solution approach is much better than the bin solution approach.

\section{Problem Statement}\label{sec-prel}

We consider {\em sequences} 
$$q_1=(b_{1}, \ldots, b_{n_1}), \ldots, q_\ell=(b_{n_{\ell-1}+1}, \ldots,b_{n_\ell}), \ldots, q_k=(b_{n_{k-1}+1}, \ldots,b_{n_k})$$ of {\em bins}. All these bins are
pairwise distinct. These sequences represent the buffer queues (handled by the
buffer conveyors) in real stack-up systems.
Each bin $b$ is labeled with a {\em pallet symbol} $\PL(b)$ which
can be some positive integer. We say bin $b$ is
destined for pallet $\PL(b)$.
The labels of the pallets can be chosen arbitrarily, because we only need
to know whether two bins are destined for the same pallet or for different
pallets. The set
of all pallets of the bins in some sequence $q_i$ is denoted by
$$\PLS(q_i)= \{ \PL(b) ~|~ b \in q_i \}.$$
For a list of sequences $Q = (q_1, \ldots, q_k)$ we denote
$$\PLS(Q) = \PLS(q_1) \cup \cdots \cup \PLS(q_k).$$
For some sequence $q = (b_1, \ldots, b_n)$ we say bin $b_i$ is {\it on the
left of} bin $b_j$ in sequence $q$ if $i < j$. And we say that such a bin
$b_i$ is on the {\it position} $i$ in sequence $q$, i.e.\ there are $i-1$
bins on the left of $b$ in sequence $q$.  The position of the first
bin in some sequence $q_i$ destined for some pallet $t$ is denoted by
$\FI(q_i,t)$, similarly the position of the last bin for pallet $t$ in
sequence $q_i$ is denoted by $\LA(q_i,t)$. For technical reasons, if 
there is no bin for pallet $t$
contained in sequence $q_i$, then we define $\FI(q_i,t) = |q_i|+1$, and
$\LA(q_i,t) = 0$.\label{label-page-fi-and-la}

Let $Q = (q_1, \ldots, q_k)$ be a list of sequences, and let
$C_Q = (i_1, \ldots, i_k)$ be some tuple in $\NN^k$. Such a tuple $C_Q$
is called a {\em configuration}, if $0 \leq i_j \leq |q_j|$ for each sequence
$q_j \in Q$.\footnote{An alternative definition of configurations
using subsequences was given in \cite{GRW16b}.} 
Value $i_j$ denotes the number of bins that have been removed from
sequence $q_j$, see Example \ref{EX1}.

A pallet $t$ is called {\em open}\label{def-open} in configuration $C_Q = (i_1, \ldots, i_k)$,
if there is a bin for pallet $t$ at some position less than or equal
to $i_j$ in sequence $q_j$, i.e. $\FI(q_j,t)\leq i_j$,
and if there is another bin for pallet
$t$ at some position greater than $i_\ell$ in sequence $q_\ell$, i.e. $\LA(q_\ell,t)>i_\ell$, 
see Example \ref{EX1}.
In view of the practical background we only consider sequences that contain
at least two bins for each pallet.
The {\em set of open pallets} in configuration $C_Q$ is denoted by $\OPEN(C_Q)$,
the number of open pallets is denoted by $\# \OPEN(C_Q)$.

\begin{remark}\label{remark-open}
Within several algorithms we will need the set of open pallets within some
configuration $C_Q$. 
Therefore we first compute  all the values $\FI(q_i,t)$ and $\LA(q_i,t)$ for $q_i\in Q$ and
$t\in \PLS(Q)$ in time $\bigo(k\cdot \max\{|q_1|, \ldots, |q_k|\}  + k\cdot m)$ respectively 
$\bigo(n + k\cdot m)$. Using these values we can test in time 
$\bigo(k)$ whether some pallet $t$ is open within a given configuration.
By performing this test for each of the $m$ pallets, for some configuration $C_Q$,
we can compute $\# \OPEN(C_Q)$ in time $\bigo(m\cdot k)$. 
\end{remark}

A pallet $t \in \PLS(Q)$ is called {\em closed} in configuration $C_Q$, if 
$\LA(q_j,t) \leq i_j$ for each sequence $q_j \in Q$.
Initially all pallets are {\em unprocessed}.
From the time when the first bin of a pallet $t$ has been removed from a sequence,
pallet $t$ is either open or closed.

For some configuration $(i_1,\ldots,i_k)$ we define
$$\FRONT((i_1,\ldots,i_k))=\{\PL(b) ~|~ 1\leq j\leq k, b \text{ is on position } i_j+1 \text{ in sequence } q_j\}.$$
Informally speaking $\FRONT((i_1,\ldots,i_k))$ is the set of all pallets of the first bins of the remaining sequences 
in configuration  $(i_1,\ldots,i_k)$. (cf.  Example \ref{EX1} and Table \ref{TB1}).

Let $C_Q = (i_1, \ldots, i_k)$ be a configuration. The removal of the bin on
position $i_j+1$ from sequence $q_j$ is called a {\em transformation step}. A
sequence of transformation steps that transforms the list $Q$ of $k$ sequences
from the initial configuration $(0,0, \ldots, 0)$ into the final configuration
$(|q_1|, |q_2|, \ldots, |q_k|)$ is called a {\em processing} of $Q$, see
Example \ref{EX1}.

It is often convenient to use pallet identifications instead of
bin identifications to represent a sequence $q$. For $r$ not
necessarily distinct pallets $t_1, \ldots, t_r$ let $[t_1,\ldots,t_r]$
denote some sequence of $r$ pairwise distinct bins $(b_1, \ldots, b_r)$,
such that $\PL(b_i) = t_i$ for $i=1,\ldots,r$.
We use this notation for lists of sequences as well. Furthermore, for some positive
integer value $n$, let $[n] := \{1,2,\ldots, n\}$  be 
the set of all positive integers between $1$ and $n$.\footnote{We will use square brackets 
in several different notations. Although the meaning becomes clear from the context 
we want to emphasize this fact.}

\begin{example}[Processing]\label{EX1}
Consider the list $Q = (q_1, q_2)$ of two sequences
$$q_1 = (b_{1},\ldots,b_{4}) = [a,b,a,b]$$ 
and
$$q_2 = (b_{5},\ldots,b_{10}) = [c,d,c,d,a,b].$$
Table \ref{TB1} shows a processing of $Q$ with 2 stack-up places.
The underlined bin is always the bin that will be removed in the
next transformation step. The already removed bins are shown greyed out.
\end{example}

\begin{table}[hbtp]
\[\begin{array}{|r|lll|c|l|l||c|}
\hline
i & q_1 & ~ & q_2 & C_Q & \FRONT(C_Q) & \OPEN(C_Q) & \mbox{bin to remove} \\
\hline
 0 & [a,b,a,b] & ~ & [\underline{c},d,c,d,a,b] & (0,0)& \{a,c\} & \emptyset  & b_{5} \\
\hline
 1 & [a,b,a,b] & ~ & [{\gr c,\,} \underline{d},c,d,a,b]  & (0,1) & \{a,d\}& \{c\}  & b_{6}\\
\hline
 2 & [a,b,a,b] & ~ & [{\gr c,d,\,} \underline{c},d,a,b]& (0,2) &\{a,c\}   & \{c,d\}  & b_{7}\\
\hline
 3 & [a,b,a,b] & ~ & [{\gr c,d,c,\,} \underline{d},a,b] & (0,3) & \{a,d\}& \{d\}  & b_{8} \\
\hline
 4 & [\underline{a},b,a,b] & ~ & [{\gr c,d,c,d,\,} a,b]& (0,4)&  \{a\}   & \emptyset & b_{1}\\
\hline
 5 & [{\gr a,\,} b,a,b] & ~ & [{\gr c,d,c,d,\,} \underline{a},b] & (1,4) & \{a,b\}  & \{a\} & b_{9}\\
\hline
 6 & [{\gr a,\,} \underline{b},a,b] & ~ & [{\gr c,d,c,d,a,\,} b] & (1,5)& \{b\}   & \{a\} & b_{2}\\
\hline
 6 & [{\gr a,b,\,} a,b] & ~ & [{\gr c,d,c,d,a,\,} \underline{b}] & (2,5)&\{a,b\}  & \{a,b\} & b_{10}\\
\hline
 7 & [{\gr a,b,\,} \underline{a},b] & ~ & [{\gr c,d,c,d,a,b} ]& (2,6) &\{a\} & \{a,b\}& b_{3}  \\
\hline
 8 & [{\gr a,b,a,\,} \underline{b}] & ~ & [{\gr c,d,c,d,a,b} ] & (3,6)&\{b\}  & \{b\} & b_{4}\\
\hline
 9 & [{\gr a,b,a,b} ] & ~ & [{\gr c,d,c,d,a,b} ]& (4,6) &\emptyset  & \emptyset & - \\
\hline
\end{array}\]
\caption{A processing of $Q = (q_1, q_2)$ from Example \ref{EX1}
with 2 stack-up places. In this simple example it is easy to see that
there is no processing of $Q$ that needs less than 2 stack-up places, because
pallets $a$ and $b$ as well as $c$ and $d$ are interlaced.}
\label{TB1} 
\end{table}

We consider the following problem.

\begin{desctight} 
\item[Name] {\sc FIFO Stack-Up}

\item[Instance] 
A list $Q = (q_1, \ldots, q_k)$ of $k$ sequences of bins, 
for every bin $b$ of $Q$ its pallet symbol $\PL(b)$, and a positive
integer $p$.

\item[Question] 
Is there a processing of $Q$, such that in each configuration during
the processing of $Q$ at most $p$ pallets are open?
\end{desctight}

We use the following variables in the analysis of our algorithms:
$k$ denotes the number of sequences, and
$p$ stands for the number of stack-up places, while
$m$ represents the number of pallets in $\PLS(Q)$, and
$n$ denotes the total number of bins, i.e. $n=n_k$. Finally,
$N =\max\{|q_1|, \ldots, |q_k|\}$ is the maximum sequence length.

For some instance $I$ of the {\sc FIFO Stack-Up} problem numbers are encoded
binary and sequences are encoded by sequences of pallet symbols thus
the size $|I|$ can be bounded by
\begin{equation}
|I| \in \bigo(n \cdot \log_2(m) + \log_2(p)).\label{eq-inst}
\end{equation}

The size of the input is important for the analysis of running times 
in Section \ref{sec-algo}.

\begin{remark}\label{remark-practical-br}
In view of the practical background, it holds $p < m$, otherwise
each pallet could be stacked-up onto a different stack-up place. Furthermore,
$k < m$, otherwise all bins of one pallet could be channeled into
one buffer queue in the multi-line stack-up systems with pre-placed cyclic storage
conveyor, see Figure \ref{F00}.
Finally $m \leq \frac{n}{2} < n$, since there are at least two bins for each pallet.
\end{remark}

The relation $n\leq k \cdot N$ and the assumption $m\leq \frac{n}{2}$ imply the following bound.


\begin{corollary}\label{cor-bd1}
$m\leq \frac{k\cdot N}{2}$, i.e. $m\in \bigo(k\cdot N)$,
\end{corollary}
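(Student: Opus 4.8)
The plan is to chain together the two facts that precede the corollary. First I would recall the relation $n \leq k \cdot N$, which holds because $N = \max\{|q_1|, \ldots, |q_k|\}$ is the maximum sequence length, so each of the $k$ sequences contributes at most $N$ bins, and summing over all sequences gives $n = \sum_{i=1}^{k} |q_i| \leq k \cdot N$. This is immediate from the definitions of $n$ and $N$ and requires no real work.

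Next I would invoke the assumption $m \leq \frac{n}{2}$ from Remark~\ref{remark-practical-br}, which is justified there by the fact that every pallet receives at least two bins, so the number of pallets can be at most half the total number of bins. Combining this with the first relation yields $m \leq \frac{n}{2} \leq \frac{k \cdot N}{2}$, which is exactly the claimed bound, and the asymptotic statement $m \in \bigo(k \cdot N)$ follows directly since $\frac{k \cdot N}{2} \in \bigo(k \cdot N)$.

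There is essentially no obstacle here: the corollary is a one-line consequence of substituting the inequality $n \leq k \cdot N$ into $m \leq \frac{n}{2}$. The only thing worth stating explicitly is why $n \leq k \cdot N$ holds, since the assumption $m \leq \frac{n}{2}$ is already granted by the preceding remark. I would therefore write the proof as a short transitive chain of inequalities, $m \leq \frac{n}{2} \leq \frac{k \cdot N}{2}$, noting the source of each step, and then remark that the $\bigo$-bound is the coarser asymptotic restatement of the same inequality.
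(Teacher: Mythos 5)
Your proof is correct and matches the paper's argument exactly: the paper derives the corollary precisely by combining the relation $n \leq k \cdot N$ with the assumption $m \leq \frac{n}{2}$ from Remark~\ref{remark-practical-br}. Nothing is missing.
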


The following estimation can be shown by induction on $k$.

\begin{corollary}\label{cor-bd2}
$k\cdot N \leq (N+1)^k$
\end{corollary}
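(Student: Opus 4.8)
The plan is to prove the inequality $k\cdot N \leq (N+1)^k$ by induction on $k$, treating $N\geq 1$ as a fixed parameter. This assumption is harmless here: every sequence is nonempty (indeed each contains at least two bins per pallet), so $N=\max\{|q_1|,\ldots,|q_k|\}\geq 1$, and of course $k\geq 1$.

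For the base case $k=1$, the claim reduces to $N\leq N+1$, which is immediate.

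For the inductive step I would assume the hypothesis $k\cdot N \leq (N+1)^k$ and aim to establish $(k+1)\cdot N \leq (N+1)^{k+1}$. First I would factor $(N+1)^{k+1}=(N+1)\cdot (N+1)^k$ and apply the inductive hypothesis to the second factor, obtaining $(N+1)^{k+1}\geq (N+1)\cdot kN$. It then suffices to verify $(N+1)\cdot kN \geq (k+1)N$. Since $N\geq 1$, dividing through by $N$ reduces this to $k(N+1)\geq k+1$, that is $kN\geq 1$, which holds because $k\geq 1$ and $N\geq 1$. Chaining the two inequalities gives $(k+1)N \leq (N+1)^{k+1}$, closing the induction.

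I expect no genuine obstacle in this argument; it is a routine induction. The only point requiring a little care is to keep the standing assumption $N\geq 1$ (equivalently, that the sequences are nonempty) explicit, since it is exactly what makes the final reduction to $kN\geq 1$ valid. Combined with Corollary~\ref{cor-bd1}, this estimation is what lets us later replace factors of $k\cdot N$ by $(N+1)^k$ in the running-time analysis of Section~\ref{sec-algo}.
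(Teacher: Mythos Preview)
Your proof is correct and follows exactly the approach the paper indicates: the paper merely states that the estimation ``can be shown by induction on $k$'' without supplying details, and your induction carries this out cleanly. Your explicit handling of the standing assumption $N\geq 1$ is appropriate and well justified from the context.
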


%
%

For the solution of the {\sc FIFO Stack-Up} problem for some list of sequences $Q$ we  use the 
notation of a bin solution and of a pallet solution, which are defined as follows.
Let $B = (b_{\pi(1)},
\ldots, b_{\pi(n)})$ be the order in which the bins are removed during
a processing of $Q$. Then $B$ is called a {\em bin solution} of $Q$.
In Example \ref{EX1}, we have
\[
B = (b_{5}, b_{6}, b_{7}, b_{8}, b_{1}, b_{9}, b_{2},
        b_{10}, b_{3}, b_{4})
  = [c,d,c,d,a,a,b,b,a,b]
\]
as a bin solution.

Let $T = (t_1, \ldots, t_m)$ be the order in which the pallets are opened
during the processing of $Q$. Then $T$ is called a {\em pallet solution}
of $Q$. In Example \ref{EX1} we have $$T = (c,d,a,b)$$ as a pallet solution.

\section{NP-hardness}\label{sec-hard}

Next we recall the connection
between the used number of stack-up places for a processing of an instance $Q$ 
and the directed pathwidth of the sequence graph $G_Q$
from \cite{GRW16b}, which is useful for
our hardness results, our integer programming model for computing a pallet 
solution, and for several parameterized algorithms.

\subsection{Directed Pathwidth} \label{SCdpw}

According to Bar{\'a}t \cite{Bar06}, the notion of directed pathwidth was
introduced by Reed, Seymour, and Thomas around 1995 and relates to directed
treewidth introduced by Johnson, Robertson, Seymour, and Thomas in
\cite{JRST01}. A directed path-decomposition of a digraph $G=(V,A)$
is a sequence $(X_1, \ldots, X_r)$ of subsets of $V$, called {\em bags}, such 
that the following three conditions hold true.
\begin{enumerate}[(1)]
\item $X_1 \cup \ldots \cup X_r ~=~ V$, 
\item for each $(u,v) \in A$ there is a pair $i \leq j$ such that
  $u \in X_i$ and $v \in X_j$, and 
\item for all $i,j,\ell$ with $1 \leq i < j < \ell \leq r$ it holds
  $X_i \cap X_\ell \subseteq X_j$. 
\end{enumerate}
The {\em width} of a directed path-decomposition ${\cal X}=(X_1, \ldots, X_r)$ 
is $\max_{1 \leq i \leq r} |X_i|-1$. The {\em directed pathwidth} of $G$,
$\dpw(G)$ for short, is 
the smallest integer $w$ such that there is a directed path-de\-com\-po\-sition for 
$G$ of width $w$. For symmetric digraphs, the directed pathwidth is equivalent 
to the undirected pathwidth of the corresponding undirected graph \cite{KKKTT12}, which
implies that determining
whether the pathwidth of some given digraph  is 
at most some given value $w$ is NP-complete. 
For each constant $w$, it is decidable in polynomial time whether a given 
digraph has directed pathwidth at most $w$, see Tamaki \cite{Tam11}.

\subsection{The Sequence Graph} \label{SCgb}

The sequence graph $G_Q = (V,A)$ for an instance $Q=(q_1, \ldots ,q_k)$
is defined by vertex set $V = plts(Q)$ and the following set of arcs.
There is an arc $(u,v) \in A$ if and only if there is a sequence
$q_\ell = (b_{n_{\ell-1}+1}, \ldots, b_{n_\ell})$ with two bins $b_i$, $b_j$ such that
$i < j$,
$\PL(b_i)=u$,
$\PL(b_j)=v$, and
$u \neq v$.


\begin{example}[Sequence Graph] \label{EX6}
Figure \ref{F04} shows the sequence graph $G_Q$ for $Q = (q_1, q_2, q_3)$ 
with sequences $q_1 = [a,a,d,e,d]$, $q_2 = [c,b,b,d]$, and 
$q_3 = [c,c,d,e,d]$.
\end{example}

\begin{figure}[ht]
\centerline{\epsfxsize=40mm \epsfbox{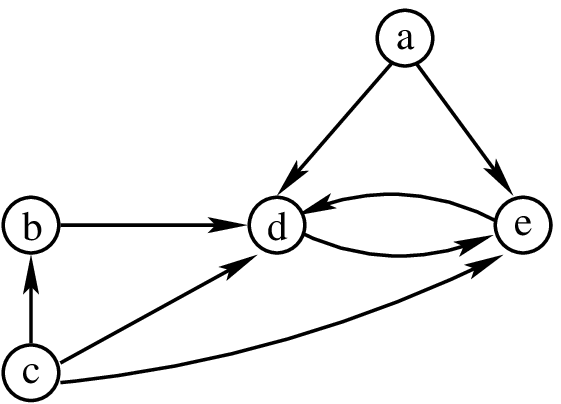}}
\caption{Sequence graph $G_Q$ of Example \ref{EX6}.}
\label{F04}
\end{figure}

If $G_Q = (V,A)$ has an arc $(u,v) \in A$ then $u \neq v$ and for
every processing of $Q$, pallet $u$ is opened before pallet $v$ is
closed. Digraph $G_Q = (V,A)$ can be computed in 
time 
$\bigo(n + k\cdot m^2)$ by the algorithm
{\sc Create Sequence Graph} shown in Figure \ref{fig:algorithm5}.

A value is added to a list only if it is not already contained. To check
this efficiently in time $\bigo(1)$ we have to use a boolean array. 
In our algorithm $V$ and $L$ are implemented by boolean arrays.
Therefore we need some preprocessing phase where we run through each sequence and seek
for the pallets. This can be done in time $\bigo(n+k\cdot m)\subseteq \bigo(n+m^2)$.

\begin{figure}[ht]
\hrule
{\strut\footnotesize \bf Algorithm {\sc Create Sequence Graph}} 
\hrule
\begin{tabbing}
xxxx \= xxxx \= xxxx \= xxxx \= xxxx \= xxxx \= xxxx \=\kill
for each sequence $q \in Q$ do \\
\> $b :=$ first bin of sequence $q$ \\
\> add $\PL(b)$ to vertex set $V$, if it is not already contained \\
\> $L := (\PL(b))$ \>\>\>\>\>\>   $\vartriangleright$ $L$ contains pallets of bins up to bin $b$ \\
\> for $i := 2$ to $|q|$ do \\
\>\> $b := i$-th bin of sequence $q$ \\
\>\> add $\PL(b)$ to vertex set $V$, if it is not already contained \\
\>\> if ($i=\LA(q,\PL(b))$)\\
\>\>\> for each pallet $t \in L$ do \\
\>\>\>\>  if $t \neq \PL(b)$ add arc $(t,\PL(b))$ to arc set $A$, if it is not already contained \\
\>\> if ($i=\FI(q,\PL(b))$)\\
\>\>\>  $append(\PL(b),L)$ 
\end{tabbing}
\hrule
\caption{Create the sequence graph $G=(V,A)$ for some given list of sequences $Q$.}
\label{fig:algorithm5}
\end{figure}

In \cite{GRW16b} we have shown the following correlation 
between the used number of stack-up places for a processing of an instance $Q$ 
and the directed pathwidth of the sequence graph $G_Q$.

\begin{theorem}\label{th-pw-gbefore}
Let $Q = (q_1, \ldots, q_k)$ then digraph $G_Q=(V,A)$ has 
directed pathwidth at most $p-1$ if and only if $Q$ can be processed with at 
most $p$ stack-up places.
\end{theorem}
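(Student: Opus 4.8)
The plan is to establish the two directions of the equivalence by relating a processing of $Q$ to a directed path-decomposition of $G_Q$, with the central bridge being the observation already recorded above: whenever $(u,v)\in A$, in \emph{every} processing pallet $u$ is opened before pallet $v$ is closed. I would take the number of open pallets in a configuration as the quantity that matches bag size, and use the fact that in each transformation step exactly one bin is removed, so the set of open pallets changes only incrementally.

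First I would prove the direction ``processing with $\le p$ places $\Rightarrow$ $\dpw(G_Q)\le p-1$.'' Given a processing, I would read off the sequence of configurations $(0,\ldots,0)=C_0,C_1,\ldots,C_r=(|q_1|,\ldots,|q_k|)$ and define the bag $X_i := \OPEN(C_i)$ (optionally inserting an extra bag at the moment a pallet opens so that each pallet's ``birth'' is witnessed). Condition~(1) holds because every pallet is open in at least one configuration. For condition~(2), if $(u,v)\in A$ then $u$ opens before $v$ closes, so there is a configuration in which both $u$ and $v$ are open, or at least an index $i\le j$ with $u\in X_i$ and $v\in X_j$; this is exactly where the arc semantics of $G_Q$ are used. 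Condition~(3), the interval/consecutiveness property, follows because a pallet $t$, once opened, stays open in a contiguous block of configurations until it is closed, and can never reopen (a pallet passes through \emph{unprocessed}, then \emph{open}, then \emph{closed}, monotonically as the $i_j$ increase). Since $\#\OPEN(C_i)\le p$ throughout, every bag has size $\le p$, giving width $\le p-1$.

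For the converse, ``$\dpw(G_Q)\le p-1$ $\Rightarrow$ processing with $\le p$ places,'' I would take a width-$(p-1)$ directed path-decomposition $(X_1,\ldots,X_r)$ and use the linear order it induces on the pallets (ordering pallets by, say, the first bag in which each appears, breaking ties consistently) as a candidate \emph{pallet solution} $T=(t_1,\ldots,t_m)$. The arcs of $G_Q$ are respected by this order in the sense of condition~(2), so opening the pallets in the order $T$ is consistent with the FIFO constraints: to open the next pallet one greedily removes bins from the fronts of the sequences, performing all forced/automatic removals for already-open pallets first. The key is to argue that at any moment the set of currently-open pallets is contained in a single bag $X_i$, so that at most $p$ pallets are open simultaneously; this uses the interval property~(3) to show that pallets whose ``active windows'' overlap in the processing must co-occur in some bag, hence number at most $p$.

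The main obstacle I expect is the converse direction, specifically turning the path-decomposition back into an actual feasible FIFO processing rather than merely an ordering: I must verify that greedily opening pallets in the decomposition's order never forces more than $p$ pallets open at once and never deadlocks against the FIFO (queue) discipline. The delicate point is that the decomposition only constrains arcs $(u,v)$ that genuinely occur in some sequence, so I need a clean invariant — e.g., ``the set of open pallets at every step is a subset of some bag'' — maintained inductively across transformation steps. I would lean on the monotone open/closed life-cycle of each pallet and on condition~(3) to guarantee that this invariant survives each removal, and I would refer to the corresponding construction in \cite{GRW16b} for the bookkeeping details rather than reproducing them here.
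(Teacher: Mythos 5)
Your forward direction (a processing with at most $p$ places yields a decomposition of width at most $p-1$) is sound and is the natural argument: taking $X_i=\OPEN(C_i)$, condition (1) holds because every pallet has at least two bins and is therefore open in some configuration, condition (2) holds because an arc $(u,v)$ forces the first bin of $u$ to be removed strictly before the last bin of $v$, so the first bag containing $u$ is not later than the last bag containing $v$, and condition (3) holds because each pallet's open phase is one contiguous block of configurations. (Note that this paper does not prove the theorem at all but imports it from \cite{GRW16b}, so there is no in-paper proof to compare against; the comparison below is on the merits.)

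The converse direction has a genuine gap. You assert that ordering the pallets by first occurrence in the bags ``is consistent with the FIFO constraints'' because the arcs are respected in the sense of condition (2). But condition (2) only guarantees, for an arc $(u,v)$, that the first bag containing $u$ is not later than the \emph{last} bag containing $v$; it does not make $u$ appear before $v$ in the first-occurrence order, and even that would not suffice for realizability. Concretely, let $Q$ consist of the single sequence $[v,u,v,u]$; then $G_Q$ is the $2$-cycle on $\{u,v\}$, and $(\{u\},\{u,v\})$ is a valid directed path-decomposition of width $1$ whose first-occurrence order is $(u,v)$ --- yet no processing can open $u$ first, since the only front bin is destined for $v$. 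So the candidate pallet solution can deadlock, and the argument must either repair the order on the fly (open, at each decision configuration, the earliest-ranked pallet actually available at a front, and then re-establish the bound for a set of opened pallets that is no longer a prefix of the order) or show that \emph{some} optimal decomposition induces a realizable order. Your bag-containment invariant is in fact provable \emph{given} realizability: a still-open pallet $u$ has a remaining bin behind a front bin of an unprocessed pallet $v$, the arc $(v,u)$ together with condition (2) pushes the last bag of $u$ to or past the first bag of the pallet being opened, and condition (3) then places all open pallets in that single bag of size at most $p$. But realizability is precisely the step you defer to \cite{GRW16b}, and your stated justification for it is incorrect, so the converse is not established by the proposal as written.
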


This characterization extends the previously known 
areas of applications for directed pathwidth  
in graph databases and boolean networks, which have been shown in \cite{DE14}.

\subsection{Hardness Result}\label{sec-hardness}

Next we will discuss the hardness of the {\sc FIFO Stack-Up}  problem. In contrast to 
Section \ref{SCgb} we transform an instance of a graph problem into 
an instance of the {\sc FIFO Stack-Up}  problem.


Let $G=(V,A)$ be some  digraph and $A=\{a_1,\ldots,a_\ell\}$ its arc set.
The {\em sequence system} $Q_G = (q_1,\ldots,q_\ell)$ for $G$ is defined as follows.
\begin{enumerate}[(1)]
\item There are $2\ell$ bins $b_1,\ldots,b_{2\ell}$.

\item Sequence $q_i=(b_{2i-1},b_{2i})$ for $1\leq i \leq \ell$.

\item The pallet symbol of bin $b_{2i-1}$ is the first vertex of arc $a_i$
and the pallet symbol of $b_{2i}$ is the second vertex of arc $a_i$ for
$1\leq i \leq \ell$. Thus  $\PLS(Q_G) = V$.
\end{enumerate}

\begin{example}[Sequence System]\label{EX3-new}
For the digraph $G$ of Figure \ref{F-ex-d}
the corresponding sequence system is $Q_G=(q_1,q_2,q_3,q_4,q_5,q_6,q_7)$,
where
$$
\begin{array}{lcllcllcllcllcl}
q_1 &= &[a,b], & q_2 &=& [b,c], & q_3 &=& [c,d], &q_4 &=& [d,e], &q_5 &= &[e,a],    \\
q_6 &=& [e,f], & q_7 &=& [f,a].
\end{array}$$
The sequence graph of $Q_G$ is $G$.
\end{example}

\begin{figure}[hbtp]
\centerline{\epsfxsize=45mm \epsfbox{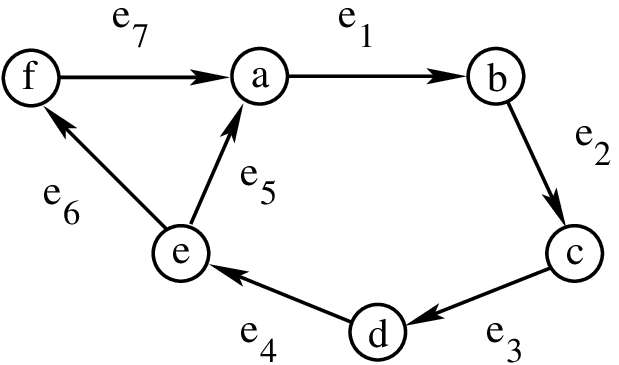}}
\caption{Digraph $G$ of Example \ref{EX3-new}.}
\label{F-ex-d}
\end{figure}

The definition of sequence system $Q_G$ and sequence graph 
$G_Q$, defined in Section \ref{SCgb},  imply the following 
results shown in \cite{GRW16b}.

\begin{proposition}[\cite{GRW16b}]\label{prop}
For every digraph $G$ it holds $G = G_{Q_G}$.
\end{proposition}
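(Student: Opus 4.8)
The plan is to prove the identity $G = G_{Q_G}$ by showing that the two digraphs have identical vertex sets and identical arc sets. Since $G_{Q_G}$ is the sequence graph obtained by applying the construction of Section~\ref{SCgb} to the sequence system $Q_G$ built from $G$, I would unfold both constructions and check they cancel out. For the vertex sets, the definition of $Q_G$ states that $\PLS(Q_G) = V$ (this is item (3) of the sequence system definition), and the sequence graph $G_Q$ by definition has vertex set $\PLS(Q)$. Hence the vertex set of $G_{Q_G}$ is exactly $\PLS(Q_G) = V$, matching $G$.

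For the arc sets, I would argue by double inclusion, using the fact that each sequence $q_i = (b_{2i-1}, b_{2i})$ of $Q_G$ has length exactly $2$. First I would show $A \subseteq A(G_{Q_G})$: take an arc $a_i = (u,v) \in A$ with $u \neq v$; by construction $q_i = [u,v]$, so $q_i$ contains two bins $b_{2i-1}, b_{2i}$ with $\PL(b_{2i-1}) = u$, $\PL(b_{2i}) = v$, their positions satisfy $2i-1 < 2i$, and $u \neq v$. These are precisely the conditions in the sequence-graph definition for $(u,v)$ to be an arc of $G_{Q_G}$. Conversely, for $A(G_{Q_G}) \subseteq A$, suppose $(u,v)$ is an arc of $G_{Q_G}$. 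Then there is some sequence $q_\ell$ of $Q_G$ containing bins $b_i, b_j$ with $i<j$, $\PL(b_i)=u$, $\PL(b_j)=v$, and $u\neq v$. Because every sequence $q_\ell$ has only two bins, this forces $i$ to be the first and $j$ the second bin of $q_\ell$, i.e. $q_\ell = [u,v]$; by the definition of $Q_G$ this sequence was created from the arc $a_\ell = (u,v)$, so $(u,v) \in A$.

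The one technical point I would be careful about is the restriction $u \neq v$ in the sequence-graph arc definition together with the assumption, stated in Section~\ref{sec-prel}, that only sequences with at least two bins per pallet are considered in general. Here the sequences of $Q_G$ have length two and may in principle encode a loop $a_i = (u,u)$; I would note that the sequence graph never produces a loop (the defining condition explicitly requires $u \neq v$), so the identity $G = G_{Q_G}$ is understood for loopless digraphs $G$, which is the natural setting since $G$ arises as a sequence graph and sequence graphs are loopless by construction. Apart from this bookkeeping, the proof is a direct unfolding of the two definitions, and I expect the main (very mild) obstacle to be simply making the length-two observation explicit so that the forward and backward directions line up exactly.
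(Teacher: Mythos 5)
Your proof is correct and is exactly the direct definitional unfolding that the paper (which imports this proposition from \cite{GRW16b} without reproving it) relies on: the vertex sets agree because $\PLS(Q_G)=V$, and the arc sets agree by double inclusion using the fact that every sequence of $Q_G$ has length two and encodes exactly one arc. Your side remark about loops is apt, since the sequence-graph definition forces $u\neq v$, the identity is indeed to be read for loopless digraphs, which is the implicit setting throughout the paper.
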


\begin{theorem}[\cite{GRW16b}]\label{cor-hard}
The {\sc FIFO Stack-Up} problem 
is NP-complete, even if the
sequences of $Q$ contain together at most 6 bins per pallet.
\end{theorem}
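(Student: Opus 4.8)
The plan is to show membership in NP and NP-hardness separately, and to control the number of bins per pallet through the degree of the graph we reduce from. For membership, I would take a bin solution $B=(b_{\pi(1)},\ldots,b_{\pi(n)})$ as the certificate; by the bound \eqref{eq-inst} it has size polynomial in $|I|$. Given $B$, I would replay it step by step, at each of the $n$ transformation steps checking that the removed bin is indeed the current first bin of its sequence and maintaining the current set of open pallets via the precomputed values $\FI$ and $\LA$ (as in Remark \ref{remark-open}). One then verifies that $\#\OPEN$ never exceeds $p$. This verification runs in polynomial time, so the problem lies in NP.

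For hardness, I would reduce from \textsc{Pathwidth} on undirected graphs, exploiting the two facts already available in the excerpt: Proposition \ref{prop} ($G=G_{Q_G}$) and Theorem \ref{th-pw-gbefore} (a processing with at most $p$ stack-up places exists if and only if $\dpw(G_Q)\le p-1$). Concretely, given an undirected graph $H$ and an integer $w$, I would build the symmetric digraph $G$ obtained by replacing every edge $\{u,v\}$ of $H$ by the two opposite arcs $(u,v)$ and $(v,u)$; by the equivalence of directed and undirected pathwidth for symmetric digraphs recalled in Section \ref{SCdpw}, $\dpw(G)=\mathrm{pw}(H)$. I would then form the sequence system $Q_G$ and set $p:=w+1$. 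Chaining the two results gives $Q_G$ processable with at most $p$ places $\iff \dpw(G_{Q_G})\le p-1 \iff \dpw(G)\le w \iff \mathrm{pw}(H)\le w$. Since $G$ and $Q_G$ are constructible in polynomial time, this is a correct polynomial reduction witnessing NP-hardness.

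It remains to verify the bin-count restriction, which is where the exact constant enters. In $Q_G$ a pallet (vertex) $v$ receives exactly one bin for each arc of $G$ incident with $v$, so the number of bins for $v$ equals the total (in- plus out-) degree of $v$ in $G$. As $G$ is the symmetric orientation of $H$, a vertex of degree $d$ in $H$ has total degree $2d$ in $G$; hence if $H$ has maximum degree $3$, every pallet receives at most $2\cdot 3=6$ bins, giving the claimed bound. Moreover every non-isolated vertex receives at least $2$ bins, which matches the standing assumption that each pallet has at least two bins (and isolated vertices, contributing nothing to $\mathrm{pw}(H)$, may be discarded beforehand).

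The part I expect to be the main obstacle is not the algebraic chaining of Proposition \ref{prop} and Theorem \ref{th-pw-gbefore}, which is immediate, but securing the sharp constant $6$: this rests on the fact that \textsc{Pathwidth} remains NP-complete already on graphs of maximum degree $3$. I would therefore start the reduction from this bounded-degree version of \textsc{Pathwidth} rather than from the general problem; starting from a weaker degree bound would still prove NP-completeness but would only yield a larger bound on the number of bins per pallet. Thus the delicate point of the argument is to invoke the strongest available bounded-degree NP-hardness result for pathwidth, the rest being a direct transfer through the sequence-graph correspondence.
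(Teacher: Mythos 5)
Your proposal is correct and follows essentially the same route the paper relies on (via \cite{GRW16b}): membership in NP by verifying a bin solution, and hardness by chaining Proposition \ref{prop} and Theorem \ref{th-pw-gbefore} to reduce from \textsc{Pathwidth} on graphs of maximum degree $3$ through the symmetric digraph and its sequence system, which yields exactly the bound of $2\cdot 3=6$ bins per pallet. The only point to make explicit is the citation for NP-completeness of pathwidth (equivalently, vertex separation) on degree-$3$ graphs, e.g.\ Monien and Sudborough, which you correctly identify as the source of the sharp constant.
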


Next we strengthen the hardness by considering further
restricted versions of the {\sc FIFO Stack-Up} problem.
First we can bound the maximum
sequence length $N$, since by the definition of the sequence system we obtain 
instances where $N=2$ in the proof of Theorem \ref{cor-hard}.

\begin{corollary}\label{cor-hard2}
The {\sc FIFO Stack-Up} problem 
is NP-complete, even if $N$ is bounded by some constant greater than 1.
For $N=1$ the {\sc FIFO Stack-Up} problem  can obviously be solved in polynomial time.
\end{corollary}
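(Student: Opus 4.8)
The plan is to treat the two assertions separately; both follow immediately from machinery already in place, so no genuinely new construction is needed.

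For the NP-completeness claim I would \emph{reuse} the reduction behind Theorem~\ref{cor-hard} rather than build a new one. The point is simply that this reduction already outputs instances of the restricted form. Indeed, for a digraph $G$ with arc set $\{a_1,\ldots,a_\ell\}$ the sequence system $Q_G$ is $(q_1,\ldots,q_\ell)$ with $q_i=(b_{2i-1},b_{2i})$, so every sequence has length exactly $2$ and hence $N=2$. Therefore the family of hard instances produced in the proof of Theorem~\ref{cor-hard} all satisfy $N=2$, which already shows that the problem is NP-hard when $N$ is fixed to $2$. For any constant $c>1$ every such instance also satisfies $N\le c$, so hardness carries over verbatim to ``$N$ bounded by $c$''. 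Membership in NP is inherited from the general problem: a bin solution is a certificate of polynomial size whose validity, together with the at-most-$p$-open-pallets condition, can be checked in polynomial time using the $\FI$/$\LA$ tables of Remark~\ref{remark-open}. This settles NP-completeness for every constant $N>1$.

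For $N=1$ I would argue that the optimum is trivial. When $N=1$ each sequence consists of a single bin, which is therefore always its first bin; consequently a transformation step may remove the unique bin of \emph{any} nonempty sequence, i.e.\ we have complete freedom in the removal order. I would then exhibit a processing that keeps at most one pallet open at a time: fix any ordering $t_1,\ldots,t_m$ of $\PLS(Q)$ and remove, pallet after pallet, all bins labelled $t_1$, then all bins labelled $t_2$, and so on. While the bins of $t_j$ are being removed, only $t_j$ can be open, since every earlier pallet is already closed and no later pallet has yet had a bin removed; hence $\#\OPEN(C_Q)$ never exceeds $1$. Thus one stack-up place always suffices, and since every pallet carries at least two bins (so it does open), at least one place is needed whenever $\PLS(Q)\neq\emptyset$. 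The decision problem therefore collapses to testing whether $p\ge 1$, which is decidable in time linear in $|I|$.

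Neither part carries real technical risk; the only place that warrants care is the $N=1$ analysis, where I must confirm that the ``first bin of a sequence'' restriction does not secretly constrain the processing order. The verification is exactly the observation above: when every sequence has length one, the FIFO restriction is vacuous, which is precisely why the pallet-by-pallet schedule is admissible and why the problem reduces to the trivial comparison $p\ge 1$.
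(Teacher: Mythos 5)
Your proposal is correct and follows the same route as the paper: the paper's justification for the hardness part is precisely the observation that the reduction behind Theorem~\ref{cor-hard} (via the sequence system $Q_G$) produces instances in which every sequence has length exactly $2$, so $N=2$ and hardness carries over to any constant bound greater than $1$. Your pallet-by-pallet argument for the $N=1$ case is a valid filling-in of what the paper dismisses as obvious.
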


In order to consider the distribution of the bins of a pallet $t$
onto the sequences we define
$$d_Q(t)=|\{q\in Q ~|~ t\in \PLS(q)\}|$$
and
$$d_Q=\max_{t\in[m]}d_Q(t).$$

By Theorem \ref{cor-hard} we have shown the following result.

\begin{corollary}\label{cor-hard3}
The {\sc FIFO Stack-Up} problem 
is NP-complete, even if $d_Q=6$.
For $d_Q=1$ the  {\sc FIFO Stack-Up} problem  can be solved in polynomial time,
since we can process all sequences one after the other. 
\end{corollary}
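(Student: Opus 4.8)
The plan is to establish both parts of Corollary~\ref{cor-hard3} separately, since the NP-completeness claim reduces to an already-proved theorem while the polynomial-time claim requires an explicit processing argument.

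For the hardness part ($d_Q=6$), I would simply inspect the instances produced in the proof of Theorem~\ref{cor-hard} and verify that they already satisfy $d_Q=6$. Recall that Theorem~\ref{cor-hard} asserts NP-completeness even when the sequences contain together at most $6$ bins per pallet. The key observation is that in the sequence system $Q_G$ each sequence $q_i=(b_{2i-1},b_{2i})$ has length $2$, so a pallet $t$ contributes at most one bin to each sequence it appears in; hence $d_Q(t)$ counts exactly the number of sequences containing a bin for $t$, and since the total number of bins for pallet $t$ is bounded by $6$, we get $d_Q(t)\leq 6$. First I would make precise that the construction underlying Theorem~\ref{cor-hard} can be taken to realize $d_Q(t)=6$ for the critical pallets (this is where the degree bound in the source digraph enters), so that $d_Q=6$ holds and NP-completeness is inherited directly.

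For the polynomial-time part ($d_Q=1$), the approach is the one indicated in the statement: if $d_Q=1$, then every pallet appears in exactly one sequence, so the sequences partition the pallet set, i.e.\ $\PLS(q_i)\cap\PLS(q_j)=\emptyset$ for $i\neq j$. I would then argue that processing the sequences one after another---emptying $q_1$ completely before starting $q_2$, and so on---opens only the pallets of the current sequence. Because no pallet is shared, the number of simultaneously open pallets at any configuration never exceeds the number of open pallets \emph{within a single sequence}, which can itself be minimized independently. The crucial step is to observe that within one sequence the single-sequence (that is, $k=1$) version is solvable in polynomial time, and that concatenating optimal single-sequence processings yields a global processing whose maximum number of open pallets is the maximum over sequences of the per-sequence optimum. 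I would conclude that the whole problem decomposes into $k$ independent subproblems, each solvable in polynomial time, giving an overall polynomial-time algorithm.

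The main obstacle I anticipate is the hardness direction: one must check that the reduction behind Theorem~\ref{cor-hard} truly produces $d_Q=6$ rather than merely ``at most $6$ bins per pallet,'' since a pallet could in principle have its six bins distributed among fewer than six sequences if the construction allowed repeated vertices in a single arc-sequence. Because each $q_i$ has length exactly $2$ and the two bins of $q_i$ carry the two (distinct) endpoints of arc $a_i$, no sequence contributes two bins to the same pallet, so the count of bins equals the count of sequences and the bound transfers cleanly; verifying this identification is the one place where care is needed.
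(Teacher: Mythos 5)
Your proposal is correct and follows essentially the same route as the paper, which derives the hardness claim directly from the construction behind Theorem~\ref{cor-hard} (where each sequence has length two with distinct pallet symbols, so the number of bins for a pallet equals the number of sequences containing it) and justifies the $d_Q=1$ case by processing the sequences one after the other. Your added care about whether the reduction attains $d_Q=6$ exactly, and your observation that the $k=1$ subproblems are trivially (indeed, forcibly) solvable, are sound elaborations of the same argument.
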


For $d_Q\in\{2,\ldots,5\}$
the complexity of the  {\sc FIFO Stack-Up} problem
remains open.

In Remark \ref{remark-practical-br}  we have restricted to practical instances where
$k < m$. This is not given within the hardness results of \cite{GRW16b}. 

\begin{corollary}\label{remark-practical-br2}
The {\sc FIFO Stack-Up} problem 
is NP-complete, even if $k < m$ and the
sequences of $Q$ contain together at most 6 bins per pallet.
\end{corollary}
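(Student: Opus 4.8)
The plan is to strengthen Theorem~\ref{cor-hard} by modifying the reduction so that the resulting instance additionally satisfies $k<m$, while preserving the property that at most $6$ bins per pallet occur. First I would recall where the constraint $k<m$ can fail: in the sequence system $Q_G$ the number of sequences equals $\ell=|A|$ and the number of pallets equals $m=|V|$, so $k<m$ amounts to $|A|<|V|$. Since the NP-hardness is inherited from a graph problem on digraphs $G$ that typically have more arcs than vertices, the naive reduction violates $k<m$. The idea is therefore to pad the instance with additional sequences \emph{and} additional pallets in a way that drives up $m$ relative to $k$ without changing the answer to the decision problem and without introducing more than $6$ bins for any pallet.

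The key steps, in order, would be the following. First, take the hard instance $Q=(q_1,\ldots,q_k)$ produced by Theorem~\ref{cor-hard}, together with its bound $p$ on stack-up places. Second, augment $Q$ by appending $r$ fresh ``dummy'' sequences, each of the form $[x_i,x_i]$ for a brand-new pallet symbol $x_i\notin\PLS(Q)$, where $i=1,\ldots,r$ and $r$ is chosen large enough that the new pallet count $m'=m+r$ strictly exceeds the new sequence count $k'=k+r$; since $m'>k'$ reduces to $m>k$, this already holds once the original $m$ exceeds $k$, but if not I would instead choose $r$ so that each dummy contributes a pallet without contributing the ``wrong'' side of the inequality — concretely, appending sequences that each introduce \emph{more than one} new pallet, e.g.\ $[x_i,y_i,x_i,y_i]$, adds two pallets per sequence and forces $m'=m+2r>k+r=k'$ for every $r\geq 1$. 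Third, I would argue that these dummy sequences can be processed completely at the very start (or end) of any processing using at most $2$ stack-up places, so they do not affect whether $p$ places suffice, provided $p\geq 2$; one checks that the original value of $p$ in the reduction of Theorem~\ref{cor-hard} is at least $2$, so the threshold is unchanged. Fourth, I would verify that each dummy pallet $x_i,y_i$ receives at most $4$ bins (hence at most $6$), so the ``at most $6$ bins per pallet'' property is preserved.

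The main obstacle I anticipate is the interaction between the padding and the stack-up bound $p$. Adding dummy sequences must neither decrease the minimum number of stack-up places below the intended threshold nor artificially inflate it, so I would need to confirm that a dummy block $[x_i,y_i,x_i,y_i]$ can always be opened and closed while the palletizer is otherwise idle, contributing exactly its own (bounded) demand and nothing more. Formally this amounts to exhibiting a processing of the augmented instance with $p$ places whenever the original has one (process the dummies first, each in isolation), and conversely restricting any $p$-place processing of the augmented instance to the original sequences to recover a valid $p$-place processing of $Q$; the restriction is sound because the dummy pallets share no symbol with $\PLS(Q)$ and therefore never overlap with the original open-pallet set except additively. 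Once this separation is established, correctness of the reduction follows immediately from Theorem~\ref{cor-hard}, and the inequality $k'<m'$ together with the bin bound gives the claim.

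\qed
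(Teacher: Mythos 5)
Your proposal is correct in substance but routes the padding differently from the paper. The paper keeps the number of sequences fixed: it appends the block $[a_i,a_i,b_i,b_i,c_i,c_i]$ to each existing sequence $q_i$, introducing $3k$ fresh pallets with two bins each, so $m'=m+3k>k=k'$ holds automatically, the six-bins-per-pallet bound is preserved, and each padding pallet can be opened and closed in isolation at the end of the processing using a single place, which makes the equivalence with the original threshold $p$ immediate for every $p\geq 1$. You instead add $r$ fresh sequences $[x_i,y_i,x_i,y_i]$, which changes both counts; this works, but two of your side claims need repair. First, $m+2r>k+r$ is equivalent to $r>k-m$, so it does not hold ``for every $r\geq 1$'': in the hard instances behind Theorem~\ref{cor-hard} one has $k=|A|$ and $m=|V|$ with $k$ typically much larger than $m$, so you genuinely need $r$ on the order of $k-m$ (still polynomial, hence harmless, and consistent with your earlier ``choose $r$ large enough''). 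Second, your interlaced dummies force two simultaneously open pallets, hence the caveat $p\geq 2$; this is defensible since the problem is polynomial for fixed $p$, but it disappears entirely if you use non-interlaced dummies such as $[x_i,x_i,y_i,y_i]$ or, as the paper does, append the dummies to existing sequences. Your separation argument --- the dummy pallets share no symbol with $\PLS(Q)$, so restricting a $p$-place processing of the augmented instance to the original bins yields a $p$-place processing of $Q$, and conversely the dummies can be processed in isolation --- is exactly the soundness argument the paper relies on implicitly, so once the two arithmetic points are fixed your reduction establishes the corollary.
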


\begin{proof}
In order to carry over the hardness to $k < m$, we can modify a given list of
sequences $Q=(q_1,\ldots,q_k)$ as follows. We introduce $3k$ new pallets
$a_i,b_i,c_i$ for $1\leq i \leq k$. We define new sequences $q_i$, $1\leq i \leq k$, 
from the old ones  by 
$$q'_i=q_i \circ [a_i,a_i,b_{i},b_{i},c_{i},c_{i}],$$
i.e. by concatenating $q_i$ and $[a_i,a_i,b_{i},b_{i},c_{i},c_{i}]$.
Since list $Q$ can be processed
with at most $p$ stack-up places, if and only if list $Q'=(q'_1,\ldots,q'_k)$ can be processed
with at most $p$ stack-up places and the number of pallets in $Q'$ can be increased
arbitrarily, the {\sc FIFO Stack-Up} problem is also hard for the case $k < m$. \qed
\end{proof}

\section{Dynamic Programming Algorithms to Solve  the  {\sc FIFO Stack-Up} Problem}\label{se}

Our aim in controlling FIFO stack-up systems is to compute a processing
of the given sequences of bins with a minimum number of stack-up places.
Such an optimal processing can always be found by computing the {\it
processing graph} or the {\it decision graph}. The algorithmic use of the 
processing graph was already mentioned in \cite{GRW16b} and will next be
explained in more detail in order to ease the understanding in the subsequent  
section on the decision graph.

\subsection{The Processing Graph}\label{SC2}

The processing graph $G=(V,A)$ contains a vertex for every possible confi\-gu\-ration.
Each vertex $v$ representing some configuration $C_Q(v)$ is labeled by
the number $\#\OPEN(C_Q(v))$.
There is an arc from vertex $u$ representing configuration $(u_1, \ldots, u_k)$
to vertex $v$ representing configuration $(v_1, \ldots, v_k)$
if and only if $u_i = v_i - 1$ for exactly one element of the vector
and for all other elements of the vector $u_j = v_j$.
The arc is labeled with the bin that will be removed
in the corresponding transformation step.

\begin{example}[Processing Graph]\label{EX1-pg}
For the sequences of Example \ref{EX1}
we get the processing graph of Figure \ref{F07}. 
\end{example}

\begin{figure}[htbp]
\centerline{\epsfxsize=.75\textwidth \epsfbox{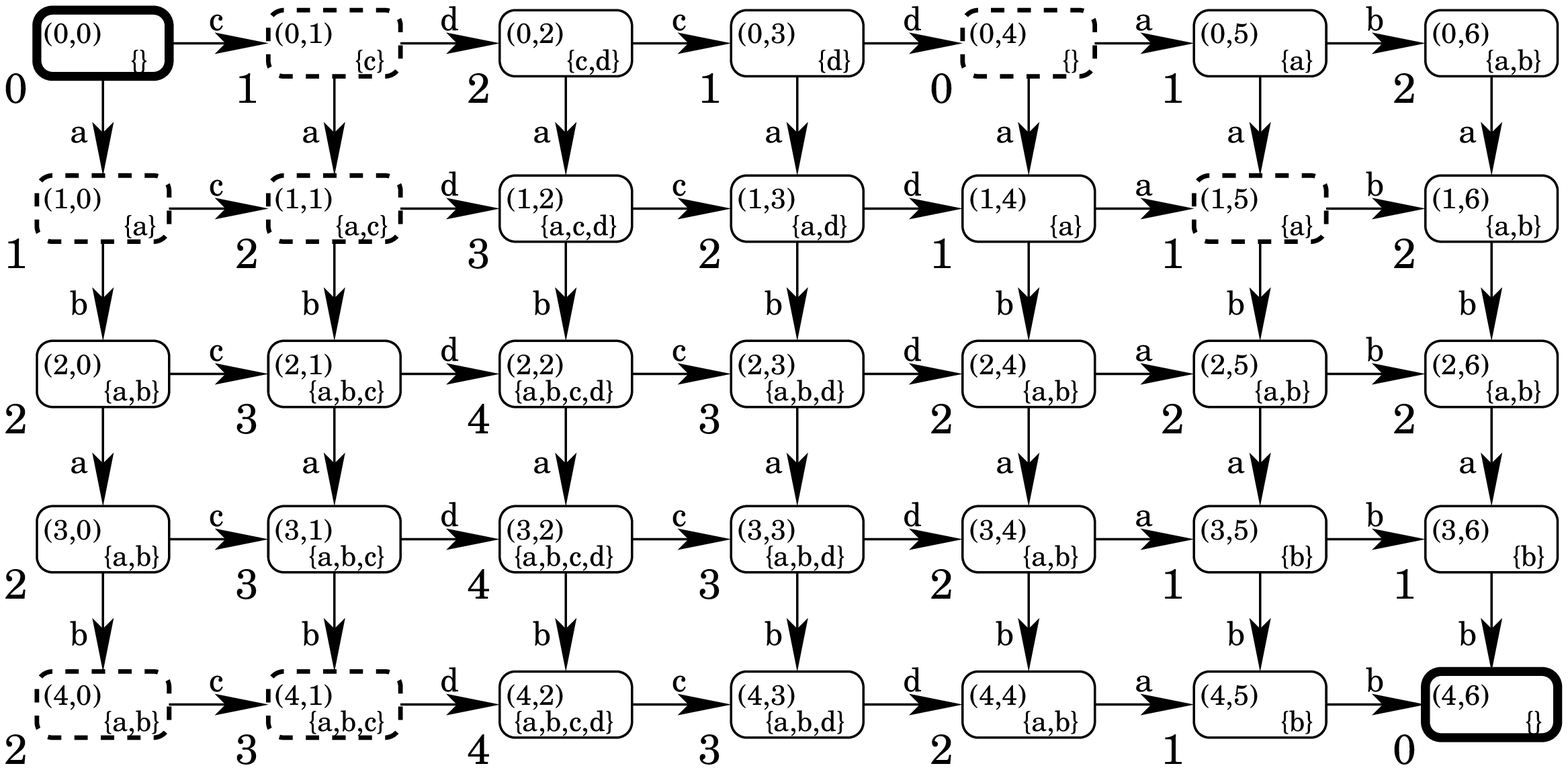}}
\caption{The processing graph of Example \ref{EX1}. Instead of the bin each arc
is labeled with the pallet symbol of the bin that will be removed in the
corresponding transformation step. The shaded vertices will be important 
in the following section. The upper left vertex represents the initial configuration
and the lower right vertex represents the final configuration.}
\label{F07}
\end{figure}

Obviously, every processing 
graph is directed and acyclic.
Every  bin solution describes a path from
the initial configuration $(0,0, \ldots, 0)$ to the final configuration
$(|q_1|, |q_2|, \ldots, |q_k|)$ in the processing graph.
We are interested in such paths where the maximal vertex label on
that path is minimal. \label{label-path-min}

The processing graph can be computed in time $\bigo(k \cdot (N+1)^k)$ by
some breadth first search algorithm as follows. We store the already discovered
configurations, i.e.\ the vertices of the graph, in some list $L$.
Initially, list $L$ contains only the initial configuration.
In each step of the algorithm we take the first configuration out
of list $L$. Let $C_Q=(i_1, \ldots, i_k)$ be such a configuration.
For each $j \in [k]$ we remove the bin on position $i_j+1$
from sequence $q_j$, and get another configuration $C'_Q$.
We append $C'_Q$ to list $L$, add it to vertex set $V$, if it is not already contained, 
and add $(C_Q, C'_Q)$ to arc set $A$.

For each configuration we want to store the number of open pallets of that
configuration. This can be done efficiently in the following way.
First, since none of the bins has been removed from any sequence in the
initial configuration, we have $\# \OPEN((0, \ldots, 0)) = 0$.  In each
transformation step we remove exactly one bin for some pallet $t$ from
some sequence $q_j$, thus
\begin{eqnarray}
  \# \OPEN((i_1, \ldots, i_{j-1}, i_j+1, i_{j+1}, \ldots, i_k)) & = &
   \# \OPEN((i_1, \ldots, i_{j-1}, i_j, i_{j+1}, \ldots, i_k)) ~+~ c_{j}
  \label{transStep}
\end{eqnarray}
where $c_{j} = 1$ if pallet $t$ has been opened in the
transformation step, and $c_{j} = -1$ if pallet $t$ has been closed
in the transformation step. Otherwise, $c_{j}$ is zero. If we
put this into a formula we get
\[
   c_{j} = \left\{
     \begin{array}{rll}
       1, & \multicolumn{2}{l}{\mbox{if } \FI(q_j,t) = i_j+1 \mbox{ and } \FI(q_\ell,t) > i_\ell
              ~~ \forall ~ \ell \neq j} \\
      -1, & \multicolumn{2}{l}{\mbox{if } \LA(q_j,t) = i_j+1 \mbox{ and } \LA(q_\ell,t) \leq i_\ell
              ~~ \forall ~ \ell \neq j} \\
       0, & \multicolumn{2}{l}{\mbox{otherwise.}}
     \end{array}
   \right.
\]
Please remember our technical definition of $\FI(q,t)$ and $\LA(q,t)$
from page \pageref{label-page-fi-and-la}
for the case that $t\not\in \PLS(q)$.

That means, the calculation of value $\# \OPEN(C_Q(v))$ for the
vertex $v$ representing configuration $C_Q(v)$ can be done in
time $\bigo(k)$ if the values $\FI(q_j,t)$ and $\LA(q_j,t)$ have
already been calculated in some preprocessing phase. Such a
preprocessing can be done in time $\bigo(k \cdot N+k\cdot m)$ due to
Remark \ref{remark-open}, which can
be bounded by $\bigo(k\cdot (N+1)^k)$ due to Corollary \ref{cor-bd1} and 
Corollary \ref{cor-bd2}. The number of vertices
is in $\bigo((N+1)^k)$, so the vertex labels can be computed in time
$\bigo(k \cdot (N+1)^k)$. Since at most $k$ arcs leave each vertex, the number
of arcs is in $\bigo(k \cdot (N+1)^k)$, and each arc can be computed
in time $\bigo(1)$.  Thus, the computing can be done in total time
$\bigo(k \cdot (N+1)^k)$.

\medskip
Let $s$ be the vertex representing the initial configuration, and
let $topol: V \to \NN$ be a topological ordering of the vertices of
the processing graph
$G=(V,A)$ such that $topol(u) < topol(v)$ holds for each $(u,v) \in A$.
For some vertex $v \in V$ and some path $P=(v_1, \ldots, v_\ell)$ with
$v_1 = s$, $v_\ell = v$ and $(v_i, v_{i+1}) \in A$ we define
\[
   val_P(v) := \max_{u \in P} (\# \OPEN(C_Q(u)))
\]
to be the maximum vertex label on that path. Let ${\mathcal P}(v)$ denote
the set of all paths from vertex $s$ to vertex $v$. Then we define
\[
   val(v) := \min_{P \in {\mathcal P}(v)} (val_P(v)).
\]
The problem is to compute the value $val(t)$ where $t$ is the vertex
representing the final configuration. It holds
\begin{eqnarray}
    val(v) = \max\{ \# \OPEN(C_Q(v)), \min_{(u,v) \in A}(val(u)) \},\label{eq2}
\end{eqnarray}
because each path $P \in {\mathcal P}(v)$ must go through some vertex $u$
with $(u,v) \in A$. So we may iterate over the preceding vertices of $v$
instead of iterating over all paths. If $\# \OPEN(C_Q(u)) < \# \OPEN(C_Q(v))$
for all preceding configurations then a pallet must have been opened in the last step to
reach configuration $C_Q(v)$.

\begin{figure}[ht]
\hrule
{\strut\footnotesize \bf Algorithm {\sc Find Path}} 
\hrule
\begin{tabbing}
xx \= xx \= xx \= xx\= xx\= xx \= xx \= xx\= xx \= xx \= xx \= xx \=\kill
$val[s] := 0$  \>\>\>\> \>\> \> \>\> \>\> \>$\vartriangleright$ Computation according to Equation (\ref{eq2})\\
for each vertex $v \neq s$ in order of $topol$ do \\
\> $val[v] := \infty$ \\
\> for every $(u,v) \in A$ do  \>\>\> \>\> \>\>\> \>\> \>$\vartriangleright$ Compute $\min_{(u,v) \in A}(val(u))=:val(v)$  \\
\>\> if ($val[u] < val[v]$) \\
\>\>\> $val[v] := val[u]$ \\
\>\>\> $path[v] := u$ \\
\> if ($val[v] < \# \OPEN[v]$) \>\>\> \>\> \> \>\> \>\> \>$\vartriangleright$ Compute  $\max\{ \# \OPEN(C_Q(v)), val(v)\}=:val(v)$\\
\>\> $val[v] := \# \OPEN[v]$
\end{tabbing}
\hrule
\caption{Finding an optimal processing by dynamic programming.}
\label{fig:algorithm2}
\end{figure}

The value $val(v_\ell)$ can be computed  by Algorithm {\sc Find Path}
given in Figure \ref{fig:algorithm2}. 
The corresponding path 
\begin{eqnarray}
P=(v_1, \ldots, v_\ell)\label{interest-path}
\end{eqnarray}
 is obtained by  $path[v_\ell]=v_{\ell-1}$, 
$path[v_{\ell-1}]=v_{\ell-2}$, $\ldots$,  $path[v_{2}]=v_{1}$. \label{label-path-min2}
For the running time
we observe that
a topological ordering of the vertices
of digraph $G$ can be found by a depth first search algorithm in time
$\bigo(|V| + |A|)$. The remaining work of algorithm {\sc Find Path} can also
be done in time $\bigo(|V| + |A|)$. In the processing graph we have $|V| \in
\bigo((N+1)^k)$, and $|A| \in \bigo(k \cdot (N+1)^k)$.

\medskip
It is not necessary to explicitly
build the processing graph to compute an optimal processing, we have done it 
just for the sake of clarity and to enhance understanding.
We combine the construction of the processing graph with the topological
sorting and the path finding by some breadth first search algorithm
{\sc Optimal Bin Solution} shown in Figure \ref{fig:algorithm3}.
Algorithm {\sc Optimal Bin Solution} uses
the following two operations.
\begin{itemize}
\item $head(L)$ yields the first entry of list $L$ and removes it from $L$.

\item $append(e,L)$ adds  element $e$ to the list $L$, if $e$ is not already contained in $L$.
\end{itemize}

\begin{figure}[ht]
\hrule
{\strut\footnotesize \bf Algorithm {\sc Optimal Bin Solution}} 
\hrule
\begin{tabbing}
xxx\=xxx\=xxx\=xxx \= xxx \= xxx \= xxx \= xxx\= xxx \= xxx \= xxx \= xxx\= xxx\=\kill
\#open$[~(0,\ldots,0)~]:=0$ \\
$val[~(0,\ldots,0)~] := 0$ \\
$L := (~(0,\ldots,0)~)$ \>\>\>\>\>\>\>\>\>\>\>\>\>$\vartriangleright$ List of uninvestigated configurations\\
$pred[~(0,\ldots,0)~] := \emptyset$ \>\>\>\>\>\>\>\>\>\>\>\>\>$\vartriangleright$ List of predecessors of some configuration \\
while $L$ is not empty do \\
\> $C := head(L)$\>\>\>\>\>\>\>\>\>\>\>\>$\vartriangleright$ let $C=(i_1,\ldots,i_k)$ \\
\> if ($pred[C]\neq \emptyset$)    \>\>\>\>\>\>\>\>\>\>\>\>$\vartriangleright$ all predecessors of $C$  are computed \\
\>\> {\sc Extend Path}($C$) \\
\> for $j := 1$ to $k$ do \\
\>\> $C_s := (i_1, \ldots, i_j+1, \ldots, i_k)$ \\
\>\> if ($C_s$ is not in $L$) \\
\>\>\> compute $\# \OPEN[C_s]$ according to Equation (\ref{transStep}) \\
\>\>\> $append(C_s,L)$ \\
\>\> $append(C,pred[C_s])$
\end{tabbing}
\hrule
\caption{Construction of the processing graph and computation of an optimal bin
 solution at once by breadth first search.}
\label{fig:algorithm3}
\end{figure}

\begin{figure}[ht]
\hrule
{\strut\footnotesize \bf Algorithm {\sc Extend Path}($C$)} 
\hrule
\begin{tabbing}
xxx\=xxx\=xxx\=xxx \= xxx \= xxx \= xxx \= xxx\= xxx \= xxx \= xxx \= xxx\= xxx\=\kill
$val[C] := \infty$   \\
for each $C_p$ in list $pred[C]$ do  \>\>\>\>\>\>\>\>\>\>\> \>\>$\vartriangleright$ Compute $val[C]$ due to Equation (\ref{eq2})\\
\> if ($val[C]  > val[C_p]$) \\
\>\> $val[C]  := val[C_p]$ \\
\>\> $path[C] := C_p$ \\
if ($val[C]  < \# \OPEN[C]$) \\
\>$val[C]  := \# \OPEN[C]$ 
\end{tabbing}
\hrule
\caption{Submethod to extend a path by one vertex $C$.}
\label{fig:algorithm3a}
\end{figure}

At the end of the processing of algorithm {\sc Optimal Bin Solution} the variable
$path$ contains a path $P$ as shown in (\ref{interest-path}) where the maximal vertex label is minimal.

The computation of all at most $(N+1)^k$ values $\# \OPEN(C_Q(v))$ 
can be performed in time $\bigo(k \cdot (N+1)^k)$. A value is
added to some list only if it is not already contained. To check
this efficiently in time $\bigo(1)$ we use a boolean array over all possible
configurations. 
This array can be initialized in time $\bigo((N+1)^k)$.
Thus, we can compute the minimal number of stack-up places
necessary to process the given {\sc FIFO Stack-Up}
problem as well as such a bin solution in time $\bigo(k \cdot (N+1)^k)$. 

\begin{theorem}\label{th-processing-gr}
The {\sc FIFO Stack-Up} problem 
can be solved in time $\bigo(k \cdot (N+1)^k)$.
\end{theorem}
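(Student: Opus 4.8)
The plan is to assemble the machinery developed above into a single breadth-first sweep and to certify that the bottleneck value it computes is exactly the minimum number of stack-up places. First I would recall that the processing graph is a directed acyclic graph on the configurations $(i_1,\ldots,i_k)$ with $0\le i_j\le |q_j|$, so it has at most $\prod_{j=1}^{k}(|q_j|+1)\le (N+1)^k$ vertices and, since at most $k$ arcs leave each vertex, at most $k\cdot(N+1)^k$ arcs. Precomputing all values $\FI(q_j,t)$ and $\LA(q_j,t)$ costs $\bigo(k\cdot N+k\cdot m)$, which by Corollary \ref{cor-bd1} and Corollary \ref{cor-bd2} is absorbed into $\bigo(k\cdot(N+1)^k)$; each vertex label $\#\OPEN(C_Q(v))$ is then obtained in time $\bigo(k)$ via the incremental rule (\ref{transStep}), so all labels and all arcs are produced during the construction in total time $\bigo(k\cdot(N+1)^k)$.

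The conceptual step I would isolate next is the reduction of the decision question to a minimax (bottleneck) path problem. Every processing of $Q$ corresponds bijectively to a directed path from the source $s=(0,\ldots,0)$ to the sink $t=(|q_1|,\ldots,|q_k|)$, the arc labels recording the removed bins, and the number of stack-up places the processing requires is precisely the largest vertex label encountered, i.e. $\max_{u\in P}\#\OPEN(C_Q(u))$. Hence the least number of stack-up places over all processings equals $val(t)=\min_{P\in{\mathcal P}(t)}\max_{u\in P}\#\OPEN(C_Q(u))$, and the given instance is a yes-instance if and only if $val(t)\le p$.

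For computing $val(t)$ I would invoke the recurrence (\ref{eq2}) evaluated in topological order. The correctness argument is short but is the point that needs care: any $s$--$v$ path reaches $v$ through some in-neighbour $u$, and its bottleneck equals $\max\{\#\OPEN(C_Q(v)),\ B\}$, where $B$ is the bottleneck of its $s$--$u$ prefix; because $G$ is acyclic every such prefix is itself an $s$--$u$ path, and since $\max\{\#\OPEN(C_Q(v)),\cdot\}$ is monotone the minimum may be pushed inside, yielding exactly $\max\{\#\OPEN(C_Q(v)),\min_{(u,v)\in A}val(u)\}$. Evaluating this for all vertices in a topological order guarantees that $val(u)$ is final before $v$ is touched, which is what Algorithm {\sc Find Path} does; Algorithm {\sc Optimal Bin Solution} fuses the construction, the topological sweep, and the backpointers $path[\cdot]$ into one breadth-first pass, so no graph need be stored explicitly.

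Finally I would collect the running times. The topological sweep and the evaluation of (\ref{eq2}) cost $\bigo(|V|+|A|)=\bigo(k\cdot(N+1)^k)$; membership tests when appending to the frontier list are made $\bigo(1)$ by a boolean array over all configurations, initialized in $\bigo((N+1)^k)$; and tracing the backpointers reconstructs an optimal bin solution in time linear in the path length. Summing construction, labeling, dynamic programming, and reconstruction leaves the total at $\bigo(k\cdot(N+1)^k)$, which solves the optimization and hence the decision problem. The main obstacle is precisely the correctness of the bottleneck recurrence together with the bookkeeping that the $\FI$/$\LA$ preprocessing and the boolean-array initialization are genuinely dominated by $\bigo(k\cdot(N+1)^k)$; both are settled by the acyclicity of $G$ and by Corollaries \ref{cor-bd1} and \ref{cor-bd2}.
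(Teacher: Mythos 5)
Your proposal is correct and follows essentially the same route as the paper: it builds the processing graph with its $\#\OPEN$ vertex labels via the incremental rule, reduces the problem to a bottleneck (minimax) path computation using the recurrence $val(v)=\max\{\#\OPEN(C_Q(v)),\min_{(u,v)\in A}val(u)\}$ in topological order, and accounts for the $\FI$/$\LA$ preprocessing and boolean-array membership tests exactly as the paper does. The only addition is that you spell out the correctness of the bottleneck recurrence explicitly, which the paper leaves largely implicit.
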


\subsection{The Decision Graph} \label{SCdec}

During a processing of a list $Q$ of sequences there are often
configurations for which it is easy to find a bin $b$ that
can be removed such that a further processing with $p$
stack-up places is still possible. This is the case, if bin $b$ is
destined for an already open pallet.
A configuration $(i_1, \ldots, i_k)$ is called a {\em decision
configuration}, if the bin on position $i_j+1$ of sequence $q_j$ for
each $j \in [k]$ is destined for a non-open pallet, i.e. 
$front((i_1, \ldots, i_k))\cap \OPEN((i_1, \ldots, i_k))=\emptyset$. 
We can restrict
FIFO stack-up algorithms to deal with such decision configurations,
in all other configurations the algorithms automatically remove a bin
for some already open pallet.

A solution to the {\sc FIFO Stack-Up} problem can always be found by computing
the decision graph for the given instance of the problem. The decision graph
$G=(V,A)$ has a vertex for each decision configuration into which the initial
configuration can be transformed. There is an arc $(u,v) \in A$ from
a vertex $u$ representing decision configuration $(u_1, \ldots, u_k)$
to a vertex $v$ representing decision configuration $(v_1, \ldots, v_k)$
if and only if there is a bin $b$ on position $u_j + 1$ in sequence
$q_j$ such that the removal of $b$ in the next transformation step
and the execution of only automatic transformation steps afterwards
lead to decision configuration $(v_1, \ldots, v_k)$. 
Arc $(u,v)$
is labeled with the pallet symbol of the bin that will be removed in the
corresponding transformation step.

\begin{example}[Decision Graph]\label{EX1-dg}
In Figure \ref{F08}
the decision graph for Example \ref{EX1} is shown.
\end{example}

\begin{figure}[htbp]
\centerline{\epsfxsize=.35\textwidth \epsfbox{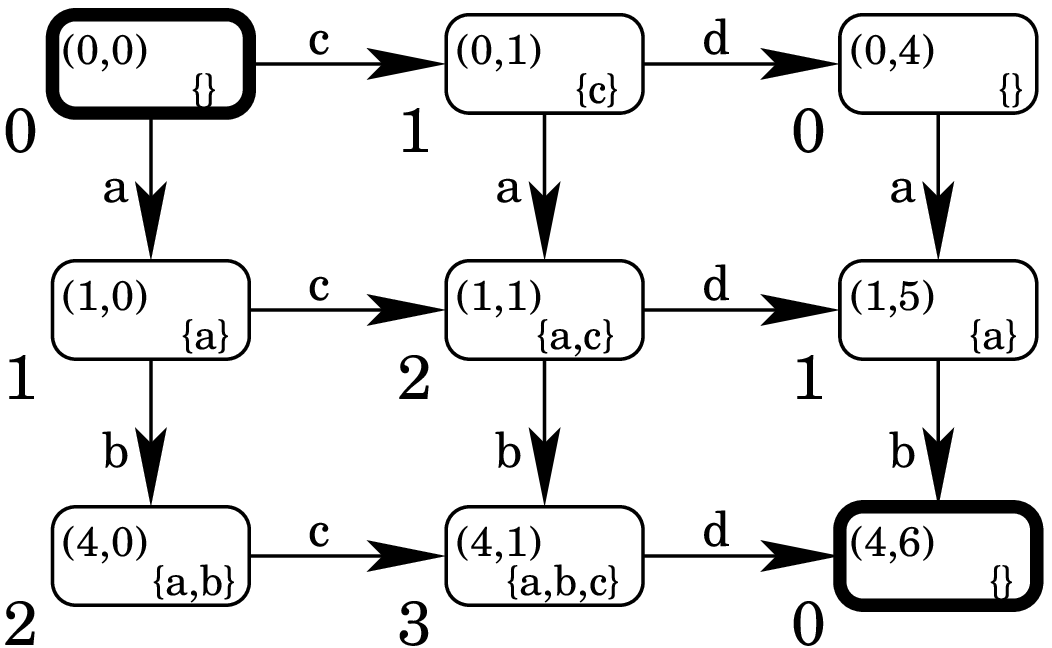}}
\caption{The decision graph of Example \ref{EX1}. It consists of the shaded
vertices from the processing graph in Figure \ref{F07}.}
\label{F08}
\end{figure}

Every decision
graph is directed and acyclic.
Each pallet solution leads to a path from the initial configuration $(0,0, \ldots, 0)$
to the final configuration $(|q_1|, |q_2|, \ldots, |q_k|)$ in the decision graph.
We are interested in such paths where the maximal vertex label on
that path is minimal.

The decision graph can be computed by some breadth first search
algorithm as follows. 
We store the already discovered decision
configurations, i.e.\ the vertices of the graph, in some list $L$.
Initially, list $L$ contains only the initial configuration.
In each step of the algorithm we take the first configuration out
of list $L$. Let $C_Q = (i_1, \ldots, i_k)$ be such a decision configuration.
For each $j \in [k]$ we remove the bin on position $i_j+1$
from sequence $q_j$ and execute all automatic transformation steps.
By this way we reach another decision configuration
$C'_Q = (i'_1, \ldots, i'_k)$ or the final configuration,
and we put this configuration into list $L$, we add
$C'_Q$ to vertex set $V$, and we add $(C_Q, C'_Q)$ to arc set $A$, 
if it is not already contained.

We can combine the construction of the decision graph and the path finding
by some breadth first search algorithm {\sc Find Optimal Pallet Solution},
as shown in Figure \ref{fig:algorithm4}.

\begin{figure}[ht]
\hrule
{\strut\footnotesize \bf Algorithm {\sc Find Optimal Pallet Solution}} 
\hrule
\begin{tabbing}
xxx\=xxx\=xxx\=xxx \= xxx \= xxx \= xxx \= xxx\= xxx \= xxx \= xxx \= xxx\= xxx\=\kill
open$[~(0,\ldots,0)~]:=\emptyset$ \\
\#open$[~(0,\ldots,0)~]:=0$ \\
$val[~(0,\ldots,0)~] := 0$ \\
$L := (~(0,\ldots,0)~)$\> \>\>\>\>\> \>\>\>\>\> $\vartriangleright$ List of configurations to expand\\
$pred[~(0,\ldots,0)~] := \emptyset$ \>\>\>\>\>\>\>\>\> \>\> $\vartriangleright$ List of predecessors of some decision configuration \\
while $L$ is not empty do \\
\> $C := head(L)$\>\>\>\>\>\>\>\>\>\> $\vartriangleright$ let $C=(i_1,\ldots,i_k)$   \\
\>if ($pred[C]\neq \emptyset$) \>\>\>\>\>\>\>\>\>\>$\vartriangleright$ all predecessors of $C$ are computed   \\
\>\> {\sc Extend Path}($C$) \\
\> for $j := 1$ to $k$ do \\
\>\>   $C_s := (i_1, \ldots, i_j+1, \ldots, i_k)$ \\
\>\>  $O:= \OPEN[C]\cup \PL(b_{i_j+1})$ \\
\>\> for $\ell := 1$ to $k$ do \>\>\>\>\>\>\>\>\> $\vartriangleright$ perform automatic transformation steps\\
\>\>\> while the first bin in $q_\ell$ is destinated for an open pallet in $O$ \\
\>\>\>\> let $C_s$ be the configuration obtained by removing the first bin of $q_\ell$ \\
\>\> if ($C_s$ is not in $L$) \\
\>\>\> compute $\# \OPEN[C_s]$  and $\OPEN[C_s]$ according to Remark \ref{remark-open}\\
\>\>\> $append(C_s,L)$ \\
\>\>   $append(C,pred[C_s])$ 
\end{tabbing}
\hrule
\caption{Construction of the decision graph and computation of an optimal pallet
 solution at once by breadth first search.}
\label{fig:algorithm4}
\end{figure}

The running time of algorithm {\sc Find Optimal Pallet Solution}
can be estimated as follows. 
In each sequence $q_j$ there are bins for at most $m$ pallets. Each pallet
can be opened at most once, so in a decision configuration $(i_1, \ldots, i_k)$
it must hold $i_j+1=\FI(q_j,t)$ for some pallet $t$. And since $i_j=0$ and
$i_j=|q_j|$ are possible as well, the decision graph has at most $(m+2)^k$
vertices each representing a decision configuration in list $L$. 

We use a boolean array of size  $(m+2)^k$ to indicate whether some
decision configuration is already contained in $L$ or not. 
The initialization can be performed in $\bigo((m+2)^k)$. 
Since there are $\bigo((m+2)^k)$ decision configurations each having
at most $k$ predecessors we need at most $\bigo(k\cdot (m+2)^k)$
tests whether some
decision configuration is already contained in $L$. Since
each test can be done in time  $\bigo(k)$, we can bound the running time 
for all tests by  $\bigo(k^2\cdot (m+2)^k)
\subseteq\bigo(n^2\cdot (m+2)^k)$.

The automatic transformation steps can be performed for every $j \in [k]$  
in time $\bigo(n+k)$ and thus for all decision configurations in $L$ in time
$\bigo((m+2)^k \cdot k\cdot (n+k))$.
The computation of $\# \OPEN(C_s)$ for one decision configuration $C_s$ 
can be done in time $\bigo(m\cdot k)$ by Remark \ref{remark-open}. 
Since we only compute $\# \OPEN(C_s)$ for decision configuration $C_s$ 
which is not already contained in $L$, for all decision configurations
we need time in $\bigo((m+2)^k \cdot m \cdot k))$.

Thus the running time of algorithm {\sc Find Optimal Pallet Solution} is in
$\bigo((m+2)^k \cdot k\cdot (n+k) + (m+2)^k \cdot m \cdot k)) \subseteq
\bigo((m+2)^k\cdot n^2)$.

If we have a pallet solution $T = (t_1, \ldots, t_m)$ computed by any FIFO stack-up algorithm,
we can convert the pallet solution into a sequence of transformation
steps, i.e.\ a processing of $Q$, in time $\bigo(n\cdot k)\subseteq\bigo(n^2)$ by
some simple algorithm: Repeatedly, in the $i$-th decision
configuration choose the pallet $t_i$, remove a bin for pallet $t_i$,
and execute all automatic transformation steps, until the next decision
configuration is reached. If no bin for pallet $t_i$ can be removed in the $i$-th decision
configuration, or if more than $p$ pallets are opened, reject the input. 
Each transformation step 
of these at most $n$ steps can be done in time $\bigo(k)$ by Equation
(\ref{transStep}).

\begin{theorem}\label{th-main}
The {\sc FIFO Stack-Up} problem 
can be solved in time $\bigo(n^2 \cdot (m+2)^k)$.
\end{theorem}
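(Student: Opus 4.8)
The plan is to establish the bound by invoking Algorithm {\sc Find Optimal Pallet Solution}, whose construction and per-phase running-time analysis were developed above; the proof then amounts to combining the correctness of the decision-graph abstraction with the time bounds already derived, plus the final conversion of a pallet solution into a processing.

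First I would argue correctness, namely that the minimum over all pallet solutions of the peak value $\#\OPEN$ equals the optimal number of stack-up places, and that {\sc Find Optimal Pallet Solution} computes exactly this minimum. The key observation is that removing a bin destined for an \emph{already open} pallet can never increase $\#\OPEN$: if pallet $t$ is open in $C_Q$ and we remove a front bin for $t$ from sequence $q_r$, then the condition $\FI(q_r,t)=i_r+1$ with $\FI(q_\ell,t)>i_\ell$ for all $\ell\neq r$ cannot hold (openness of $t$ contradicts it), so by the $c_j$-formula preceding Equation~(\ref{transStep}) we have $c_r\in\{-1,0\}$. Hence, given any processing of $Q$, we may fire all available automatic steps greedily, as early as possible, interleaved with the decision steps in their original order, without ever raising the peak number of open pallets. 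This normalizes every processing to one visiting only decision configurations at its decision points, so the decision graph is a sound and complete abstraction: its paths from $(0,\ldots,0)$ to $(|q_1|,\ldots,|q_k|)$ correspond to normalized pallet solutions, and minimizing the peak vertex label over these paths (what the algorithm does via Equation~(\ref{eq2}), cf.\ page~\pageref{label-path-min2}) yields the optimum.

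Next I would assemble the running time from the bounds established above. The counting argument shows the decision graph has at most $(m+2)^k$ vertices, so the boolean array used for duplicate detection is initialized in $\bigo((m+2)^k)$. Over the at most $k$ predecessors of each of the $\bigo((m+2)^k)$ vertices, the membership tests cost $\bigo(k^2\cdot(m+2)^k)\subseteq\bigo(n^2\cdot(m+2)^k)$; the automatic transformation steps cost $\bigo((m+2)^k\cdot k\cdot(n+k))$; and computing $\#\OPEN$ once per newly created configuration costs $\bigo((m+2)^k\cdot m\cdot k)$ by Remark~\ref{remark-open}. Summing these and using $k<m\leq n/2$ from Remark~\ref{remark-practical-br} gives a total of $\bigo((m+2)^k\cdot k\cdot(n+k)+(m+2)^k\cdot m\cdot k)\subseteq\bigo(n^2\cdot(m+2)^k)$.

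Finally I would note that the algorithm outputs a pallet solution, which is turned into an explicit processing by the simple greedy procedure described above in time $\bigo(n\cdot k)\subseteq\bigo(n^2)$, comfortably within the claimed bound. I expect the main obstacle to be the correctness step rather than the timing: one must verify that firing automatic steps greedily is never suboptimal, i.e.\ that no optimal processing is forced to leave a front bin for an open pallet unprocessed in order to keep the peak low. The $c_j$-monotonicity observation settles this cleanly, after which the running-time bound follows by the straightforward summation above.
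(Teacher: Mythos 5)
Your proposal is correct and follows essentially the same route as the paper: the theorem is obtained from the decision-graph construction and the breadth-first search algorithm {\sc Find Optimal Pallet Solution}, with the identical accounting of the $(m+2)^k$ vertex bound, the membership tests, the automatic transformation steps, the $\#\OPEN$ computations via Remark~\ref{remark-open}, and the final $\bigo(n\cdot k)$ conversion of a pallet solution into a processing. The only difference is that you make explicit the correctness of the automatic-step normalization via the $c_j$-monotonicity observation, a point the paper asserts informally rather than proving; this is a welcome addition but not a different approach.
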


Another way to describe a path from the initial configuration
to the final configuration in the decision graph is as follows.
Let $s_i\in [k]$ denote the sequence from which  a bin for pallet
$t_i$ will be removed in the next transformation step 
after reading the $i$-th decision configuration on a path. 
Then $S=(s_1, \ldots, s_m)$ is called a {\em sequence
solution}. In Example \ref{EX1} we have $S=(2,2,1,1)$ as a sequence solution.

There are at most $k^m$ sequence orders $(s_1,\ldots,s_m)$, where
$s_i \in [k]$. The following algorithm checks in time
$\bigo(n \cdot k) \subseteq \bigo(n^2)$, whether one
sequence order describes a sequence
solution: In the $i$-th decision configuration remove the bin from sequence $s_i$.
Reject the input, if this is impossible. Otherwise 
do all automatic transformation steps until the next decision configuration
is reached. Each step can be done in time $\bigo(k)$ by Equation
(\ref{transStep}). Reject the input, if more than $p$ pallets are opened.

If we enumerate all $k^m$ possible sequence orders and check each order, 
we can solve the {\sc FIFO Stack-Up} problem.

\begin{theorem}\label{th-main-a}
The {\sc FIFO Stack-Up} problem 
can be solved in time $\bigo(n^2 \cdot k^m)$.
\end{theorem}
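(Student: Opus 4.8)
The plan is to reduce the search to a plain enumeration over the space of sequence solutions $S=(s_1,\ldots,s_m)\in[k]^m$ introduced just above, exploiting the fact that every processing of $Q$ opens each of the $m$ pallets exactly once. First I would observe that the decision configurations on any path from $(0,\ldots,0)$ to the final configuration in the decision graph correspond one-to-one to the $m$ pallet openings: in a decision configuration every front bin is destined for a non-open pallet, so the only way to proceed is to open a new pallet, after which all forced (automatic) removals of bins for open pallets are carried out until the next decision configuration is reached. Consequently such a path is completely determined by recording, for each of the $m$ decision configurations, the index $s_i\in[k]$ of the sequence that supplies the bin opening the $i$-th pallet. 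This yields the encoding $S\in[k]^m$ and the crude count of at most $k^m$ candidate orders.

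Next I would check that this encoding is faithful in both directions. For \emph{completeness}, the discussion of the decision graph guarantees that an optimal processing corresponds to some path in the decision graph; recording the opening sequence at each of its $m$ decision configurations produces a sequence solution $S\in[k]^m$, so the minimum number of stack-up places is attained by at least one order in the enumeration and the optimum cannot be missed. For \emph{soundness}, a given order $S$ is decoded greedily: starting from $(0,\ldots,0)$, at the $i$-th decision configuration remove the front bin of sequence $s_i$ and then perform all automatic steps, rejecting the order if at some stage $s_i$ offers no admissible bin (its front bin is already for an open pallet, or the sequence is exhausted) or if $\#\OPEN$ ever exceeds $p$. The subtle point, and the step I expect to be the main obstacle, is that ``perform all automatic transformation steps until the next decision configuration is reached'' must be well defined independently of the order in which those steps are executed: since an automatic step removes a bin for an \emph{open} pallet it may close a pallet but can never open one, so the set of open pallets is monotonically non-increasing throughout the automatic phase, which gives the confluence needed to make the reached decision configuration unique. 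Reconciling this scheduling freedom with the clean ``one decision per pallet'' picture is the only genuinely nontrivial part of the argument.

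Finally I would bound the running time. Decoding and testing a single order $S$ is nothing more than simulating a processing of $Q$: at most $n$ transformation steps are performed, and by Equation~(\ref{transStep}) the update of $\#\OPEN$ after each step costs $\bigo(k)$, once the values $\FI(q_j,t)$ and $\LA(q_j,t)$ have been computed a single time in $\bigo(n+k\cdot m)$ by Remark~\ref{remark-open}. Hence each order is checked in time $\bigo(n\cdot k)$, and since $k<m\le \frac{n}{2}<n$ by Remark~\ref{remark-practical-br} this lies in $\bigo(n^2)$. Multiplying the per-order cost by the $k^m$ orders, and absorbing the one-time preprocessing, yields the claimed total bound of $\bigo(n^2\cdot k^m)$.
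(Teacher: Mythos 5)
Your proposal is correct and follows essentially the same route as the paper: enumerate all $k^m$ sequence orders and simulate each one in time $\bigo(n\cdot k)\subseteq\bigo(n^2)$ using the precomputed $\FI$/$\LA$ values and Equation~(\ref{transStep}). Your additional observation that the automatic phase is confluent (automatic steps can only close, never open, pallets, so the next decision configuration is unique) is a worthwhile clarification that the paper leaves implicit, but it does not change the argument.
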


\section{Application of Methods to Solve Hard Problems to the  {\sc FIFO Stack-Up} Problem}\label{sec-algo}

Since the {\sc FIFO Stack-Up} problem is NP-complete in general,
it is unlikely
that there exist polynomial time algorithms for the problem. 
This motivates us to consider 
restricted versions of the problem,
linear programming solutions,
exponential time algorithms, 
parameterized algorithms, and
approximation algorithms.

\subsection{Restricted Versions}

We have shown that the {\sc FIFO Stack-Up} problem can be solved
in polynomial time for the following special cases.

\begin{itemize}
\item If the number $k$ of sequences is constant then the {\sc FIFO Stack-Up}
  problem can be solved in polynomial time $\bigo(k \cdot (N+1)^k)$, see Theorem
  \ref{th-processing-gr}.

\item If the number $p$ of stack-up places is constant we get a polynomial
  running time of $\bigo(n + m^{p+2})$ by constructing the sequence graph,
  see Section \ref{sec-xp}.

\item If the number $m$ of pallets is constant we get an algorithm with polynomial
  running time $\bigo(n^2 \cdot m!)$, see Section \ref{sec-fpt}.

\item Finally, if the number $n$ of bins is constant, then we restrict the input
  size, see Equation (\ref{eq-inst}) and Remark \ref{remark-practical-br}, 
and therefore we get a constant running time $\bigo(1)$.
\end{itemize}

\subsection{Linear Programming}

%
%

Next we give two linear programming approaches to solve the optimization version of the 
{\sc FIFO Stack-Up} problem in which we have to compute a minimum number $p$ such
that a given list $Q$ of sequences can be processed with at most $p$ stack-up places.
The first one computes a bin solution using $\bigo(n^2)$ 
variables and the second
one computes a pallet solution using $\bigo(m^4)$ variables. 
In Section \ref{sec-exp} we will compare both solutions 
implemented in CPLEX and GLPK
within an experimental study of running times.

\subsubsection{Computing a Bin Solution}\label{sec-IPbin}

We have given a list $Q$ of $k$ sequences and $n$ bins $b_1,\ldots,b_{n}$ 
which use $m$ pallet symbols. Our aim is to find a bin solution for $Q$,
i.e.\ a bijection $\pi:[n]\to [n]$ for the bins, such that by the removal
of the bins from the sequences in the order of $\pi$  the number 
of needed stack-up places $p$ is minimized. 

To realize a bijection $\pi$ we define $n^2$ binary variables
$x_i^j\in\{0,1\}$, $i,j\in [n]$,
such that $x_i^j$ is equal to 1, if and only if bin $b_i$ is placed at 
position $j$ by $\pi$.
In order to map every bin $b_i$, $i\in [n]$, on exactly one position,
i.e.\ to ensure $\pi$ to be surjective,
we use the conditions
$$\sum_{j=1}^{n} x_i^j  = 1  \mbox{ for every } i \in [n]$$  
and in order to map on every position $j\in [n]$ exactly one
bin, i.e.\ to ensure $\pi$ to be injective, we use the conditions
$$\sum_{i=1}^{n} x_i^j  = 1  \mbox{ for every } j\in [n].$$   
Further we have to ensure that all variables $x_i^j$, 
$i,j\in [n]$, are in $\{0,1\}$.  We will denote the previous
$n^2+2n$  conditions by $\text{\sc Permutation}(n, x_i^j)$.

By $\pi$ the relative order of the bins of each sequence $q \in Q$ has to be
preserved, consider for example the bin solution of the sequences of Example \ref{EX1}. The bins
of the corresponding sequence are shown in black, the others are greyed out.
\[
\begin{array}{l}
  ({\gr b_5}, {\gr b_6}, {\gr b_7}, {\gr b_8}, b_1, {\gr b_9}, b_2, {\gr b_{10}},
    b_3, b_4) ~~ \rightarrow ~~ q_1 \\[1ex]
  (b_5, b_6, b_7, b_8, {\gr b_1}, b_9, {\gr b_2}, b_{10}, {\gr b_3}, {\gr b_4})
    ~~ \rightarrow ~~ q_2
\end{array}
\]
We have to consider the orderings of the bins given by the $k$ sequences
$q_1=(b_1, \ldots, b_{n_1}), \ldots, q_k=(b_{n_{k-1}+1}, \ldots, b_{n_k})$.
For every sequence $q_\ell$, $1\leq \ell \leq k$, and
every bin $b_i$ of this sequence, i.e. $n_{\ell-1}+1 \leq i \leq n_\ell$, 
we know that if $b_i$ is placed at position $j$, i.e. $x_{i}^{j}=1$,
then 
\begin{itemize}
\item every bin $b_{i'}$, $i'>i$ of sequence $q_\ell$, i.e.
  $i < i' \leq n_\ell$, is not placed before $b_i$, i.e. $x_{i'}^{j'}=0$
  for all $j'<j$, which is ensured by $x_{i'}^{j'}\leq 1-x_{i}^{j}$ and
\item every bin $b_{i'}$, $i'<i$ of sequence $q_\ell$, i.e.
  $n_{\ell-1}+1 \leq i'< i$, is not placed after $b_i$, i.e.
  $x_{i'}^{j'}=0$ for all $j'>j$, which is ensured by $x_{i'}^{j'}\leq 1-x_{i}^{j}$.
\end{itemize}
Since we have $\bigo(n^2)$ pairs $(j',j)$ and  $\bigo(N^2)$
pairs $(i',i)$, and since we have $k$ sequences, there are at most
$\bigo(k \cdot n^2 \cdot N^2)$ such
conditions.  We will denote these
conditions by $\text{\sc SequenceOrder}(Q, x_i^j)$.

By an integer valued variable $p$  we count the number of used stack-up places 
for some given sequence $Q$ as follows. 
\begin{eqnarray}
 \text{minimize } p \label{c1} 
\end{eqnarray}
subject to
\begin{eqnarray}
 & & \text{\sc Permutation}(n, x_i^j), \text{ and }
     \text{\sc SequenceOrder}(Q, x_i^j) \label{c2} \\
 & \text{ and }
   & \sum_{t=1}^{m} f(t,c) \leq p \text{ for every }c\in [n-1]  \label{c3}  \\
 & \text{ and }
   & f(t,c) = \bigg(\bigvee_{i\in[n],j\leq c, \PL(b_i)=t} x_i^j\bigg) \wedge
     \bigg(\bigvee_{i\in[n],j > c,\PL(b_i)=t}  x_i^j\bigg) \label{c4}
\end{eqnarray}

For the correctness note that subexpression $f(t,c)$ is equal to one 
if and only if in the considered ordering of the bins there is
a bin $b'$ with $\PL(b')=t$ opened at a step $\leq c$ and 
there is
a bin $b''$ with $\PL(b'')=t$ opened at a step $>c$, 
if and only if 
pallet $t$ is open after the $c$-th bin has been removed.

\begin{remark}\label{remark-linearisierelp1}
Equation (\ref{c4}) is propositional logic and not a linear function in standard
form. Propositional logic can be reformulated for binary integer linear
programming using the results of \cite{Gur14} which show that 
every $n$-ary boolean function  $f(x_1,\ldots,x_n)=x_{n+1}$ 
can be defined with a binary linear program using $n+1$ boolean variables 
$x_1,\ldots,x_{n+1}$.  In order to express Equation (\ref{c4}) we need
to define binary conjunctions and an $n$-ary disjunction.
\begin{itemize}
\item Every conjunction $x_1\wedge x_2$, $x_1,x_2\in\{0,1\}$ can be realized by 
introducing a new variable $x_3\in\{0,1\}$ and three conditions
$$x_3- x_1\leq 0, ~~ x_3 - x_2\leq 0, ~~ x_1+x_2-x_3 \leq 1$$ 
such that finally $x_3=x_1\wedge x_2$.

\item Every $n$-ary disjunction $x_1\vee x_2 \vee \ldots \vee x_n$ can be realized by 
introducing a new variable $x_{n+1}\in\{0,1\}$ and $n+1$ conditions
$$x_1- x_{n+1}\leq 0, ~~ x_2 - x_{n+1}\leq 0, ~~ \ldots, ~~ x_n - x_{n+1}\leq 0, ~~ x_1+x_2 +\ldots + x_n - x_{n+1}\geq 0$$ 
such that finally $x_{n+1}=x_1\vee x_2 \vee \ldots \vee x_n$.
\end{itemize}

Next these two transformations are applied on Equation (\ref{c4}).
We define $m\cdot(n-1)\leq n^2$ binary variables $g(t,c)\in\{0,1\}$, $t\in[m]$, $c\in[n-1]$ such that
$$g(t,c)=\bigvee_{i\in[n],j\leq c, \PL(b_i)=t} x_i^j,$$ which can be realized by 
\begin{eqnarray}
   &              & x_i^j -g(t,c)\leq 0 ~~~ \text{ for every } i\in[n],j\leq c, \PL(b_i)=t,      t\in[m], c\in[n-1]      \label{xxz1} \\
   & \text{ and } &  \big(\sum_{i\in[n],j\leq c, \PL(b_i)=t} x_i^j\big) - g(t,c)\geq 0 ~~~ \text{ for every } t\in[m], c\in[n-1] \label{xxz2}
\end{eqnarray}

Further we define  $m\cdot (n-1) \leq n^2$ binary variables $h(t,c)\in\{0,1\}$, $t\in[m]$, $c\in[n-1]$ such that
$$h(t,c)=\bigvee_{i\in[n],j > c,\PL(b_i)=t}  x_i^j,$$ which can be realized by 
\begin{eqnarray}
   &              & x_i^j - h(t,c) \leq 0~~~ \text{ for every } i\in[n],j > c,\PL(b_i)=t,    t\in[m], c\in[n-1]      \label{xxz3} \\
   & \text{ and } &  \big(\sum_{i\in[n],j > c,\PL(b_i)=t} x_i^j\big) - h(t,c)\geq 0 ~~~ \text{ for every } t\in[m], c\in[n-1] \label{xxz4}
\end{eqnarray}

Finally we define  $m\cdot (n-1) \leq n^2$ binary variables $f(t,c)\in\{0,1\}$, $t\in[m]$, $c\in[n-1]$, such that 
$f(t,c)= g(t,c) \wedge h(t,c)$, which
can be realized  by
\begin{eqnarray}
   &              &f(t,c) - g(t,c) \leq 0~~~ \text{ for every } t\in[m], c\in[n-1] \label{xxz5} \\
   & \text{ and } &f(t,c)  - h(t,c) \leq  0~~~  \text{ for every }  t\in[m], c\in[n-1] \label{xxz6}\\
   & \text{ and } & g(t,c) + h(t,c)- f(t,c)\leq 1  ~~~ \text{ for every } t\in[m], c\in[n-1]\label{xxz7}
\end{eqnarray}
\end{remark}

\begin{theorem}\label{th-lp-n}
For every list $Q$ of sequences the integer linear program
(\ref{c1})-(\ref{c3}),(\ref{xxz1})-(\ref{xxz7}) computes the minimum number of
stack-up places $p$ in a processing of $Q$.
\end{theorem}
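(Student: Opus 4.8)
The plan is to verify that the integer linear program faithfully encodes a minimum-stack-up-places processing by establishing a bijection between feasible solutions of the program and valid bin solutions of $Q$, and then checking that the objective value equals the number of stack-up places used. First I would argue the forward direction: given any bin solution $B=(b_{\pi(1)},\ldots,b_{\pi(n)})$ of $Q$ using $p^\ast$ stack-up places, I would set $x_i^j=1$ exactly when $\pi(i)=j$. The conditions $\textsc{Permutation}(n,x_i^j)$ are satisfied because $\pi$ is a bijection, and $\textsc{SequenceOrder}(Q,x_i^j)$ is satisfied because $B$ respects the FIFO order within each sequence $q_\ell$ (no later bin of a sequence is removed before an earlier one). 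I would then confirm, using the definitions of the auxiliary variables in Remark~\ref{remark-linearisierelp1}, that the unique consistent assignment of the $g(t,c)$, $h(t,c)$, and $f(t,c)$ variables makes $f(t,c)=1$ precisely when pallet $t$ is open immediately after the $c$-th removal.

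The heart of the correctness argument is this last claim, which I expect to be the main technical obstacle. I would prove that $g(t,c)=1$ iff some bin for pallet $t$ occupies a position $\le c$ in $\pi$ (i.e. pallet $t$ has been touched by step $c$), which follows because constraints (\ref{xxz1}) force $g(t,c)\ge x_i^j$ for all relevant $i,j$ while (\ref{xxz2}) forces $g(t,c)$ down to the disjunction's value; together with integrality this pins $g(t,c)$ to the OR. The symmetric argument via (\ref{xxz3})--(\ref{xxz4}) gives that $h(t,c)=1$ iff some bin for pallet $t$ sits at a position $>c$ (i.e. a bin for $t$ still remains). The conjunction constraints (\ref{xxz5})--(\ref{xxz7}) then force $f(t,c)=g(t,c)\wedge h(t,c)$, which by the definition of \emph{open} on page~\pageref{def-open} is exactly the indicator that $t$ is open after $c$ bins have been removed. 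Here I would invoke the already-established equivalence (the text preceding the theorem) that $f(t,c)=1$ iff pallet $t$ is open after the $c$-th removal, so this step is essentially a matter of reconciling the logical definition with the linearization.

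Given the open-indicator characterization, constraint (\ref{c3}), $\sum_{t=1}^m f(t,c)\le p$ for every $c\in[n-1]$, states that in each intermediate configuration the number of open pallets is at most $p$; minimizing $p$ subject to this over all $c$ yields exactly $\max_c \#\OPEN$, the number of stack-up places the bin solution uses. Thus for the chosen assignment the optimal $p$ equals $p^\ast$, establishing that the program's optimum is at most the true minimum.

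For the reverse direction I would take any feasible $0/1$ assignment of the $x_i^j$ satisfying (\ref{c2}) and read off a permutation $\pi$ from the \textsc{Permutation} constraints; the \textsc{SequenceOrder} constraints guarantee that removing bins in the order of $\pi$ is a legal FIFO processing, so $\pi$ induces a genuine bin solution. By the same open-indicator analysis the induced $f(t,c)$ values again count open pallets, so (\ref{c3}) certifies that this processing uses at most $p$ stack-up places. Hence every feasible $p$ is achievable by an actual processing, so the program's optimum is at least the true minimum. Combining both directions gives equality, which is the assertion of the theorem. The only subtlety to watch is that minimization of $p$ does not by itself force the auxiliary variables to their "intended" values in directions that could make $p$ artificially small; I would note that the lower-bound constraints (\ref{xxz2}), (\ref{xxz4}) prevent $g$ and $h$ from dropping below the true disjunction, and (\ref{xxz5})--(\ref{xxz6}) prevent $f$ from exceeding the conjunction, so no feasible solution can report fewer open pallets than actually occur, which closes the argument. \qed
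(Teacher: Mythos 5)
Your proof is correct and follows essentially the same route as the paper, which argues correctness inline (the \textsc{Permutation} constraints give a bijection, \textsc{SequenceOrder} forces that bijection to respect the FIFO order of each sequence, and $f(t,c)$ is identified as the indicator that pallet $t$ is open after the $c$-th removal) and then states the theorem without a separate proof. One small slip in your closing paragraph: the roles of the constraints are swapped there --- it is (\ref{xxz1}) and (\ref{xxz3}) that bound $g(t,c)$ and $h(t,c)$ from below, while (\ref{xxz2}) and (\ref{xxz4}) bound them from above, and it is (\ref{xxz7}) that prevents $f(t,c)$ from under-reporting an open pallet --- but since your earlier analysis already pins every auxiliary variable to its intended value, the argument stands.
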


\subsubsection{Computing a Pallet Solution}\label{sec-IPplt}

By Theorem \ref{th-pw-gbefore} the minimum number of
stack-up places can be computed by the directed pathwidth of 
the sequence graph $G_Q$ plus one.
In the following we use the fact that the
directed pathwidth equals the directed vertex separation number
\cite{YC08}.

For a digraph $G=(V,A)$ on $n$ vertices, we denote by $\Pi(G)$ the set of all 
bijections $\pi:[n]\to [n]$ of its vertex set.
Given a bijection $\pi\in \Pi(G)$ we define for $i \in [n]$ the vertex sets 
$L(i,\pi,G)=\{u\in V ~|~ \pi(u)\leq i \}$ and 
$R(i,\pi,G)=\{u\in V ~|~ \pi(u) > i \}$.
Every position $i \in [n]$ is called a cut. This allows us to define
the directed vertex separation number for digraph $G$ as follows.
\[
    \dvsns(G) = \min_{\pi\in \Pi(G)} \max_{1\leq i\leq |V|} |
    \{u\in L(i,\pi,G)~|~ \exists v \in R(i,\pi,G): (v,u)\in A\}|
\]

An integer linear program for computing the directed vertex separation number for
some given sequence graph $G_Q=(V,A)$ on $m$ vertices is as follows. To realize
$\pi$ we define $m^2$ binary variables $x_i^j\in\{0,1\}$, $i,j\in [m]$,
such that $x_i^j$ is equal to 1, if and only if pallet $v_i$ is placed at 
position $j$ by $\pi$.
Additionally we use a variable $w$ in order to count the vertices 
adjacent to the right side of the cuts.
\begin{eqnarray}
\text{minimize } w \label{f-path}
\end{eqnarray}
subject to
\begin{eqnarray}
 & & \text{\sc Permutation}(m, x_i^j) \label{c1-path} \\
 & \text{ and } & \sum_{j=1}^{c} Y(j,c) \leq w \text{ for every } c\in [m-1] \label{c2-path} \\
 & \text{ and }
   & Y(j,c) = \bigvee_{\substack{j'\in\{c+1,\ldots,m\}\\ i,i'\in [m], (v_{i'},v_{i})\in A}} (x_{i}^{j} \wedge x_{i'}^{j'})    \label{c3-path}  
\end{eqnarray}

For the correctness note that subexpression $Y(j,c)$ is equal 
to one if and only if there exists 
an arc from a vertex on a position $j'>c$ to a vertex on position $j$.

\begin{remark}\label{remark-linearisierelp2}
Equation (\ref{c3-path}) is propositional logic and not a linear function in standard
form. In order to express Equation (\ref{c3-path}) we need
to define binary conjunctions and an $n$-ary disjunction, see Remark \ref{remark-linearisierelp1}.

We define $m^4$ binary variables $X(i,i',j,j')\in\{0,1\}$, $i,i',j,j'\in[m]$, such that
$X(i,i',j,j')=x_{i}^{j} \wedge x_{i'}^{j'}$, which can be realized by 
\begin{eqnarray}
   &              & X(i,i',j,j') - x_{i}^{j} \leq 0 ~~~ \text{ for every }  i,i',j,j'\in[m] \label{x1} \\
   & \text{ and } & X(i,i',j,j') - x_{i'}^{j'} \leq 0~~~  \text{ for every }  i,i',j,j'\in[m] \label{x2}\\
   & \text{ and } & x_{i}^{j} + x_{i'}^{j'}- X(i,i',j,j')\leq1  ~~~ \text{ for every } i,i',j,j'\in[m] \label{x3}
\end{eqnarray}

We define $\bigo(m^2)$ binary variables $Y(j,c)\in\{0,1\}$ for $j,c\in[m-1]$, $j\leq c$, such that 
$$Y(j,c)= \bigvee_{\substack{j'\in\{c+1,\ldots,m\}\\ i,i'\in [m], (v_{i'},v_{i})\in A}} X(i,i',j,j'),$$ which
can be realized  by
\begin{eqnarray}
  &              & \hspace{-2cm} X(i,i',j,j') - Y(j,c) \leq 0~~ \text{for every } j'\in\{c+1,\ldots,m\}, i,i'\in [m], (v_{i'},v_{i})\in A,      j,c\in[m-1] \label{xx1} \\
   & \text{ and } &  \big(\sum_{\substack{j'\in\{c+1,\ldots,m\}\\ i,i'\in [m], (v_{i'},v_{i})\in A}} X(i,i',j,j')\big) - Y(j,c) \geq 0~~~ \text{ for every } j,c\in[m-1]\label{xx2}
\end{eqnarray}
\end{remark}

\begin{theorem}\label{th-lp-m}
For every list $Q$ of sequences the integer linear program
(\ref{f-path})-(\ref{c2-path}),(\ref{x1})-(\ref{xx2}) computes the minimum number
of stack-up places $p = w+1$ in a processing of $Q$.
\end{theorem}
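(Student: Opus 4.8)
The plan is to establish correctness of the pallet-solution integer program by reducing it, via the two characterizations already proved in the excerpt, to the statement that the program faithfully computes $\dvsns(G_Q)$. Concretely, Theorem~\ref{th-pw-gbefore} tells us that the minimum number of stack-up places equals $\dpw(G_Q)+1$, and the cited equality $\dpw(G_Q)=\dvsns(G_Q)$ (from \cite{YC08}) lets us replace directed pathwidth by the directed vertex separation number. So it suffices to show that the optimal value $w$ of the program~(\ref{f-path})--(\ref{c2-path}),(\ref{x1})--(\ref{xx2}) equals $\dvsns(G_Q)$; then $p=w+1$ is exactly the minimum number of stack-up places, which is the claim.

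First I would argue that feasible solutions of the program are in bijection with bijections $\pi\in\Pi(G_Q)$. The constraints $\text{\sc Permutation}(m,x_i^j)$ force each $x_i^j$ to encode a permutation matrix, so every feasible $(x_i^j)$ corresponds to exactly one $\pi$ and vice versa. Next I would verify the linearization bookkeeping from Remark~\ref{remark-linearisierelp2}: using the auxiliary variables $X(i,i',j,j')$ constrained by~(\ref{x1})--(\ref{x3}), one gets $X(i,i',j,j')=x_i^j\wedge x_{i'}^{j'}$ at every feasible integral point (the three inequalities are the standard AND-gadget of \cite{Gur14}); and using $Y(j,c)$ constrained by~(\ref{xx1})--(\ref{xx2}) one gets exactly the disjunction in~(\ref{c3-path}), i.e.\ $Y(j,c)=1$ iff there is an arc $(v_{i'},v_i)\in A$ with $\pi(v_i)=j\le c<j'=\pi(v_{i'})$. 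This is the crux of the matching: $Y(j,c)=1$ signals that the vertex placed at position $j$ (which lies in $L(c,\pi,G_Q)$) has an in-neighbour placed at some position $j'>c$ (which lies in $R(c,\pi,G_Q)$), i.e.\ that this vertex is counted in the separator at cut $c$.

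With that established, the sum $\sum_{j=1}^{c}Y(j,c)$ counts exactly the vertices $u\in L(c,\pi,G_Q)$ that have an in-neighbour in $R(c,\pi,G_Q)$, which is precisely the quantity $|\{u\in L(c,\pi,G_Q)\mid \exists v\in R(c,\pi,G_Q):(v,u)\in A\}|$ appearing in the definition of $\dvsns$. Constraint~(\ref{c2-path}) forces $w$ to be at least this count for every cut $c\in[m-1]$, and since the objective~(\ref{f-path}) minimizes $w$, the optimal $w$ for a fixed $\pi$ equals $\max_{1\le c\le m-1}\sum_{j=1}^{c}Y(j,c)$, i.e.\ the inner maximum over cuts in the $\dvsns$ formula. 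Here I would note that cut $c=m$ contributes nothing (the right set is empty, so no vertex is separated), which is why restricting to $c\in[m-1]$ loses nothing. Minimizing over all $\pi$ then yields $w=\dvsns(G_Q)$, and $p=w+1=\dpw(G_Q)+1$ is the minimum number of stack-up places.

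The main obstacle I anticipate is purely in the careful translation between the boolean/propositional formulation of~(\ref{c3-path}) and the index conventions of the $\dvsns$ definition — in particular making sure the inner sum $\sum_{j=1}^{c}Y(j,c)$ does not double-count a vertex that has several in-neighbours on the right. Since each vertex occupies a unique position $j$, the variable $Y(j,c)$ is indexed by position and is a single disjunction over all its potential right-neighbours, so it takes value $1$ at most once per vertex; hence the sum counts distinct separated vertices and no overcounting occurs. Verifying this, together with checking that the AND/OR gadgets of Remark~\ref{remark-linearisierelp2} indeed realize the intended boolean values at all \emph{integral} feasible points, constitutes the only real work; the remainder is an immediate application of Theorem~\ref{th-pw-gbefore} and the equality $\dpw=\dvsns$.
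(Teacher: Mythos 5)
Your proposal is correct and follows exactly the argument the paper intends (which it leaves implicit in the surrounding text): reduce via Theorem~\ref{th-pw-gbefore} and the equality $\dpw=\dvsns$ from \cite{YC08} to showing the program computes $\dvsns(G_Q)$, then check that the permutation constraints encode $\pi$, that the gadgets of Remark~\ref{remark-linearisierelp2} realize $Y(j,c)$ as the stated disjunction, and that $\sum_{j=1}^{c}Y(j,c)$ is precisely the separator count at cut $c$. Your additional observations (no double-counting since $Y(j,c)$ is indexed by position, and the cut $c=m$ being vacuous) are correct and fill in details the paper omits.
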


\subsection{Exact Exponential Time Algorithms}

The running time of exponential time algorithms is often given in
the $\bigo^*$-notation, which hides polynomial factors.

\begin{theorem}
The {\sc FIFO Stack-Up} problem can be solved in time $\bigo^*(2^m)$.
\end{theorem}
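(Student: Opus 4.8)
The plan is to solve the {\sc FIFO Stack-Up} problem by a dynamic programming over the $2^m$ subsets of pallets, exploiting the pallet-solution viewpoint from Section~\ref{SCdec} together with the sequence-graph characterisation of Theorem~\ref{th-pw-gbefore}. The key observation is that a pallet solution $T=(t_1,\ldots,t_m)$ is an ordering of the pallet set $\PLS(Q)=[m]$, and that the number of stack-up places required by a \emph{prefix} of such an ordering depends only on the \emph{set} of pallets opened so far, not on the order in which that set was built. More precisely, for a subset $S\subseteq[m]$ I would define the configuration reached after opening exactly the pallets in $S$ (and greedily executing all automatic transformation steps), which by the FIFO property is uniquely determined by $S$: for each sequence $q_j$ the pointer $i_j$ advances as far as possible as long as the next bin is destined for a pallet in $S$. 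I expect this configuration, and hence $\#\OPEN$ of it, to be computable from $S$ in polynomial time using the precomputed $\FI$ and $\LA$ values of Remark~\ref{remark-open}.

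First I would set up the table $\mathrm{opt}[S]$ for $S\subseteq[m]$, defined as the minimum over all orderings of $S$ of the maximum number of simultaneously open pallets encountered while opening the pallets of $S$ one by one (interleaved with automatic steps). The base case is $\mathrm{opt}[\emptyset]=0$. For the recursion I would iterate over the last pallet $t$ to be opened among those in $S$:
\begin{eqnarray}
\mathrm{opt}[S] = \min_{t\in S}\ \max\{\mathrm{opt}[S\setminus\{t\}],\ \#\OPEN(C_Q(S))\},\nonumber
\end{eqnarray}
where $C_Q(S)$ is the configuration reached after opening exactly the pallets in $S$. The crucial point justifying this recurrence is that once a pallet $t$ is the last of $S$ to open, the set of open pallets in the resulting configuration is $\OPEN(C_Q(S))$ regardless of the order used for $S\setminus\{t\}$, so the contribution of the final step to the peak is exactly $\#\OPEN(C_Q(S))$, which can be merged with the best peak achievable for the prefix $S\setminus\{t\}$. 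The answer to the optimisation problem is $\mathrm{opt}[[m]]$, and the decision problem with parameter $p$ asks whether this value is at most $p$.

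For the running time I would argue as follows. There are $2^m$ subsets $S$, and for each the recurrence tries at most $m$ candidate last pallets, so there are $\bigo(m\cdot 2^m)$ cell evaluations. Each evaluation needs $\#\OPEN(C_Q(S))$; here I would precompute, once per subset, the configuration $C_Q(S)$ and its open-count. Using the $\FI/\LA$ values, advancing each of the $k$ pointers as far as permitted by $S$ and then computing $\#\OPEN$ costs only polynomially many arithmetic operations in $n,k,m$, all of which are hidden by the $\bigo^*$-notation. Hence the total time is $\bigo^*(2^m)$. To actually recover an optimal pallet solution rather than only its cost, I would store in each cell the minimising pallet $t$ and trace back from $[m]$ to $\emptyset$, then convert the pallet solution into a processing in time $\bigo(n\cdot k)$ as described after Theorem~\ref{th-main}.

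The main obstacle I anticipate is the \emph{well-definedness} of $C_Q(S)$ and of the peak value, i.e.\ proving that opening the pallets of $S$ in \emph{any} order but always following it by all automatic steps leads to the same final configuration and the same set of open pallets. This is where the FIFO structure and the sequence graph must be used: I would show that a pallet $t$ can be opened (its $\FI$ position reached in every sequence that contains it) exactly when all its sequence-graph predecessors that must precede it have been opened, so that the set of configurations reachable with open set $S$ collapses to a single canonical one. A second, milder subtlety is to make sure the recurrence correctly accounts for the peak: I must verify that the maximum number of open pallets over a \emph{whole} ordering of $S$ equals the maximum over its prefix peaks and the final $\#\OPEN(C_Q(S))$, which follows since every intermediate set of opened pallets of the full ordering is itself some subset whose peak is already captured by $\mathrm{opt}$ of that subset. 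Once these two points are established the time bound is immediate and the theorem follows.
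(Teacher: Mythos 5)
Your subset dynamic programming over the $2^m$ sets of opened pallets is a genuinely different route from the paper's proof, which merely constructs the sequence graph $G_Q$ in time $\bigo(n+k\cdot m^2)$ and invokes the known $\bigo^*(2^{|V|})$ exact algorithm for directed pathwidth together with Theorem~\ref{th-pw-gbefore}. However, your recurrence as stated is wrong, for two reasons. The serious one: the peak number of open pallets is \emph{not} attained at decision configurations. Between $C_Q(S\setminus\{t\})$ and $C_Q(S)$ the automatic transformation steps may close pallets, so $\#\OPEN(C_Q(S))$ can be far smaller than the number of pallets simultaneously open at the intermediate configurations of that phase; the correct contribution of the step that opens $t$ is $\#\OPEN(C_Q(S\setminus\{t\}))+1$ (the count immediately after the first bin for $t$ is removed, after which automatic steps only decrease it), not $\#\OPEN(C_Q(S))$. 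The paper's own Example~\ref{EX1} refutes your formula: with $q_2=[c,d,c,d,a,b]$, opening $c$ and then $d$ passes through configuration $(0,2)$ with both $c$ and $d$ open, yet the next decision configuration is $(0,4)$ with $\#\OPEN=0$, so your recurrence gives $\mathrm{opt}[\{c,d\}]\le 1$ and ultimately value $1$ for the whole instance, although $2$ stack-up places are necessary (Table~\ref{TB1}). Your closing remark that every intermediate set of opened pallets is "some subset whose peak is already captured by $\mathrm{opt}$ of that subset" does not repair this: the intermediate configurations with opened set $S$ are not the canonical $C_Q(S)$ and have strictly more open pallets than it.

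The second defect is that the minimum must not range over all $t\in S$ but only over those $t$ with $t\in\FRONT(C_Q(S\setminus\{t\}))$; otherwise the DP admits pallet orders that violate the FIFO constraint (for $q_1=[a,b,a,b]$ alone it would pretend $b$ can be opened before $a$) and again underestimates the optimum. You gesture at this when discussing well-definedness, but it never appears as a constraint in the recurrence. Both defects are repairable --- restrict the minimum to feasible $t$ and use $\max\{\mathrm{opt}[S\setminus\{t\}],\ \#\OPEN(C_Q(S\setminus\{t\}))+1\}$ --- and with these corrections your well-definedness argument for $C_Q(S)$ and the $\bigo^*(2^m)$ bound do go through, yielding a self-contained alternative to the paper's black-box reduction. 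As written, though, the proof has a genuine gap.
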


\begin{proof}
The directed pathwidth of a digraph $G=(V,A)$ can be computed in
time $\bigo^*(2^{|V|})$ by \cite{BFKKT12}. Since the sequence graph
$G_Q$ can be constructed in time $\bigo(n+ k\cdot m^2)$  by  algorithm
{\sc Create Sequence Graph} shown in Figure \ref{fig:algorithm5}, the statement follows
by Theorem \ref{th-pw-gbefore}. \qed
\end{proof}

\subsection{Parameterized Algorithms}

Within parameterized complexity we consider a two dimensional 
analysis of the computational complexity of a problem. Denoting the
input by $I$, the two considered dimensions are the size $|I|$ 
and a parameter $\kappa(I)$,  see \cite{FG06} for a survey.

A {\em parameterized problem} is a pair $(\Pi,\kappa)$, where $\Pi$
is a decision problem, ${\mathcal I}$ the set of all instances
of $\Pi$ and $\kappa: \mathcal{I} \to \IN$ is
a so called {\em parameterization} or {\em parameter}.
The parameter $\kappa(I)$ should be
small for all inputs $I \in {\mathcal I}$.
Please note that the following running times are
exponential,
but for small values of $\kappa(I)$ they may be good in practice.

\begin{itemize}
\item An algorithm $A$ is an {\em xp-algorithm with respect to $\kappa$}, if
  there are two computable functions $f,g: \IN \to \IN$ 
  such that for every instance $I \in {\mathcal I}$ the running time
  of $A$ on $I$ (with input size $|I|$) is at most  
  $f(\kappa(I))\cdot |I|^{g(\kappa(I))}$. A typical running time is
  $2^{\kappa(I)} \cdot |I|^{3 \cdot \kappa(I)}$.
  A parameterized problem  $(\Pi,\kappa)$ belongs to the class $\xp$
  and is called {\em slicewise polynomial},
  if there is an xp-algorithm with respect to $\kappa$ which decides $\Pi$.
\item An algorithm $A$ is an {\em fpt-algorithm with respect to $\kappa$},
  if there is a computable function $f: \IN \to \IN$ 
  such that for every instance $I\in {\mathcal I}$ the running time
  of $A$ on $I$ (with input size $|I|$) is at most  
  $f(\kappa(I))\cdot |I|^c$ for some fixed $c \in \IN$. A typical running time is
  $2^{\kappa(I)} \cdot |I|^2$. 
  A parameterized problem  $(\Pi,\kappa)$  
  belongs to the class $\fpt$ and is called {\em fixed-parameter tractable},
  if there is an fpt-algorithm with respect to $\kappa$ which decides $\Pi$.
\end{itemize}

Fxed-parameter algorithms have shown to be useful in
several fields, among there are: phylogenetics \cite{GNT08}, 
biopolymer sequences comparison \cite{CHLRS08},
artificial intelligence, constraint satisfaction, and database problems \cite{GS08},
geometric problems \cite{GKW08}, and
cognitive modeling \cite{RW08}.

In order to show fixed-parameter intractability, it is useful to
show the hardness with respect to one of the classes $\w[t]$ for
some $t\geq 1$ which were introduced by Downey and Fellows \cite{DF99}
in terms of weighted satisfiability problems on classes of circuits.
The following relations -- the so called $\w$-hierarchy -- hold
and all inclusions are assumed to be strict.
$$\fpt \subseteq \w[1] \subseteq \w[2] \subseteq \ldots \subseteq \xp$$

For the {\sc FIFO Stack-up} problem we choose the number
of sequences $k$ or the number of pallets $m$ and others as a parameter $\kappa(I)$ from the instance
$I$, in order to obtain the following parameterized problem.

\begin{desctight}
\item[Name] $\kappa(I)$-{\sc FIFO Stack-up}

\item[Instance] A list $Q = (q_1, \ldots, q_k)$ of sequences and a positive
integer $p$.

\item[Parameter] $\kappa(I)$

\item[Question] Is there a processing of $Q$, such that in each configuration during
the processing of $Q$ at most $p$ pallets are open?
\end{desctight}

We give two useful instruments to obtain fpt-results used in the next subsections.
The first result gives a connection between the existence of ILP formulations 
and fixed-parameter tractability w.r.t.\ the number of variables. 

\begin{theorem}[\cite{Kan87},\cite{Len83}]\label{th-lenstra}
If for some problem $\Pi$ there is an ILP using $\ell$ variables, then $\Pi$ can
be solved for every instance $I$ in time $\bigo(|I|\cdot \ell^{\bigo(\ell)})$.
\end{theorem}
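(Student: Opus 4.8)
The plan is to reduce the problem to deciding integer feasibility of a rational polyhedron and then to solve that feasibility question by an algorithm whose running time is polynomial in the input size once the number of variables $\ell$ is fixed. First I would observe that an ILP with $\ell$ variables asks whether a polyhedron $P=\{x\in\IR^\ell ~|~ Ax\leq b\}$ contains a point of $\ZZ^\ell$, where the encoding length of $A$ and $b$ is $\bigo(|I|)$. An optimization objective can be stripped away by binary search over its value, adding one constraint in each round, so it suffices to treat the pure feasibility question; bounding $P$ (or detecting unboundedness of the integer hull) is a standard preprocessing step that does not affect the dimension.

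The core of the argument is Lenstra's recursive scheme driven by lattice reduction. The idea is that for the polytope $P$ exactly one of two situations occurs: either $P$ is large enough to be guaranteed to contain a lattice point, or $P$ is thin in some integral direction. To make this precise I would invoke Khinchine's flatness theorem: there is a function $f(\ell)$ depending only on the dimension such that every convex body $K\subseteq\IR^\ell$ either contains an integer point or admits a nonzero $d\in\ZZ^\ell$ with $\max_{x\in K} d^\top x-\min_{x\in K} d^\top x\leq f(\ell)$. An explicit flat direction $d$ can be produced by computing a reduced basis (via LLL) of the lattice relative to an ellipsoid approximating $P$; this is the step that converts the existence statement into an algorithm running in time polynomial in $|I|$.

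Given a flat direction $d$, all integer points of $P$ lie on at most $f(\ell)+1$ parallel lattice hyperplanes $d^\top x=c$ for consecutive integers $c$ in the range $[\lceil\min_{x\in P}d^\top x\rceil,\lfloor\max_{x\in P}d^\top x\rfloor]$. After an integral unimodular change of coordinates, restricting to each such hyperplane turns the feasibility question into an ILP in dimension $\ell-1$, so I would recurse on the dimension. The recursion has depth at most $\ell$ and branches into at most $f(\ell)+1$ subproblems at each node, so the search tree has at most $(f(\ell)+1)^\ell$ leaves, which is $\ell^{\bigo(\ell)}$ once the flatness bound is chosen as in Kannan's refinement. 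Each node performs one lattice reduction and a few linear-programming computations, all polynomial in $|I|$, yielding the claimed bound $\bigo(|I|\cdot \ell^{\bigo(\ell)})$.

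The main obstacle will be establishing and effectivizing the flatness theorem: both proving, via the geometry of numbers and Minkowski's theorems, that the flatness constant depends on $\ell$ alone, and guaranteeing that a flat direction is actually computable in polynomial time through a provably good reduction. Tightening the branching factor so that the tree size is $\ell^{\bigo(\ell)}$ rather than something worse is precisely Kannan's contribution, and the bookkeeping needed to keep each recursive subproblem's encoding length polynomial in the original $|I|$ is routine but indispensable.
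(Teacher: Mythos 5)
The paper does not actually prove this statement: it is imported as a black box from Lenstra (1983) and Kannan (1987), and your sketch is a faithful outline of precisely the argument in those cited works (strip the objective by binary search, use the flatness theorem made effective by lattice basis reduction on an ellipsoidal rounding of $P$, branch on at most $f(\ell)+1$ lattice hyperplanes, recurse on the dimension, and invoke Kannan's polynomial bound on the flatness constant to get an $\ell^{\bigo(\ell)}$-size search tree). The one small mismatch is that your accounting gives $\ell^{\bigo(\ell)}\cdot|I|^{\bigo(1)}$ rather than the linear dependence $\bigo(|I|\cdot\ell^{\bigo(\ell)})$ stated in the theorem; closing that gap is part of the cited refinements rather than a defect in your approach, but you should flag it if you state the running time in this exact form.
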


The next remark states the running time of an exhaustive search 
for problems in NP. The correctness follows immediately by the definition
of NP-completeness.

\begin{remark}[\cite{Woe04}]\label{rem-woe}
If some problem $\Pi$ belongs to NP, then there is some polynomial $q$ such that  $\Pi$ can
be solved for every instance $I$ in time $\bigo(2^{q(|I|)})$.
\end{remark}

\subsubsection{XP-Algorithms} \label{sec-xp}

\paragraph{Parameterization by number of sequences $k$}
By Theorem \ref{th-processing-gr} we obtain an 
xp-algorithm with respect to the parameter $k$.

\begin{theorem}
There is an xp-algorithm that solves $k$-{\sc FIFO Stack-Up}
in time $\bigo(k \cdot (N+1)^k)$.
\end{theorem}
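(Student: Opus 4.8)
The plan is to obtain this result directly from Theorem \ref{th-processing-gr}, which already establishes that the {\sc FIFO Stack-Up} problem can be solved in time $\bigo(k \cdot (N+1)^k)$ by constructing the processing graph and running the breadth first search algorithm {\sc Optimal Bin Solution}. The only remaining task is to verify that this running time has the shape $f(\kappa(I)) \cdot |I|^{g(\kappa(I))}$ required by the definition of an xp-algorithm, taking the parameter to be $\kappa(I) = k$.

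First I would record the trivial bound $N \leq n$, since the maximum sequence length cannot exceed the total number of bins. Combined with Equation (\ref{eq-inst}), which gives $|I| \in \bigo(n \cdot \log_2(m) + \log_2(p))$ and in particular $n \leq |I|$, this yields $(N+1)^k \leq (n+1)^k \leq (|I|+1)^k$. Hence $(N+1)^k$ is bounded by a polynomial in the input size whose degree is linear in the parameter $k$, i.e.\ $(N+1)^k \in |I|^{\bigo(k)}$.

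Next I would absorb the leading factor $k$. By Remark \ref{remark-practical-br} we have $k < m \leq n \leq |I|$, so this factor contributes at most one further power of $|I|$. Putting the two estimates together, the running time of Theorem \ref{th-processing-gr} takes the form $\bigo(|I|^{g(k)})$ with $g(k) = \bigo(k)$ and a constant function $f(k)$, which is exactly the form demanded in the definition of an xp-algorithm with respect to $\kappa(I) = k$. Thus the algorithm underlying Theorem \ref{th-processing-gr} is an xp-algorithm for $k$-{\sc FIFO Stack-Up}, and the stated time bound $\bigo(k \cdot (N+1)^k)$ is simply carried over verbatim.

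I expect no genuine obstacle here, as the statement is essentially a reinterpretation of Theorem \ref{th-processing-gr} through the lens of parameterized complexity. The only point requiring minor care is the passage from the measure $(N+1)^k$ to a polynomial in the encoded input size $|I|$, for which the elementary bounds $N \leq n$ and $n \leq |I|$ suffice.
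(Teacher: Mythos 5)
Your proposal is correct and follows the same route as the paper, which proves this theorem simply by invoking Theorem \ref{th-processing-gr}; your additional verification that $\bigo(k\cdot(N+1)^k)$ has the required xp-form $f(k)\cdot|I|^{g(k)}$ via $N\leq n\leq |I|$ and $k<m\leq n$ is a routine but sound elaboration of what the paper leaves implicit.
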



\paragraph{Parameterization by number of stack-up places $p$}
The directed pathwidth of a digraph $G=(V,A)$ can be computed in time 
$\bigo(|A|\cdot |V|^{\dpw(G)+1})\subseteq \bigo(|V|^{\dpw(G)+3})$ by \cite{Tam11}
and the sequence graph $G_Q$ can be computed in time $\bigo(n+k\cdot m^2)$ which
can be bounded by $\bigo(n+ m^3)$ due to Remark \ref{remark-practical-br}.
Thus the  {\sc FIFO Stack-Up} problem can be solved
by Theorem \ref{th-pw-gbefore} in time $\bigo(n+m^{p+2})$.

\begin{theorem}
There is an xp-algorithm that solves $p$-{\sc FIFO Stack-Up}
in time $\bigo(n+m^{p+2})$.
\end{theorem}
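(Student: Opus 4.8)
The plan is to reduce the decision question directly to computing the directed pathwidth of the sequence graph and then to invoke Tamaki's parameterized algorithm. By Theorem \ref{th-pw-gbefore}, the list $Q$ can be processed with at most $p$ stack-up places if and only if the sequence graph $G_Q$ has directed pathwidth at most $p-1$. So the whole task splits into two parts: building $G_Q$, and testing whether $\dpw(G_Q) \leq p-1$.

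First I would construct $G_Q = (V,A)$ by algorithm {\sc Create Sequence Graph} (Figure \ref{fig:algorithm5}), which runs in time $\bigo(n + k\cdot m^2)$. Using the practical assumption $k < m$ from Remark \ref{remark-practical-br}, this bound simplifies to $\bigo(n + m^3)$. The key structural observation is that $G_Q$ has exactly $|V| = m$ vertices, since its vertex set is $\PLS(Q)$, and at most $|A| \in \bigo(m^2)$ arcs.

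Then I would run the algorithm of Tamaki \cite{Tam11}, which decides $\dpw(G) \leq w$ in time $\bigo(|A| \cdot |V|^{w+1})$. Setting $w = p-1$, $|V| = m$ and $|A| \in \bigo(m^2)$, the test runs in time $\bigo(m^2 \cdot m^{(p-1)+1}) = \bigo(m^{p+2})$. Adding the construction time yields a total of $\bigo(n + m^3 + m^{p+2})$, which collapses to $\bigo(n + m^{p+2})$ because $p$ is a positive integer and hence $m^3 \leq m^{p+2}$.

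There is no genuine obstacle here; the proof is essentially bookkeeping on top of the two cited results. The one point that requires care is matching the exponent: the directed pathwidth we test against is $p-1$, so the exponent $w+1$ in Tamaki's bound becomes exactly $p$, and only the additional factor $|A| \in \bigo(m^2)$ raises it to $p+2$. Finally I would note that for each fixed $p$ the running time is polynomial in the input size with an exponent depending on $p$, so the algorithm is indeed an xp-algorithm with respect to the parameter $p$.
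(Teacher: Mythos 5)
Your proposal is correct and follows essentially the same route as the paper: build $G_Q$ with {\sc Create Sequence Graph} in time $\bigo(n+k\cdot m^2)\subseteq\bigo(n+m^3)$ using $k<m$ from Remark \ref{remark-practical-br}, then apply Tamaki's $\bigo(|A|\cdot |V|^{w+1})$ test with $w=p-1$ via Theorem \ref{th-pw-gbefore}, giving $\bigo(n+m^{p+2})$ overall. Your bookkeeping on the exponent (the factor $|A|\in\bigo(m^2)$ accounting for the $+2$) matches the paper's bound $\bigo(|A|\cdot|V|^{\dpw(G)+1})\subseteq\bigo(|V|^{\dpw(G)+3})$ exactly.
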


\subsubsection{FPT-Algorithms} \label{sec-fpt}

\paragraph{Parameterization by number of bins $n$}
We generate all possible bin orders $(b_{\pi(1)},\ldots,b_{\pi(n)})$
and verify, whether this can be a processing. This leads to a simple but
very inefficient algorithm with running time $\bigo(n^2\cdot n!)$,
where $\bigo(n^2)$ time is needed for each verification.

\begin{theorem}
There is an fpt-algorithm that solves $n$-{\sc FIFO Stack-Up}
in time $\bigo(n!\cdot n^2)$.
\end{theorem}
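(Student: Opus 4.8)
The plan is to follow the brute-force idea already sketched above: enumerate every candidate order in which the $n$ bins could be removed and test each one. First I would precompute, in a preprocessing phase, all values $\FI(q_j,t)$ and $\LA(q_j,t)$ for $q_j \in Q$ and $t \in \PLS(Q)$; by Remark \ref{remark-open} this costs only $\bigo(n + k\cdot m)$ time, which is dominated by the rest of the algorithm. Then I would iterate over all $n!$ bijections $\pi:[n]\to[n]$, each of which is interpreted as a candidate bin solution $(b_{\pi(1)}, \ldots, b_{\pi(n)})$, i.e.\ a proposed order of removal.

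For a fixed $\pi$ the verification has two parts. First, $\pi$ must respect the FIFO discipline: within each sequence $q_\ell$ the bins must occur in $\pi$ in their original left-to-right order. This can be checked by scanning the candidate order once and maintaining, for each sequence, a pointer to the next bin that is allowed to be removed; a removal is legal exactly when the chosen bin is the current head of its sequence, and any violation causes immediate rejection. Second, I would track the number of open pallets step by step: starting from $\# \OPEN((0,\ldots,0)) = 0$, each removal updates the count by $+1$, $-1$, or $0$ according to Equation (\ref{transStep}), using the precomputed $\FI$ and $\LA$ values, in $\bigo(k)$ time per step. If the running count ever exceeds $p$, the candidate is rejected. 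A candidate that survives both checks witnesses a processing of $Q$ with at most $p$ stack-up places, so the algorithm accepts if and only if some $\pi$ passes.

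Correctness follows because every processing of $Q$ induces a bin solution, and conversely every bijection passing both tests is a valid processing using at most $p$ places; since all $n!$ bijections are examined, a valid processing is found whenever one exists. For the running time, each of the $n!$ candidates is verified in $\bigo(n\cdot k)\subseteq\bigo(n^2)$ time, the FIFO scan and the incremental open-pallet bookkeeping both being linear in the number of steps times $\bigo(k)$, which yields a total of $\bigo(n!\cdot n^2)$. Finally this is genuinely an fpt-algorithm with respect to $n$: the factor $n!$ is a function of the parameter alone, while $n^2$ is polynomial in the input size because $n\leq|I|$ by Equation (\ref{eq-inst}). I do not expect a real obstacle here; the only point needing a little care is confirming that both verification tasks fit into $\bigo(n^2)$ per permutation, which the incremental formula (\ref{transStep}) delivers.
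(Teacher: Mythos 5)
Your proposal is correct and follows essentially the same route as the paper: enumerate all $n!$ candidate bin orders and verify each one in $\bigo(n\cdot k)\subseteq\bigo(n^2)$ time. The paper only sketches the verification, whereas you spell out the FIFO-consistency scan and the incremental open-pallet count via Equation (\ref{transStep}), which is exactly the intended mechanism.
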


Alternatively we can apply Theorem \ref{th-lenstra}, Remark \ref{remark-linearisierelp1}, and Theorem \ref{th-lp-n}.
Integer program (\ref{c1})-(\ref{c3}),(\ref{xxz1})-(\ref{xxz7})  has at most $4n^2+1\in \bigo(n^2)$ variables and a polynomial
number of constraints.
From Theorem \ref{th-lenstra} it follows that the {\sc FIFO Stack-Up}
problem is fixed-parameter tractable for the parameter $n$.

\begin{theorem}
There is an fpt-algorithm that solves $n$-{\sc FIFO Stack-Up}
in time $\bigo(|I| \cdot (4n^2+1)^{\bigo(n^2)})$.
\end{theorem}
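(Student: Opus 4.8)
The plan is to derive this purely as a corollary of the bin-solution integer program and Lenstra's theorem, with no new combinatorial work: the only thing I really need to verify is that the total number of integer variables is bounded by a function of the parameter $n$ alone, and then to substitute that bound into the running time promised by Theorem \ref{th-lenstra}.

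First I would recall that correctness is already settled. By Theorem \ref{th-lp-n}, the integer linear program (\ref{c1})--(\ref{c3}),(\ref{xxz1})--(\ref{xxz7}) computes the minimum number of stack-up places $p$ in a processing of $Q$. So it suffices to bound the number of variables of this program and invoke Theorem \ref{th-lenstra}, which states that an ILP with $\ell$ variables can be solved in time $\bigo(|I|\cdot \ell^{\bigo(\ell)})$.

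Next I would count the variables. The permutation part $\text{\sc Permutation}(n, x_i^j)$ contributes the $n^2$ binary variables $x_i^j$, while $\text{\sc SequenceOrder}(Q, x_i^j)$ introduces only constraints and no new variables. The linearization of Equation (\ref{c4}) carried out in Remark \ref{remark-linearisierelp1} adds three families of auxiliary binary variables $g(t,c)$, $h(t,c)$, and $f(t,c)$, each indexed by $t\in[m]$ and $c\in[n-1]$, hence each of size $m\cdot(n-1)$. Using $m\leq \frac{n}{2}<n$ from Remark \ref{remark-practical-br}, every such family has at most $n^2$ variables, so the three families contribute at most $3n^2$ variables. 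Together with the single objective variable $p$ this yields at most $n^2+3n^2+1 = 4n^2+1 \in \bigo(n^2)$ integer variables, exactly as claimed in the preceding paragraph; the number of constraints is polynomial but irrelevant for the bound, since Theorem \ref{th-lenstra} depends only on the variable count.

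Finally I would plug $\ell = 4n^2+1$ into Theorem \ref{th-lenstra}, giving a running time of $\bigo\big(|I|\cdot (4n^2+1)^{\bigo(4n^2+1)}\big) = \bigo\big(|I|\cdot (4n^2+1)^{\bigo(n^2)}\big)$, which is an fpt running time with respect to the parameter $n$ because the exponential factor depends only on $n$. I expect no genuine obstacle here; the single point that needs care is the variable-count bookkeeping, in particular using $m\leq \tfrac{n}{2}$ to collapse the $m(n-1)$-sized auxiliary families into the uniform $\bigo(n^2)$ bound, so that $\ell$ is certified to be a function of the parameter alone.
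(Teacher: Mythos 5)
Your proposal is correct and follows exactly the paper's own argument: the paper likewise bounds the variable count of the bin-solution ILP by $n^2$ (for the $x_i^j$) plus three auxiliary families of size $m(n-1)\leq n^2$ each plus the objective variable $p$, for a total of $4n^2+1\in\bigo(n^2)$, and then applies Theorem \ref{th-lenstra} together with Theorem \ref{th-lp-n}. The only cosmetic difference is that you justify $m(n-1)\leq n^2$ via $m\leq\frac{n}{2}$ from Remark \ref{remark-practical-br}, whereas the paper states the bound directly in Remark \ref{remark-linearisierelp1}; the substance is identical.
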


Alternatively we can apply Remark \ref{rem-woe}.
Since we assume that $p\leq m \leq n$, the size of an instance $I$
can be bounded by $|I|\in\bigo(n\cdot\log(n))$, see Equation (\ref{eq-inst}), 
and by Remark \ref{rem-woe}
it follows that the {\sc FIFO Stack-Up}
problem is fixed-parameter tractable for the parameter $n$.

\begin{theorem}
There is an fpt-algorithm that solves $n$-{\sc FIFO Stack-Up}
in time $\bigo(2^{q(n\cdot \log(n))})$ for some polynomial $q$.
\end{theorem}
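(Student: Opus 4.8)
The plan is to derive this bound as a direct instantiation of the generic exhaustive-search result for NP problems (Remark \ref{rem-woe}) combined with the instance-size estimate from Equation (\ref{eq-inst}). First I would record that the {\sc FIFO Stack-Up} problem lies in NP: by Theorem \ref{cor-hard} it is NP-complete, and hence in particular a member of NP. Consequently Remark \ref{rem-woe} applies and yields a polynomial $q$ such that every instance $I$ can be decided in time $\bigo(2^{q(|I|)})$. At this point the only remaining work is to rewrite the exponent purely in terms of the parameter $n$.

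Second, I would bound $|I|$ in terms of $n$. By Equation (\ref{eq-inst}) we have $|I| \in \bigo(n\cdot \log_2(m) + \log_2(p))$. Invoking the relations $p \leq m \leq n$ that hold for the instances under consideration (Remark \ref{remark-practical-br}, where in particular $m \leq \frac{n}{2} < n$), both $\log_2(m)$ and $\log_2(p)$ are in $\bigo(\log(n))$, so the instance size collapses to $|I| \in \bigo(n\cdot\log(n))$. Substituting this bound into $\bigo(2^{q(|I|)})$ and absorbing the resulting constant factor of the argument into the polynomial $q$ gives precisely the claimed running time $\bigo(2^{q(n\cdot\log(n))})$.

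Finally, I would confirm that this constitutes an fpt-algorithm for the parameter $\kappa(I)=n$. Since the entire running-time bound $2^{q(n\cdot\log(n))}$ is a computable function of $n$ alone, it trivially has the form $f(n)\cdot |I|^{c}$ required in the definition of an fpt-algorithm (one may take $c=0$ and $f(n)=2^{q(n\cdot\log(n))}$), so fixed-parameter tractability with respect to $n$ follows. There is essentially no technical obstacle here; the argument is a routine chaining of two previously stated facts. The only point deserving a moment of care is the legitimacy of the assumption $p \leq m \leq n$, which is exactly what Remark \ref{remark-practical-br} provides for all instances we consider, so using it does not weaken the statement.
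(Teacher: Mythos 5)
Your argument is exactly the paper's: it invokes Remark \ref{rem-woe} (membership in NP gives a $\bigo(2^{q(|I|)})$ algorithm), bounds $|I|\in\bigo(n\cdot\log(n))$ via Equation (\ref{eq-inst}) together with $p\leq m\leq n$ from Remark \ref{remark-practical-br}, and concludes fixed-parameter tractability in the parameter $n$. The additional closing check that the bound has the form $f(n)\cdot|I|^c$ is a harmless elaboration of what the paper leaves implicit.
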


\paragraph{Parameterization by number of pallets $m$}
Let $(t_1, \ldots, t_m)$ be a permutation of the pallets of $\PLS(Q)$. It
can be checked in time $\bigo(n \cdot k) \subseteq \bigo(n^2)$ whether this
permutation describes a processing of $Q$ using at most $p$ stack-up places,
see Section \ref{SCdec}.
There are $m!$ permutations of the pallets. If we enumerate all $m!$ possible pallet
orders and check each order, we can solve the  {\sc FIFO Stack-Up} problem.

\begin{theorem}
There is an fpt-algorithm that solves $m$-{\sc FIFO Stack-Up}
in time $\bigo(n^2 \cdot m!)$.
\end{theorem}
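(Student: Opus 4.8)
The plan is to reduce the search for an optimal processing to an exhaustive enumeration over the at most $m!$ possible \emph{pallet solutions}, exploiting the machinery developed for the decision graph in Section \ref{SCdec}. Recall that a pallet solution $T = (t_1, \ldots, t_m)$ is simply the order in which the $m$ pallets of $\PLS(Q)$ are opened during some processing; hence every pallet solution is a permutation of $\PLS(Q)$, and there are exactly $m!$ candidate permutations to try.

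First I would set up the enumeration: generate every permutation $(t_1, \ldots, t_m)$ of $\PLS(Q)$. For each candidate I would invoke the conversion procedure from Section \ref{SCdec}, which, starting from the initial configuration, repeatedly opens the next prescribed pallet $t_i$ at the $i$-th decision configuration, removes a bin for it, and then performs all automatic transformation steps until the next decision configuration is reached. The candidate is rejected if at some point no bin for the prescribed pallet $t_i$ can be removed, or if more than $p$ pallets are simultaneously open; otherwise it yields a valid processing of $Q$ with at most $p$ stack-up places. By the analysis in Section \ref{SCdec}, each such check runs in time $\bigo(n \cdot k)$, where each of the at most $n$ transformation steps is handled in time $\bigo(k)$ via Equation (\ref{transStep}).

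The two correctness directions then need to be spelled out. Soundness is immediate: whenever a permutation passes the check, the conversion procedure has explicitly produced a processing that never exceeds $p$ open pallets. For completeness I would argue that any processing using at most $p$ stack-up places induces a well-defined order in which its pallets are opened, and that this order, taken as the candidate permutation, is exactly the one the conversion procedure reconstructs; this is precisely the correspondence between pallet solutions and processings established in Section \ref{SCdec}. I expect this completeness argument to be the only genuinely delicate point, since one must check that the greedy execution of automatic steps does not diverge from the processing that realized the given opening order, so that no feasible opening order is overlooked.

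Finally, I would assemble the running time. Since by Remark \ref{remark-practical-br} we have $k < m \le \frac{n}{2} < n$, each check costs $\bigo(n \cdot k) \subseteq \bigo(n^2)$, and multiplying by the $m!$ candidates gives the claimed bound $\bigo(n^2 \cdot m!)$. As $m!$ depends only on the parameter $m$ while $n^2$ is polynomial in the input size $|I| \in \bigo(n \cdot \log(n))$, this is indeed an fpt-algorithm with respect to $m$.
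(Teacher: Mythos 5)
Your proposal is correct and follows essentially the same route as the paper: enumerate all $m!$ pallet permutations and, for each, run the $\bigo(n \cdot k) \subseteq \bigo(n^2)$ conversion/check procedure from Section \ref{SCdec}, rejecting when the prescribed pallet cannot be opened or more than $p$ pallets are open. The paper states this in three sentences and leaves the soundness/completeness correspondence between pallet solutions and processings implicit (deferring to Section \ref{SCdec}), whereas you spell it out; the content is the same.
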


Alternatively we can apply Theorem \ref{th-lenstra}, Remark \ref{remark-linearisierelp2}, and Theorem \ref{th-lp-m}.
Integer linear program (\ref{f-path})-(\ref{c2-path}),(\ref{x1})-(\ref{xx2}) has at most $m^4+2m^2+1\in\bigo(m^4)$ variables and 
a polynomial number of constraints.
From Theorem \ref{th-lenstra} it follows that the {\sc FIFO Stack-Up}
problem is fixed-parameter tractable for the parameter $m$.

\begin{theorem}
There is an fpt-algorithm that solves $m$-{\sc FIFO Stack-Up}
in time $\bigo(|I| \cdot (m^4+2m^2+1)^{\bigo(m^4)})$.
\end{theorem}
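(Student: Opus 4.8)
The plan is to invoke the integer linear programming characterization of the problem together with Lenstra's theorem on ILPs with few variables. By Theorem~\ref{th-lp-m}, the pallet solution program consisting of the objective (\ref{f-path}), the constraints (\ref{c1-path})--(\ref{c2-path}), and the linearization constraints (\ref{x1})--(\ref{xx2}) from Remark~\ref{remark-linearisierelp2} correctly computes the minimum number $p = w + 1$ of stack-up places for any list $Q$ of sequences. Since this is an optimization formulation, I would first turn it into a decision procedure for $m$-{\sc FIFO Stack-Up} in the obvious way: on input $(Q, p)$ we solve the ILP and accept if and only if the optimal value $w$ satisfies $w + 1 \leq p$.

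The key step is to count the variables of this ILP precisely, as this is exactly the quantity that Theorem~\ref{th-lenstra} charges in the exponent. The permutation block $\text{\sc Permutation}(m, x_i^j)$ uses $m^2$ binary variables $x_i^j$. The conjunction variables $X(i,i',j,j')$ introduced in (\ref{x1})--(\ref{x3}) range over $i,i',j,j' \in [m]$, contributing $m^4$ variables. The disjunction variables $Y(j,c)$ of (\ref{xx1})--(\ref{xx2}) range over $j, c \in [m-1]$ with $j \leq c$, contributing $\bigo(m^2)$ variables; together with the single objective variable $w$ this gives a total of at most $m^4 + 2m^2 + 1 \in \bigo(m^4)$ variables. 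I would also note that the number of constraints is polynomial in $m$, so the encoding length of the program stays polynomial in $|I|$.

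With the variable count in hand, the result is immediate: applying Theorem~\ref{th-lenstra} with $\ell = m^4 + 2m^2 + 1$ yields a running time of $\bigo(|I| \cdot \ell^{\bigo(\ell)})$, and since $\bigo(\ell) = \bigo(m^4)$ this is exactly $\bigo(|I| \cdot (m^4+2m^2+1)^{\bigo(m^4)})$. Because the exponent depends only on the parameter $m$ while the factor $|I|$ is linear, this is an fpt-algorithm with respect to $m$.

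I do not expect a genuine obstacle here, since the heavy lifting --- correctness of the formulation and the tractability of bounded-variable ILPs --- is already supplied by Theorem~\ref{th-lp-m} and Theorem~\ref{th-lenstra}, respectively. The only point requiring care is the bookkeeping of the variable count, in particular remembering to include the auxiliary linearization variables $X(i,i',j,j')$ and $Y(j,c)$ from Remark~\ref{remark-linearisierelp2} rather than only the $m^2$ permutation variables; it is the $m^4$ conjunction variables that dominate and produce the stated exponent.
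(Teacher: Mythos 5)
Your proposal is correct and follows exactly the route the paper takes: count the variables of the pallet-solution ILP from Theorem~\ref{th-lp-m} together with the auxiliary linearization variables of Remark~\ref{remark-linearisierelp2} ($m^2$ permutation variables, $m^4$ conjunction variables $X(i,i',j,j')$, at most $m^2$ disjunction variables $Y(j,c)$, plus $w$), and then apply Theorem~\ref{th-lenstra}. The variable bookkeeping and the resulting bound $\bigo(|I|\cdot(m^4+2m^2+1)^{\bigo(m^4)})$ match the paper's argument.
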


We also can apply the relation to directed pathwidth given in Theorem \ref{th-pw-gbefore}.
Since the directed pathwidth of a digraph $G=(V,A)$ can be computed in
time $\bigo(1.89^{|V|})$ by \cite{KKKTT12}, and since the sequence graph
$G_Q$ can be constructed in time $\bigo(n+k\cdot m^2)$  by  algorithm
{\sc Create Sequence Graph} shown in Figure \ref{fig:algorithm5}, the next result follows
by Theorem \ref{th-pw-gbefore}.

\begin{theorem}
There is an fpt-algorithm that solves $m$-{\sc FIFO Stack-Up}
in time 
$\bigo(n+k\cdot m^2 + 1.89^m)$.
\end{theorem}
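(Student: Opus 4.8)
The plan is to combine three results already established in the excerpt to obtain the claimed running time of $\bigo(n+k\cdot m^2 + 1.89^m)$. The key observation is that by Theorem \ref{th-pw-gbefore}, solving the {\sc FIFO Stack-Up} problem is equivalent to computing the directed pathwidth of the sequence graph $G_Q$ and comparing it to $p-1$. So the overall algorithm has exactly two phases: first construct $G_Q$, then compute its directed pathwidth.

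First I would construct the sequence graph $G_Q=(V,A)$ using algorithm {\sc Create Sequence Graph} from Figure \ref{fig:algorithm5}. As stated immediately before Theorem \ref{th-pw-gbefore}, this construction runs in time $\bigo(n+k\cdot m^2)$, which accounts for the first two summands of the claimed bound. Note that $G_Q$ has exactly $|V|=m$ vertices, since its vertex set is $\PLS(Q)$ and $m$ is the number of pallets.

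Second I would invoke the algorithm of \cite{KKKTT12}, which computes the directed pathwidth of a digraph $G=(V,A)$ in time $\bigo(1.89^{|V|})$. Since $|V|=m$ here, this phase runs in time $\bigo(1.89^m)$, giving the third summand. By Theorem \ref{th-pw-gbefore}, $G_Q$ has directed pathwidth at most $p-1$ if and only if $Q$ can be processed with at most $p$ stack-up places, so comparing the computed value $\dpw(G_Q)$ against $p-1$ answers the decision problem. Adding the two phases yields the total running time $\bigo(n+k\cdot m^2 + 1.89^m)$, and since $1.89^m$ is of the form $f(m)$ times a polynomial in the input size, this is an fpt-algorithm with respect to the parameter $m$.

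Strictly speaking there is almost no obstacle here, since both building blocks are cited as black boxes; the only point requiring a moment's care is verifying that the parameter of the {\em graph} algorithm ($|V|$) coincides with the parameter $m$ of the {\em stack-up} problem, which follows directly from $V=\PLS(Q)$. I would also note for completeness that this running time dominates the naive $m!$ bound for large $m$, since $1.89^m$ grows much more slowly than $m!$, which is the reason this variant is worth stating separately.
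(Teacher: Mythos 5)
Your proposal is correct and matches the paper's own argument exactly: construct $G_Q$ via {\sc Create Sequence Graph} in time $\bigo(n+k\cdot m^2)$, then apply the $\bigo(1.89^{|V|})$ directed-pathwidth algorithm of \cite{KKKTT12} to the $m$-vertex graph $G_Q$, and conclude via Theorem \ref{th-pw-gbefore}. No gaps.
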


%
%

Alternatively we can apply the relation to directed pathwidth given in 
Theorem \ref{th-pw-gbefore} and Proposition \ref{prop}.
Let $(Q,p)$ be an instance for the {\sc FIFO Stack-Up} problem with values
$(n,m,k,N)$. We start to compute from $Q$ the sequence graph $G_Q=(V,A)$ 
in time $\bigo(n+k\cdot m^2)$. Then from $G_Q$ we build 
up the sequence system $Q'_{G_Q}$  with values $(n',m',k',N')$ in time $\bigo(|V|+|A|)\subseteq\bigo(m^2)$.
By Theorem \ref{th-pw-gbefore} and Proposition \ref{prop} list $Q$ can be processed
with at most  $p$  stack-up places, if and only if list $Q'$ can be processed
with at most  $p$  stack-up places. Further we know that $k'\in\bigo(m^2)$ and
$N'=2$. By applying Theorem  \ref{th-processing-gr} for $Q'$ 
we can solve the {\sc FIFO Stack-Up} problem
in time $\bigo(k'\cdot (N'+1)^{k'})\subseteq \bigo(m^2 \cdot 3^{m^2})$.

\begin{theorem}
There is an fpt-algorithm that solves $m$-{\sc FIFO Stack-Up}
in time 
$\bigo(m^2 \cdot 3^{m^2}+n+k\cdot m^2)$.
\end{theorem}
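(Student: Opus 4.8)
The plan is to sidestep the dependence on the original parameters $k$ and $N$ by routing the instance through its sequence graph and back to a \emph{canonical} sequence system whose shape is controlled by $m$ alone, and then to invoke the processing-graph algorithm already established in Theorem \ref{th-processing-gr}. Concretely, given an instance $(Q,p)$ with values $(n,m,k,N)$, I would first build the sequence graph $G_Q=(V,A)$ by Algorithm {\sc Create Sequence Graph} of Figure \ref{fig:algorithm5}, at cost $\bigo(n+k\cdot m^2)$. Since $G_Q$ is a simple digraph on the $m$ pallets, it has $|V|=m$ and $|A|\le m(m-1)\in\bigo(m^2)$.

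Next I would reconstruct from $G_Q$ the sequence system $Q'=Q'_{G_Q}$ in time $\bigo(|V|+|A|)\subseteq\bigo(m^2)$, simply by emitting one two-bin sequence $[u,v]$ for each arc $(u,v)\in A$. The derived instance then has maximum sequence length $N'=2$, number of sequences $k'=|A|\in\bigo(m^2)$, and number of pallets $m'\le m$, so its description size is now a function of $m$ only. These bounds $N'=2$ and $k'\le m^2$ are exactly what make the subsequent call to Theorem \ref{th-processing-gr} cheap.

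The correctness of this detour is the main obstacle and is where the earlier structural results do the work. By Theorem \ref{th-pw-gbefore}, $Q$ is processable with at most $p$ stack-up places iff $\dpw(G_Q)\le p-1$. By Proposition \ref{prop} the sequence graph of the reconstructed system satisfies $G_{Q'}=G_{Q'_{G_Q}}=G_Q$, so a second application of Theorem \ref{th-pw-gbefore}, this time to $Q'$, shows that $Q'$ is processable with at most $p$ stack-up places iff $\dpw(G_{Q'})=\dpw(G_Q)\le p-1$. Chaining these two equivalences yields that $(Q,p)$ and $(Q',p)$ are simultaneously yes-instances, so it suffices to decide the transformed instance; the delicate point to verify is precisely that the rebuild preserves the optimal number of stack-up places, which hinges on the identity $G_{Q'}=G_Q$ rather than on any reconstruction of the original bins.

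Finally I would run the processing-graph algorithm of Theorem \ref{th-processing-gr} on $Q'$; its running time $\bigo(k'\cdot(N'+1)^{k'})$ specializes, using $N'=2$ and $k'\le m^2$, to $\bigo(m^2\cdot 3^{m^2})$. Adding the two preprocessing costs gives the claimed total $\bigo(m^2\cdot 3^{m^2}+n+k\cdot m^2)$. Since the exponential factor $m^2\cdot 3^{m^2}$ depends on $m$ only while the remaining terms are polynomial in the input size (using $m\le n$ from Remark \ref{remark-practical-br} to absorb $k\cdot m^2$), the procedure is an fpt-algorithm with respect to the parameter $m$, and the running-time bookkeeping beyond establishing $N'=2$ and $k'\in\bigo(m^2)$ is routine.
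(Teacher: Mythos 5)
Your proposal is correct and follows essentially the same route as the paper: compute $G_Q$ in time $\bigo(n+k\cdot m^2)$, rebuild the sequence system $Q'_{G_Q}$ with $N'=2$ and $k'\in\bigo(m^2)$, justify the equivalence of the two instances via Theorem \ref{th-pw-gbefore} together with Proposition \ref{prop}, and then invoke Theorem \ref{th-processing-gr} on $Q'$ to get the $\bigo(m^2\cdot 3^{m^2})$ term. Your explicit chaining of the two directed-pathwidth equivalences is just a more detailed spelling-out of the correctness step the paper states in one sentence.
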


\paragraph{Parameterization by combined parameters}

In the case of $\w[1]$-hardness with respect to some parameter $\ell$ a natural
question is whether the problem remains hard for {\em combined} parameters, i.e.
parameters $(\ell,\ell')$ that consists of two or even more parts of the input. 
Since the  existence of an fpt-algorithm w.r.t.\ parameter $k$ is open up to now,
we next conclude an fpt-algorithm with respect to parameter $(k,m)$,
by the result of Theorem \ref{th-main-a}.

\begin{theorem}
There is an fpt-algorithm that solves $(k,m)$-{\sc FIFO Stack-Up}
 in time $\bigo(n^2 \cdot k^m)$.
\end{theorem}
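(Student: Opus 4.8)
The plan is to derive this statement directly from Theorem~\ref{th-main-a}, so the only work is to recast the running time already obtained there into the form required by the definition of an fpt-algorithm with respect to the combined parameter $\kappa(I)=(k,m)$. Recall that Theorem~\ref{th-main-a} solves the {\sc FIFO Stack-Up} problem in time $\bigo(n^2\cdot k^m)$ by enumerating all $k^m$ sequence orders $(s_1,\ldots,s_m)$ with $s_i\in[k]$ and verifying each candidate in time $\bigo(n^2)$ as described in Section~\ref{SCdec}. I would simply invoke this algorithm and argue that its running time fits the fpt template.

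First I would split the bound $n^2\cdot k^m$ into a parameter-dependent factor and a polynomial factor. The factor $f(k,m):=k^m$ is a computable function depending only on the parameter $(k,m)$, while $n^2$ is the remaining factor. To confirm that $n^2$ is polynomial in the input size $|I|$, I would appeal to Equation~(\ref{eq-inst}), which gives $|I|\in\bigo(n\cdot\log_2(m)+\log_2(p))$; in particular $n\leq|I|$, so that $n^2\leq|I|^2$.

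Putting these together, the running time of the algorithm from Theorem~\ref{th-main-a} is at most $f(k,m)\cdot|I|^2$ with the fixed constant $c=2$. By the definition of an fpt-algorithm with respect to $\kappa(I)$, this establishes that the same algorithm is an fpt-algorithm for the parameter $(k,m)$, proving the claimed bound $\bigo(n^2\cdot k^m)$.

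I do not expect any genuine obstacle here: the entire content is the reinterpretation of an already-proved time bound, and the single nontrivial verification—that $n^2$ is polynomial in $|I|$—follows immediately from the input-size estimate in Equation~(\ref{eq-inst}). The value of the statement lies in contrasting it with the still-open question of fixed-parameter tractability for $k$ alone, showing that adding $m$ to the parameter yields tractability.
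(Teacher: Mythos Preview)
Your proposal is correct and matches the paper's approach exactly: the paper simply states this theorem as a direct consequence of Theorem~\ref{th-main-a} without further argument, and your write-up supplies precisely the missing verification that $k^m$ is the parameter-dependent factor while $n^2\le |I|^2$ is polynomial in the input size.
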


In practice $k$ is much smaller than $m$, since there are much 
fewer buffer conveyors than pallets, thus this solution is better 
than $\bigo(n^2 \cdot m!)$.
In order to show a better fpt-algorithm with respect to parameter $(k,m)$,
we consider the result of Theorem \ref{th-main}.

\begin{theorem}
There is an fpt-algorithm that solves $(k,m)$-{\sc FIFO Stack-Up}
in time $\bigo(n^2 \cdot (m+2)^k)$.
\end{theorem}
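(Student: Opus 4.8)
The plan is to observe that the claimed bound is an immediate reinterpretation of Theorem~\ref{th-main} in the language of parameterized complexity. First I would recall that Theorem~\ref{th-main}, via the breadth first search on the decision graph implemented in algorithm {\sc Find Optimal Pallet Solution}, already establishes that the {\sc FIFO Stack-Up} problem can be solved in time $\bigo(n^2 \cdot (m+2)^k)$. The only remaining task is therefore to verify that this running time fits the form $f(\kappa(I))\cdot |I|^c$ required by the definition of an fpt-algorithm with respect to the combined parameter $\kappa(I)=(k,m)$.

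To this end I would factor the bound as $(m+2)^k \cdot n^2$. The first factor $f(k,m) = (m+2)^k$ is a computable function that depends solely on the two parameters $k$ and $m$, and it carries the entire exponential blow-up. The second factor $n^2$ is polynomial in the total number of bins. Using the input-size estimate $|I|\in\bigo(n\cdot \log_2(m) + \log_2(p))$ from Equation~(\ref{eq-inst}), we have $n \leq |I|$ up to a constant factor, so $n^2$ is bounded by $\bigo(|I|^2)$. Hence the running time is at most $f(k,m)\cdot |I|^2$, which is exactly the required shape with the fixed exponent $c=2$, and the statement follows.

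I do not expect a genuine obstacle here: the whole argument is a repackaging of the already-proved time bound of Theorem~\ref{th-main}. The only point that deserves a moment of care is cleanly separating the parameter-dependent factor $(m+2)^k$ from the polynomial factor $n^2$ and confirming that $n$ is polynomially bounded in $|I|$ via the encoding bound (\ref{eq-inst}); once this is noted, the fpt-classification with respect to $(k,m)$ is immediate. It is also worth remarking, as the surrounding text does, that since in practice $k$ is much smaller than $m$, this bound $\bigo(n^2 \cdot (m+2)^k)$ is preferable both to the $\bigo(n^2 \cdot k^m)$ bound and to the $\bigo(n^2 \cdot m!)$ bound obtained earlier.
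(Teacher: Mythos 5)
Your proposal is correct and matches the paper's own (very brief) justification: the paper simply invokes Theorem~\ref{th-main}, and the fpt-classification follows because $(m+2)^k$ depends only on the combined parameter while $n^2$ is polynomial in the input size. Your added care in checking $n\in\bigo(|I|)$ via Equation~(\ref{eq-inst}) is a harmless elaboration of the same argument.
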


Since there are much fewer buffer conveyors than pallets in practice,  this 
solution is better than $\bigo(n^2 \cdot k^m)$.

\subsection{Approximation}

Since the directed pathwidth of a digraph $G=(V,A)$ 
can be approximated by a factor of $\bigo(\log^{1.5}|V|)$ by \cite{KKK15}, the
{\sc FIFO Stack-Up} problem can be approximated using the sequence graph $G_Q$ 
by Theorem \ref{th-pw-gbefore}.

\begin{theorem}
There is an approximation algorithm that solves the optimization version of the 
{\sc FIFO Stack-Up} problem up to a factor of $\bigo(\log^{1.5} m)$.
\end{theorem}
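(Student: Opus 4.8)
The plan is to reduce the optimization version of the {\sc FIFO Stack-Up} problem to approximating directed pathwidth on the sequence graph, and then to invoke the approximation algorithm of \cite{KKK15} as a black box. First I would construct the sequence graph $G_Q=(V,A)$ for the given list $Q$ using algorithm {\sc Create Sequence Graph} from Figure \ref{fig:algorithm5}, which runs in time $\bigo(n+k\cdot m^2)$. By definition $V=\PLS(Q)$, so $|V|=m$, the number of pallets. This identification is what will make the eventual factor depend on $m$ rather than on $n$.

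Next I would apply the result of \cite{KKK15}, which approximates $\dpw(G)$ within a factor of $\bigo(\log^{1.5}|V|)$, to the graph $G_Q$. Writing $\alpha=\bigo(\log^{1.5} m)$ for this factor and letting $w^\ast=\dpw(G_Q)$, the algorithm returns a directed path-decomposition of $G_Q$ of width $w'$ with $w^\ast \le w' \le \alpha\cdot w^\ast$. By Theorem \ref{th-pw-gbefore}, $Q$ can be processed with at most $p$ stack-up places if and only if $\dpw(G_Q)\le p-1$; hence the optimum number of stack-up places equals $p^\ast = w^\ast+1$, and the returned decomposition of width $w'$ corresponds to a processing using $w'+1$ stack-up places.

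It remains to confirm that the additive shift by one does not spoil the multiplicative guarantee. Since $\alpha\ge 1$, I would estimate
\begin{equation*}
  w'+1 \;\le\; \alpha\cdot w^\ast + 1 \;\le\; \alpha\cdot w^\ast + \alpha \;=\; \alpha\,(w^\ast+1) \;=\; \alpha\cdot p^\ast,
\end{equation*}
so the computed value stays within the factor $\alpha=\bigo(\log^{1.5} m)$ of the optimum $p^\ast$. To make the algorithm constructive rather than merely value-approximating, I would then use the forward direction of Theorem \ref{th-pw-gbefore} to translate the returned path-decomposition into an actual processing of $Q$ realizing $w'+1$ stack-up places.

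The main obstacle I anticipate is not the reduction itself, which is short, but the careful handling of the two interfaces: that $|V|=m$ so the approximation factor is genuinely $\bigo(\log^{1.5} m)$, and that the $+1$ offset between $\dpw(G_Q)$ and the number of stack-up places is absorbed into the factor exactly as displayed above (this uses $\alpha\ge 1$ crucially, which holds for any nontrivial approximation ratio). Everything else is an immediate consequence of Theorem \ref{th-pw-gbefore} together with the cited directed-pathwidth approximation.
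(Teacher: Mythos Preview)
Your proposal is correct and follows exactly the approach the paper takes: build the sequence graph $G_Q$, invoke the $\bigo(\log^{1.5}|V|)$ directed-pathwidth approximation of \cite{KKK15}, and transfer the guarantee via Theorem~\ref{th-pw-gbefore}. You are in fact more careful than the paper, which does not spell out the $+1$ offset argument or the constructive translation back to a processing.
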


\section{Experimental Results} \label{sec-exp}

Next we want to evaluate an implementation of algorithm 
{\sc Find Optimal Pallet Solution} given in Figure \ref{fig:algorithm4} 
and our two linear programming approaches given in  (\ref{c1})-(\ref{c3}),(\ref{xxz1})-(\ref{xxz7}) 
and (\ref{f-path})-(\ref{c2-path}),(\ref{x1})-(\ref{xx2}) using GLPK  and CPLEX.

\subsection{Creating Instances}
Since there are no benchmark data sets for the {\sc FIFO Stack-Up} problem 
we generated randomly instances by an algorithm, which allows to
give the following parameters.

\begin{itemize}
\item $p_{\max}$ an upper bound on the number of stack-up places needed to
process the generated sequences 
\item $k$ number of sequences
\item $m$ number of pallets
\item $r_{\min}$ and $r_{\max}$ the minimum and maximum number of
bins per pallet
\item $d$ maximum number of sequences on which the bins of each pallet can
be distributed
\end{itemize}

The idea is to compute a bin solution $B=(b_1,\ldots,b_n)$ with respect to
the given parameters and to distribute the bins to the $k$ sequences such that
the relative order will be preserved, i.e. $b_i$ will be placed left of $b_j$
in some sequence, if $i<j$ in $B$. 
The algorithm is shown in Figure \ref{seqgen-al}.

\begin{figure}[ht]
\hrule
{\strut\footnotesize \bf Algorithm {\sc Random Instances}} 
\hrule
\begin{tabbing}
xxx\=xxx\=xxx\=xxx \= xxx \= xxx \= xxx \= xxx\= xxx \= xxx \= xxx \= xxx\= xxx\=\kill
\#open $:=0$; open $:=\emptyset$; \>\>\> \>\> \>\> \>\> \>\>$\blacktriangleright$ Description of Functions and Variables:\\
$avg:=(r_{\min}+r_{\max})/2$ \>\>\> \>\> \>\> \>\> \>\> $\blacktriangleright$ random$(\ell,u)$: int // choose some value in $[\ell,u]$ \\
for all $i:=1$ to $m$ step $2$ do \>\> \>\> \>\> \>\> \>\>\>  $\blacktriangleright$ ~~~~~~~~~ at random \\
\> $r:=$ random$(0,r_{\max}-r_{\min})$ \>\>\> \>\> \>\> \>\> \> $\blacktriangleright$ $no[1\ldots m]$: int // number bins for each pallet \\
\> no[$i]:=avg + r$ \>\> \>\> \>\> \>\> \>\>  $\blacktriangleright$ $n$: int // total number of bins, $n=\sum_{i=1}^{m}no[i]$\\
\> no[$i+1]:=avg - r$\>\>\> \>\> \>\> \>\> \> $\blacktriangleright$ $seq[1\ldots m][1\ldots d]$: int // for each individual   \\
for all $i\in[m]$ and $j\in[d]$ do \>\>\> \>\> \>\> \>\> \> \> $\blacktriangleright$ ~~~~~~~~~ pallet there are up to $d$ sequences, on  \\
\> seq$[i][j]:=$ random$(1,k)$  \>\>\> \>\> \>\> \>\> \>  $\blacktriangleright$ ~~~~~~~~~ which to  distribute the bins\\
for all $i\in[k]$ do $q_i:=(~)$   \\
unproc $:= [m]$; $i:=0$;  \>\>\> \>\> \>\> \>\> \>\>     \\
while $i<n$ do  \>\>\> \>\> \>\> \>\> \>\>   \\
\> $i=i+1$ \\
\> if \#$open$$=p$ \\
\> \>  $plt$ $:=$ choose some pallet of open at random\\
\> else \\
\> \>  $plt$ $:=$ choose some pallet of open $\cup$ unproc at random\\
\> if $plt$ $\in$ unproc \>\>\> \>\> \>\> \>\>\>$\vartriangleright$ first bin of chosen pallet\\
\>\>  \#$open:=$ \#$open$+1; open $:=$ open $\cup$ $\{plt\}$; unproc := unproc $-$ $\{plt\}$ \\
\>$r:=$ random$(1,d)$\\
\>$s:=$ seq$[plt][r]$    \\
\> append bin $b_i$ to sequence $q_s$ \\
\> no[$plt]:=$ no[$plt$]-1\\
\> if no[$plt]=0$ \>\>\>\> \>\>\>\>\>\>$\vartriangleright$ last bin of pallet $plt$\\
\>\>  \#$open:=$ \#$open$-1; open $:=$ open - $\{plt\}$;  
\end{tabbing}
\hrule
\caption{Construction of random instances for the {\sc FIFO Stack-Up} problem.}
\label{seqgen-al}
\end{figure}

\subsection{Implementations}
We consider the breadth first search based solution
{\sc Find Optimal Pallet Solution} for the {\sc FIFO Stack-Up} problem 
given in Figure \ref{fig:algorithm4}. 
For several practical instance sizes the running time of this solution is 
too huge for computations, e.g. $m=254$ and $k=10$ lead a time of
$\bigo(n^2 \cdot (m+2)^k)=\bigo(n^2 \cdot 2^{80})$.
Therefore we used a cutting technique on the decision graph $G=(V,A)$ 
by restricting to vertices $v\in V$ representing configurations $C_Q(v)$ 
such that  $\#\OPEN(C_Q(v))\leq p_{\max}$ and increasing the value of
$p_{\max}$ by 5 until a solution is found, see Figure \ref{seqgen-al-xx}.
We have implemented algorithm {\sc Find Optimal Pallet Solution} as a single-threaded program 
in C++ on a standard Linux PC 
with 3.30 GHz CPU and 7.7 GiB RAM.

\begin{figure}[ht]
\hrule
{\strut\footnotesize \bf Algorithm {\sc Iterative process}} 
\hrule
\begin{tabbing}
xxx \= xxx \= xxx \= xxx \= xxx \= xxx \= xxx\= xxx \= xxx \= xxx\= xxx\= xxx \= xxx \= xxx   \kill
$p_{\max} := 5$;\\
level $:=0$; \> \>\>\>\>\> \>\>  $\vartriangleright$ last level without cuts\\
$L[0] := (~(0,\ldots,0)~)$ \> \>\>\>\>\> \>\> $\vartriangleright$ $L[0]$: level 0 only contains the initial configuration\\
$pred[~(0,\ldots,0)~] := \emptyset$  \\
$val[~(0,\ldots,0)~] := 0$ \\
$open[~(0,\ldots,0)~]:=\emptyset$ \\
while BFS($p_{\max}$) yields no solution  do\\
\> $p_{\max} := p_{\max} + 5$
\end{tabbing}
\hrule
\caption{Iterative process of our cutting technique.}
\label{seqgen-al-xx}
\end{figure}

\begin{figure}[ht]
\hrule
{\strut\footnotesize \bf Algorithm {\sc BFS}($p$)} 
\hrule
\begin{tabbing}
xxx \= xxx \= xxx \= xxx \= xxx \= xxx \= xxx\= xxx \= xxx \= xxx\= xxx\= xxx \= xxx \= xxx   \kill
for $\ell :=$ level $+1$ to $m$ do \> \>\>\>\>\> \>\>\>\>\>\>\>  $\vartriangleright$  $\ell$: level in the decision graph\\
\>for each configuration $C \in L[\ell-1]$  do   \>\>\>\>\> \>\>\>\>\>\>\>  $\vartriangleright$ $L[\ell]$: list of nodes in level $\ell$ \\
\>\> if ($pred[C]\neq \emptyset$) \\
\>\>\> {\sc Extend Path}($C$) \\
\>\>  for $j := 1$ to $k$ do \\
\>\>\>    $C_s := (i_1, \ldots, i_j+1, \ldots, i_k)$ \\
\>\>\>   $O:= \OPEN[C]\cup \PL(b_{i_j+1})$ \\
\>\>\>  for $i := 1$ to $k$ do \>\>\>\>\>\>\>\>\>\> $\vartriangleright$ perform automatic transformation steps\\
\>\>\>\>  while the first bin in $q_i$ is destinated for an open pallet in $O$ \\
\>\>\>\>\>  let $C_s$ be the configuration obtained by removing the first bin of $q_i$ \\
\>\>\>  if ($C_s$ is not in $L[\ell]$ and $\# \OPEN(C_s) \leq p$) \>\> \> \>\>\>\>\>\>\> $\vartriangleright$ Cut\\ 
\>\>\>\>  $append(C_s,L[\ell])$ \\
\>\>\> $append(C,pred[C_s])$ \\
\>\>\>if ($C_s$ is final) return $C_s$ \\
\>\>  if $\ell-1 \neq p$  then $erase(C, L[\ell-1])$  \>\> \> \>\>\>\>\>\>\> \> $\vartriangleright$ remove $C$ from list $L[\ell-1]$\\
level$:=$ $p$; 
\end{tabbing}
\hrule
\caption{BFS.}
\label{seqgen-al-x}
\end{figure}

Our two linear programming approaches given in  (\ref{c1})-(\ref{c3}),(\ref{xxz1})-(\ref{xxz7}) 
and (\ref{f-path})-(\ref{c2-path}),(\ref{x1})-(\ref{xx2}) 
have been realized in GLPK v4.43 and CPLEX 12.6.0.0 and have been run
on the same machine. GLPK is single-threaded, while CPLEX uses all 4 cores
of the CPU.

\subsection{Evaluation}
First we consider our implementation of algorithm {\sc Find Optimal Pallet Solution}.
In Table \ref{table1} we list  our chosen parameters. 
For each assignment we randomly generated and solved 10 instances to compute  
the average time for solving an instance with the given parameters.
Our results show that we can solve practical instances on several thousand bins 
in a few minutes.

\begin{table}[ht]
$$
\begin{tabular}{|c|rrrrrrr|r|}
\hline
&\multicolumn{7}{|c|}{Instance} & BFS and cutting\\
   &  &&& &&&&  Pallet \\
 & $n$  & $p_{\max}$ & ~~~$m$ & ~~~$k$ &  ~$r_{\min}$ & ~$r_{\max}$ & ~~~$d$ &  Solution\\
\hline
1. & 1500 &  14 & 100 & 8 & 10 & 20 & 4 & 0.14 \\
2. & 1500 & 14 & 100 & 8 & 10 & 20 & 6 & 0.12 \\ 
3. & 1500 & 14 & 100 & 8 & 10 & 20 & 8 & 0.08 \\ 

4. & 2000 & 14 & 100 & 8 & 10 & 30 & 4 & 0.17 \\ 
5. & 2000 & 14 & 100 & 8 & 10 & 30 & 6 & 0.15 \\ 
6. & 2000 & 14 & 100 & 8 & 10 & 30 & 8 & 0.13 \\ 

7. & 2500&14 & 100 & 8 & 10 & 40 & 4 & 0.22 \\ 
8. & 2500 &14 & 100 & 8 & 10 & 40 & 6 & 0.10\\ 
9. & 2500 &14 & 100 & 8 & 10 & 40 & 8 & 0.07\\ 

\hline
10.& 6000 &18 & 300 & 10 & 15 & 25 & 5 & 5.78  \\ 
11.& 6000 &18 & 300 & 10 & 15 & 25 & 7 & 3.72 \\ 
12.& 6000 &18 & 300 & 10 & 15 & 25 & 10 & 2.01 \\ 

13.& 7500 &18 & 300 & 10 & 15 & 35 & 5 &  7.78 \\ 
14.& 7500 &18 & 300 & 10 & 15 & 35 & 7 &  2.83 \\ 
15.& 7500 &18 & 300 & 10 & 15 & 35 & 10 & 2.07 \\ 

16. & 9000& 18 & 300 & 10 & 15 & 45 & 5 &  4.21 \\ 
17. & 9000& 18 & 300 & 10 & 15 & 45 & 7 &  2.23\\ 
18. & 9000& 18 & 300 & 10 & 15 & 45 & 10 & 1.45 \\ 

\hline
19. & 12500 &22 & 500 & 12 & 20 & 30 & 6 & 92.52 \\ 
20. & 12500 &22 & 500 & 12 & 20 & 30 & 9 & 52.74\\ 
21. & 12500 &22 & 500 & 12 & 20 & 30 & 12 & 42.81 \\

22. & 15000 &22 & 500 & 12 & 20 & 40 & 6 &  103.24\\ 
23. & 15000 &22 & 500 & 12 & 20 & 40 & 9 &  49.54\\
24. & 15000 &22 & 500 & 12 & 20 & 40 & 12 & 32.85 \\ 

25. &17500 &22 & 500 & 12 & 20 & 50 & 6 & 88.52  \\ 
26. &17500 &22 & 500 & 12 & 20 & 50 & 9 &  38.61\\ 
27. &17500 &22 & 500 & 12 & 20 & 50 & 12 & 41.70
 \\ 
\hline
\end{tabular}
$$
\caption{Running times in seconds for randomly generated
instances for finding optimal solutions of 
the {\sc FIFO Stack-Up} problem by algorithm {\sc Find Optimal Pallet Solution}.  \label{table1}}
\end{table}

Next we consider our two linear programming models realized in GLPK and CPLEX.
Since the size of the instances of Table \ref{table1} was so high that none of the ILP 
approaches was able to solve them, we generated much smaller parameters in the instances 
of Table \ref{table-ex}.
For each assignment we randomly generated and solved 10 instances 
in order to compute the average time for solving the same instances with 
the given parameters by our two linear programming models using GLPK and CPLEX.

Our results show that algorithm {\sc Find Optimal Pallet Solution} can be used to solve practical
instances on several thousand bins of the {\sc FIFO Stack-Up} problem. Our two linear programming
approaches can only be used to handle instances up to 100 bins and less than 10 pallets.
Since it is a commercial product of high licence cost CPLEX can solve the instances much faster 
than the open-scource solver GLPK and, 
as expected, the pallet solution approach is much better than the bin solution approach.

\begin{table}
\begin{center}
\begin{tabular}{|c|rrrrrrr|r|r|r|r|}
\hline
&\multicolumn{7}{|c|}{Instance} & \multicolumn{2}{c|}{GLPK} & \multicolumn{2}{c|}{CPLEX} \\
&  &  &    &&& & & Bin  &Pallet  & Bin  & Pallet    \\
&$n$ & $p_{\max}$ & $m$  & $k$  &$r_{\min}$ &$r_{\max}$& $d$ &  Solution  & Solution & Solution &  Solution   \\

\hline
1.&15  &  2 & 3  & 2 & 4 & 6 & 2 & 41.4 & 0.1   & 0.1 & 0.1  \\ 
2.&20  &  2 & 4  & 2 & 4 &6  & 2 &-     & 0.1   & 1.7 & 0.1                \\   
3. &30  &  4 & 5  & 4 & 4 &8  & 2 &-     & 0.7   & 66.8 &  0.2\\    
4.&48  &  4 & 6  & 4 & 6 &10 & 2 &-     & 2.5   & 932.6 & 1.2\\ 
5. & 64  &  4 & 8  & 5 & 6 &10 & 2 &-     & 684.6 & -  & 16.3                   \\  
6. &100 &  5 & 10 & 5 & 5 &15 & 2 &-     & -     & -  & 282.3          \\  
\hline
\end{tabular} 
\end{center} 
\caption{Running times in seconds for randomly generated
instances for finding optimal solutions of the FIFO stack-up problem. Running times of more than 1800 seconds = 30 minutes are 
indicated by a bar (-).\label{table-ex}}
\end{table}

\section{Conclusions and Outlook}

In this paper we consider three graph models for the {\sc FIFO Stack-Up} problem.
Based on these models we have shown a breadth first search solution 
and two linear programming solutions to solve the problem. 
Further we have given parameterized algorithms w.r.t.\ several
parameters which are summarized in Table \ref{table-fpt}.
We also could give a first approximation result for minimizing the number of 
stack-up places.

\begin{table}[h]
$$
\begin{array}{|l|ccccc|}
\hline
\text{parameter} & k & p & m &n & (k,m)  \\ 
\hline
\fpt   & ? & ? & + & + & + \\
\xp    & + & +  & +  & +  & + \\
\hline
\end{array}
$$
\caption{Parameterized complexity of the {\sc FIFO Stack-Up} problem}\label{table-fpt}
\end{table}

In our future work we want to determine 
the complexity of the  {\sc FIFO Stack-Up} problem
for $d_Q\in\{2,\ldots,5\}$.
Further we intend to find better
approximation algorithms, try to improve the running time of
the given parameterized algorithms, and
explore the existence of fpt-algorithms w.r.t.\ parameters $k$ and $p$. 
Due Tamaki (Section 6 in \cite{Tam11}) the existence of an
fpt-algorithm for the directed pathwidth problem w.r.t. the standard parameter is still open. By Theorem \ref{th-pw-gbefore} 
such an algorithm would imply an fpt-algorithm for the {\sc FIFO Stack-Up} problem w.r.t. $p$, and vice versa.

We are also interested in on-line algorithms for 
instances where we only know the first $c$ bins
of every sequence instead of the complete sequences \cite{Bor98,FW98}.
Especially, we are interested in the answer to the following
question: Is there a $d$-competitive on-line algorithm?
Such an algorithm must compute a processing
of some $Q$ with at most $p \cdot d$ stack-up places, if $Q$ can be processed
with at most $p$ stack-up places. First approaches for on-line algorithms
for controlling palletizers are presented in \cite{GRW16c}.

In real life the bins arrive at the stack-up system on the main conveyor
of a pick-to-belt orderpicking system. That means, the distribution of bins
to the sequences, for example by some pre-placed cyclic storage conveyor, 
has to be computed. Up to now we consider the distribution
as given. We intend to consider how to compute
an optimal distribution of the bins from the main conveyor onto the sequences
such that a minimum number of stack-up places is
necessary to stack-up all bins from the sequences.


\bibliographystyle{plain}
\bibliography{/home/gurski/bib.bib}

\end{document}